\documentclass[11pt,letterpaper]{article}

\usepackage[utf8]{inputenc}
\usepackage[T1]{fontenc}
\usepackage[english]{babel}

\usepackage{amsmath,amssymb,amsthm}
\usepackage{natbib}
\usepackage[margin=1in]{geometry}
\usepackage{setspace}
\usepackage{graphicx}
\usepackage{url}
\usepackage[hidelinks]{hyperref}
\usepackage{xspace}
\usepackage{indentfirst}
\usepackage{graphics}
\usepackage{tikz}
\usetikzlibrary{shapes.geometric, arrows, positioning, decorations.pathreplacing, shadows}
\usepackage{caption}
\usepackage{algorithm2e}
\usepackage{booktabs}
\usepackage{multirow}
\usepackage{arydshln}
\usepackage{bbm}
\usepackage{array}
\usepackage{mdframed}
\usepackage{xcolor}

\newcolumntype{L}[1]{>{\raggedright\let\newline\\\arraybackslash\hspace{0pt}}m{#1}}
\newcolumntype{C}[1]{>{\centering\let\newline\\\arraybackslash\hspace{0pt}}m{#1}}
\newcolumntype{R}[1]{>{\raggedleft\let\newline\\\arraybackslash\hspace{0pt}}m{#1}}

\SetAlFnt{\small}
\SetAlCapFnt{\normalsize}
\SetAlCapNameFnt{\normalsize}
\SetNlSty{textbf}{}{}

\newcommand{\br}[1]{\mathopen{}\left( #1 \right)}
\newcommand{\brc}[1]{\mathopen{}\left\{ #1 \right\}}
\newcommand{\spr}[1]{\mathopen{}\left| #1 \right|}

\newcommand{\cl}[1]{\mathopen{}\left\lceil #1 \right\rceil}
\newcommand{\angl}[1]{\mathopen{}\langle #1 \rangle}

\newcommand{\cov}{\operatorname{cov}}
\newcommand{\prefix}[2]{\operatorname{prefix}\br{#1,#2}}
\newcommand{\pref}[2]{\prefix{#1}{#2}}

\newcommand{\NPhard}{\textnormal{NP-Hard}\xspace}

\newcommand{\OPT}{\text{OPT}}
\newcommand{\COST}{\text{COST}}
\newcommand{\COSTW}{\text{COST}_W}
\newcommand{\COSTA}{\text{COST}_A}

\newcommand{\argmin}{\mathopen{}\operatorname*{arg\,min}}
\newcommand{\argmax}{\mathopen{}\operatorname*{arg\,max}}

\newcommand{\cH}{\mathcal{H}}

\newcommand{\cS}{\mathcal{S}}
\newcommand{\cT}{\mathcal{T}}
\newcommand{\cC}{\mathcal{C}}
\newcommand{\cF}{\mathcal{F}}
\newcommand{\cU}{\mathcal{U}}
\newcommand{\cI}{\mathcal{I}}

\newcommand{\cQ}{\mathcal{Q}}
\newcommand{\cA}{\mathcal{A}}

\newcommand{\cO}{\mathcal{O}}
\newcommand{\cX}{\mathcal{X}}
\newcommand{\questioner}{questioner }
\newcommand{\target}{h^*}
\newcommand{\bigo}{\mathcal{O}}

\newcommand{\ProblemDT}{\textsc{DT}}
\newcommand{\ProblemPCWCDT}{\textsc{PCWCDT}}
\newcommand{\ProblemPCACDT}{\textsc{PCACDT}}
\newcommand{\ProblemPCSC}{\textsc{f-PCSC}}
\newcommand{\ProblemBPCSC}{\textsc{b-PCSC}}
\newcommand{\ProblemPCMSSC}{\textsc{f-PCMSSC}}
\newcommand{\ProblemMDPCS}{\textsc{MDPCS}}
\newcommand{\ProblemBMDPCS}{\textsc{b-MDPCS}}
\newcommand{\ProblemGST}{\textsc{GST}}
\newcommand{\ProblemPDS}{\textsc{PDS}}
\newcommand{\ProblemGSO}{\textsc{GSO}}
\newcommand{\ProblemLPGST}{\textsc{LPGST}}

\newcommand{\ProcDecisionTree}{\textsc{DecisionTree}{} }
\newcommand{\blackBox}{\textsc{FindCover}{} }
\newcommand{\ProcedurePCSC}{\textsc{Greedy-PCSC}}
\newcommand{\ProcedureBPCSC}{\textsc{Greedy-b-PCSC}}
\newcommand{\ProcedureFPCMSSC}{\textsc{Procedure-f-PCMSSC}}

\newcommand{\optPCSC}[1]{\OPT\br{#1}}

\newcommand{\cCopt}{\cC^{*}} %optimal solution to PCSC

\newcommand{\tests}[2]{\operatorname{tests}(#1,#2)} %tests done in dec. tree #1 to arrive at node #2
\newcommand{\paraTitle}[1]{\noindent\textbf{#1}}   %paragraph titles (to eliminate more sub-sub-subsections
\newcommand{\sepcover}[1]{\cS^*\br{#1}}
\newcommand{\sepcoverD}[1]{\cS\br{#1}}
\newcommand{\coverageTime}{\operatorname{ct}}  %total coverage time of a sequence for PCMSSC
\newcommand{\closure}[1]{\operatorname{closure}[#1]}
\newcommand{\budget}{b}

\newtheorem{theorem}{Theorem}[section]
\newtheorem{lemma}[theorem]{Lemma}

\newtheorem{corollary}[theorem]{Corollary}
\newtheorem{conjecture}[theorem]{Conjecture}
\theoremstyle{definition}
\newtheorem{definition}[theorem]{Definition}
\newtheorem{problem}[theorem]{Problem}
\newtheorem{claim}[theorem]{Claim}
\theoremstyle{plain}
\newtheorem{observation}[theorem]{Observation}
\theoremstyle{remark}

\title{Precedence-Constrained Decision Trees and Coverings}
\author{%
Michał Szyfelbein\\
Gdańsk University of Technology, 80-233 Gdańsk, Poland\\
\texttt{michal.szyfelbein@pg.edu.pl}
\and
Dariusz Dereniowski\\
Gdańsk University of Technology, 80-233 Gdańsk, Poland\\
\texttt{deren@eti.pg.edu.pl}
}
\date{}

\begin{document}

\maketitle

\begin{abstract}
This work considers a number of optimization problems and reductive relations between them.
The two main problems we are interested in are the \textsc{Optimal Decision Tree} and \textsc{Set Cover}.
We study these two fundamental tasks under precedence constraints, that is, if a test (or set) $X$ is a predecessor of $Y$, then in any feasible decision tree $X$ needs to be an ancestor of $Y$  (or respectively, if $Y$ is added to set cover, then so must be $X$).
For the \textsc{Optimal Decision Tree} we consider two optimization criteria: worst case identification time (height of the tree) or the average identification time.
Similarly, for the \textsc{Set Cover} we study two cost measures: the size of the cover or the average cover time.

Our approach is to develop a number of algorithmic reductions, where an approximation algorithm for one problem provides an approximation for another via a black-box usage of a procedure for the former.
En route we introduce other optimization problems either to complete the `reduction landscape' or because they hold the essence of combinatorial structure of our problems.
The latter is brought by a problem of finding a \textsc{Maximum Density Precedence-Closed Subfamily}, where the density is defined as the ratio of the number of items the family covers to its size. We provide $\mathcal{O}^*(\sqrt{m})$-approximation polynomial-time algorithms for all aforementioned problems.
The picture is complemented by a number of hardness reductions that provide $\mathcal{O}(m^{1/12-\epsilon})$-inapproximability results for the decision tree and covering problems.
Besides giving a complete set of results for general precedence constraints, we also provide polylogarithmic approximation guarantees for two most typically studied and applicable graph types, outforests and inforests. By providing corresponding hardness results, we show most of these results to be tight.
\end{abstract}

\noindent\textbf{Keywords:} Optimal decision trees, Set Cover, Precedence Constraints, Approximation Algorithms

\section{Introduction}

Consider a set $\cH$ of $n$ \emph{hypotheses}, a set $\cT$ of $m$ \emph{tests} and  an unknown \emph{target hypothesis} $\target\in\cH$ that needs to be discovered through testing.
Each test $t\in\cT$ is a partition of $\cH$, that is, $t$ consists of subsets of $\cH$ such that for any $X,Y\in t$,  $X\cap Y=\emptyset$ and $\bigcup_{X\in t} X=\cH$.
As a result of executing a test $t\in\cT$, the \questioner receives a \emph{reply} that reveals $H\in t$ such that $\target\in H$. Upon receiving this response, the \questioner adaptively selects the next test from $\cT$ to perform, until the target hypothesis is identified. The goal is to conceive a strategy of testing, often modeled as a decision tree, which either minimizes the worst-case or the average-case number of tests performed until $\target$ is uncovered. We refer to the former problem as the \textsc{Worst Case Decision Tree} and to the latter as the \textsc{Average Case Decision Tree}. If the cost criterion is not mentioned explicitly, we refer to the problem as the \textsc{Optimal Decision Tree} ($\ProblemDT$).

In this work we generalize the above setup by considering instances which are subject to arbitrary precedence constraints between tests given as a partial order $(\cT,\preceq)$ on the test set $\cT$.
It is required that in any valid strategy of testing, a given test $t\in \cT$ can be performed only after all its predecessors have been performed previously (to which we refer as \emph{precedence-closed} solution). This setup is a natural generalization of the standard \textsc{Optimal Decision Tree}, and is partially motivated by the fact that the decision tree can be interpreted as a schedule of tests. Since task scheduling often involves precedence constraints, it is natural to ask how such constraints affect the complexity of the \textsc{Optimal Decision Tree}.
On the applicability side, \textsc{Optimal Decision Tree} includes numerous use cases, some of them in the area of medicine and biology \cite{BayesianLearnerOptimalNoisyBinarySearch}.
For example hypotheses may represent possible conditions or diseases and tests may represent medical procedures. Such a setup suggest inclusion of precedence constraints in various forms. For instance, some tests may be more invasive or expensive than others, and should be performed only after cheaper or less invasive tests have been performed. Additionally, it can also happen that a given test is a prerequisite for another test, e.g., a blood test may be required before performing a biopsy. At last it might also be the case that a large test consists of multiple stages, each with its own outcomes, and the later stages can be performed only after the earlier ones.

% \begin{figure}[t!]
% \begin{minipage}[t]{0.47\textwidth}
% \input{figures/pcal_example}
% \caption{Decision tree with precedence.}\label{fig:pcal_example}
% \end{minipage}
% \hfill
% \begin{minipage}[t]{0.47\textwidth}
% \input{figures/pccp_example}
% \caption{Set cover with precedence.}\label{fig:pccp_example}
% \end{minipage}
% \end{figure}

Algorithmically, the aforementioned precedence constraints make the problem significantly more challenging. In a classical setup (without such constraints) the usual way of proceeding is to select at each step a test that maximizes some measure of information gain (e.g., reduction in entropy or number of hypotheses pairs). However, when precedence constraints are present, it may happen that the most informative test cannot be chosen at the current step, as some of its predecessors have not been performed yet.
To tackle this inconvenience we use a series of `reductions' between optimization problems, i.~e., using an algorithm for one problem as a subroutine for the other.
Of particular interest to us becomes set covering with precedence constraints:
We are given a set $\cU$ of $n$ items, a collection $\cS$ of $m$ subsets of $\cU$, such that $\bigcup_{S\in\cS}S=\cU$, a partial order $(\cS,\preceq)$ on these subsets and a parameter $0<f\leq 1$.
We say that a subfamily $\cC\subseteq\cS$ \emph{covers} at least $f$-fraction of items from $\cU$ if $\spr{\bigcup_{C\in\cC}C}\geq f\cdot n$.
We ask for a $\cC\subseteq\cS$ that covers at least $f$-fraction of the items from $\cU$ and is \emph{precedence-closed}, that is, for each $x\in\cC$ and each $y\in\cS$ such that $y\preceq x$ it holds $y\in\cC$. We measure the quality of such $\cC$ by either its size $\spr{\cC}$ or the min-sum time it takes to cover an item from $\cU$ (assuming some order of sets in $\cC$ closed under the precedence constraint). Depending on the cost objective we refer to the problem as the \textsc{Precedence Constrained Set Cover} ($\ProblemPCSC$) or the \textsc{Precedence Constrained Min-Sum Set Cover} ($\ProblemPCMSSC$). 
% For a visual example of such a scenario consider Figure~\ref{fig:pccp_example}.
Our main approach is to show how to reduce the \textsc{Optimal Decision Tree} to set covering. Then we develop approximation algorithms for the set covering problems themselves. The approximation algorithms that we propose en route, e.g. for precedence constrained set covering, are of independent interest.

\subsection{Related Work}

\paragraph{Optimal Decision Tree}
\textsc{Optimal Decision Tree}, also referred to as \textsc{Binary Search Trees} in \cite{BoseCIKL20,DemaineHIP07}, is an extensively studied problem in computer science, starting with \cite{GareyPerfBoundsOnSplittingAlgForBinTesting} in the 1970s.
Since then it has gathered a lot of attention due to its numerous applications including medical diagnosis, troubleshooting, active learning, and information retrieval. 
Usually two optimization criteria are considered: the worst-case and the average-case cost.
Both versions of the problem are \NPhard and cannot be approximated within an $o\br{\log n}$ factor \cite{ApproximatingDecisionTreesMultiwayBranches,DiagnosisDetermination,ConstructOptimalBinaryDecisionTreesIsNPComplete,HardnessOfMinHeightDTP}.
Moreover, this bound is tight and several $\cO(\log n)$-approximation algorithms are known for the average-case both with non-uniform probabilities and test costs \cite{ApproximatingOptimalBinaryDecisionTrees,DTsforEntIdent,ApproximatingDecisionTreesMultiwayBranches,DiagnosisDetermination,MinimumCostAdaptiveSubmodularCover,AnalysisGreedyActiveLearning,AdaptivityInAdaptiveSubmodularity,AverageCaseActiveLearningWithCosts,ApproxAlgsForOptDTsAndAdapTSPProblems,OptimalSplitTreeProblem,AdaptiveSubmodularRankingAndRouting} as well as the worst case \cite{DecisionTreesForGeometricModels,TheCostComplexityOfInteractiveLearning,DiagnosisDetermination}. Few special cases are known to admit $o\br{\log n}$-approximation. For the average case, this includes an $\cO\br{\log n/\log \log n}$-approximation when tests have a constant number of possible outcomes and all probabilities and costs are uniform \cite{TightAnalysisGreedyUniformDecisionTree}.
% This variant cannot be approximated within $\br{4-\epsilon}$ factor for any $\epsilon > 0$ unless P = NP \cite{DTsforEntIdent}. 
Moreover, obtaining approximation ratio of $\br{9+\epsilon}$ can be done in subexponential time, and thus is not \NPhard assuming ETH \cite{TightAnalysisGreedyUniformDecisionTree}.
For the worst case, when the underlying search space is a partially ordered set with one maximum element and costs are uniform an $\cO\br{\log n/\log \log n}$-approximation is known \cite{EdgeRankingSearchingPartialOrders}.
To the best of our knowledge, the \textsc{Optimal Decision Tree} problem has not been studied in the presence of precedence constraints.

\paragraph{Searching in Trees}
The special case when input instance represents binary searching in a tree has also been extensively studied.
% In this problem hypothesis represent vertices and tests represent queries about direction towards the target. 
For uniform costs and the worst-case criterion, a linear time algorithms are known for both edge and vertex query variants \cite{Mozes_Onak2008FindOptTSStartInLinTime,OnakParys2006GenOfBSSInTsAndFLikePosets,Schaffer1989OptNodeRankOfTsInLinTime}. For the average case, uniform costs and vertex queries an FPTAS exists \cite{SearchTreesOnGraphs}. For edge queries the problem is known to be \NPhard \cite{OnTheComplexityOfSearchingInTreesAverageCaseMinimization} and a greedy strategy achieves $3/2$-approximation \cite{ImprovedApproximationAlgorithmsForTheAverageCaseTreeSearchingProblem,TightApproximationBoundsOnASimpleAlgorithmForMinimumAverageSearchTimeInTrees,OnTheComplexityOfSearchingInTreesAverageCaseMinimization}. For non-uniform costs, the vertex query model generalizes edge query model which is known to be \NPhard \cite{EdgeRankingOfWeightedTrees,TheBinaryIdentificationProblemForWeightedTrees,OnTheTreeSearchProblemWithNonUniformCosts}. The best known approximation ratio is $\bigo\br{\sqrt{\log n}}$ \cite{ApproximationStrategiesforGeneralizedBinarySearchinWeightedTrees}.
The average case is also \NPhard and an $\br{4+\epsilon}$-approximation FPTAS is known \cite{szyfelbein2025approximatingaveragecasegraphsearch}.
Searching in trees is a generalization of the classical binary search intensively studied in various models \cite{BayesianLearnerOptimalNoisyBinarySearch,BorgstromK93,DereniowskiLU25,LaberMP02}.
A generalization of tree search to general graphs has also been recently studied, mostly in presence of noise, where some replies can be erroneous \cite{DereniowskiLU25,Emamjomeh-Zadeh16}.
For some applications of these problems see e.g. \cite{Emamjomeh-Zadeh17,KarpK07}.

\paragraph{Set Cover with precedence constraints}

\textsc{Set Cover} is among the most important problems in combinatorial approximation algorithms.
It is well known that the greedy algorithm achieves an $H_n$-approximation for \textsc{Set Cover} \cite{GreedyHeuristicSetCoverProblem}, where $H_n=\cO\br{\log n}$ is the $n$-th harmonic number. This is tight since it cannot be approximated within a $(1-\epsilon)\ln n$ factor for any $\epsilon > 0$ unless P=NP \cite{AnalyticalApproachToParallelRepetition}.
The \textsc{Min-Sum Set Cover} is a variant of the problem in which the goal is to minimize the average cover time of elements. For this version, the greedy algorithm is $4$-approximate \cite{ApproximatingMinSumSetCover}, which is tight.
When allowing arbitrary precedence constraints the problem admits an $\cO\br{\sqrt{m}}$-approximation and cannot be approximated within an $\cO\br{m^{1/12-\epsilon}}$ nor $\cO\br{n^{1/6-\epsilon}}$ factor \cite{PCMSSC} subject to the $\hyperref[problem:PDS]{\ProblemPDS}$ \cite{OnApproxTargetSetSelection}.

\subsection{Our results and techniques} \label{sec:our-results}

% Our main contribution consists of approximation algorithms and hardness results for decision tree and set covering problems with precedence constraints, studied under different structural restrictions on the precedence relation.

\paragraph{Approximation Results.} The key insight underlying our approach is a systematic hierarchy of reductions: we reduce decision tree problems to set covering problems with precedence constraints, which in turn are solved via algorithms for finding dense precedence-closed subfamilies. This hierarchy of reductions, illustrated by Figure~\ref{fig:reductions}, allows us to iteratively simplify the problem until we reach some core task that can be approximated effectively.
All problems are formally stated in Section~\ref{sec:preliminaries} below.
Our approximation results, including best known results from the literature, are summarized in Table~\ref{tab:results}.

\begin{figure}[ht!]
\centering
\begin{tikzpicture}[
    node distance=0.7cm and 2cm,
    problem/.style={rectangle, draw, very thick, minimum width=1.8cm, minimum height=0.6cm, align=center, font=\scriptsize, fill=white, drop shadow, rounded corners=2pt},
    arrow/.style={->, very thick},
    edgelabel/.style={font=\tiny, fill=white, inner sep=1pt}
]

% Define positions based on the hand-drawn diagram
% Top level
\node[problem] (bmdpcs) {$\hyperref[problem:BMDPCS]{\ProblemBMDPCS}$};

% Second level - left and center
\node[problem, below left=0.7cm and 2.5cm of bmdpcs] (mdpcs) {$\hyperref[problem:MDPCS]{\ProblemMDPCS}$};
\node[problem, below=0.7cm of bmdpcs] (bpcsc) {$\hyperref[problem:BPCSC]{\ProblemBPCSC}$};

% Second level - right
\node[problem, above right=0cm and 1.5cm of bpcsc] (gso) {$\hyperref[problem:GSO]{\ProblemGSO}$};
\node[problem, below right=0.7cm and 1.5cm of gso] (lpgst) {$\hyperref[problem:LPGST]{\ProblemLPGST}$};

% Third level
\node[problem, below left=0.7cm and 0.8cm of bpcsc] (fpcsc) {$\hyperref[problem:PCSC]{\ProblemPCSC}$};
\node[problem, below right=0.7cm and 0.8cm of bpcsc] (fpcmssc) {$\hyperref[problem:PCMSSC]{\ProblemPCMSSC}$};

% Bottom level
\node[problem, below=0.7cm of fpcsc] (pcwcdt) {$\hyperref[problem:PCWCDT]{\ProblemPCWCDT}$};
\node[problem, below=0.7cm of fpcmssc] (pcacdt) {$\hyperref[problem:PCACDT]{\ProblemPCACDT}$};

% Arrows with theorem references
\draw[arrow] (bmdpcs) -- node[edgelabel, pos=0.5] {\hyperref[subsection:BPCSC]{Sec.~\ref*{subsection:BPCSC}}} (bpcsc);
\draw[arrow] (mdpcs) -- node[edgelabel, pos=0.5] {\hyperref[subsection:PCSC]{Sec.~\ref*{subsection:PCSC}}} (fpcsc);
\draw[arrow] (bpcsc) -- node[edgelabel, pos=0.5] {\hyperref[subsection:PCSC]{Sec.~\ref*{subsection:PCSC}}} (fpcsc);
\draw[arrow] (bpcsc) -- node[edgelabel, pos=0.5] {\hyperref[subsection:PCMSSC]{Sec.~\ref*{subsection:PCMSSC}}} (fpcmssc);
\draw[arrow] (gso) -- node[edgelabel, pos=0.5] {\hyperref[subsection:SpecialCases]{Sec.~\ref*{subsection:SpecialCases}}} (bpcsc);
\draw[arrow] (gso) -- node[edgelabel, pos=0.5] {\cite{ApproxAlgsForOptDTsAndAdapTSPProblems}} (lpgst);
\draw[arrow] (lpgst) -- node[edgelabel, pos=0.5] {\hyperref[subsection:SpecialCases]{Sec.~\ref*{subsection:SpecialCases}}} (fpcmssc);
\draw[arrow] (fpcsc) -- node[edgelabel, pos=0.5] {\hyperref[subsection:PCWCDT]{Sec.~\ref*{subsection:PCWCDT}}} (pcwcdt);
\draw[arrow] (fpcmssc) -- node[edgelabel, pos=0.5] {\hyperref[subsection:PCACDT]{Sec.~\ref*{subsection:PCACDT}}} (pcacdt);

\end{tikzpicture}
\caption{Relationships between covering and decision tree problems, $\Pi_1 \to \Pi_2$ denotes that an approximation algorithm for problem $\Pi_1$ implies an approximation algorithm for problem $\Pi_2$.}\label{fig:reductions}
\end{figure}
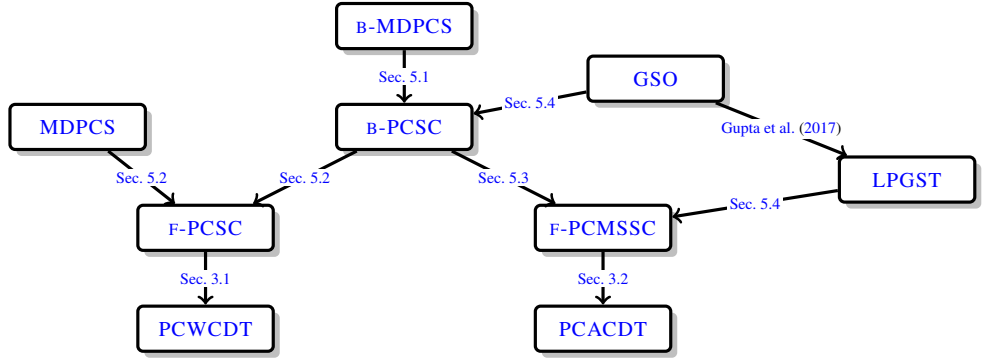

\begin{table}[ht!]
\caption{Approximation guarantees for covering and decision tree problems under different precedence constraints. ($H_n = \sum_{i=1}^{n} \frac{1}{i} = \Theta(\log n)$ is the $n$-th \emph{harmonic number}.).}
\label{tab:results}
\centering
{
\footnotesize
\begin{tabular}{p{0.077\linewidth}C{0.14\linewidth}C{0.14\linewidth}C{0.15\linewidth}C{0.13\linewidth}C{0.19\linewidth}}
\toprule
\textbf{prec.} & \textbf{\hyperref[problem:BPCSC]{$\ProblemBPCSC$}} & \textbf{\hyperref[problem:PCSC]{$\ProblemPCSC$}} & \textbf{\hyperref[problem:PCMSSC]{$\ProblemPCMSSC$}} & \textbf{\hyperref[problem:PCWCDT]{$\ProblemPCWCDT$}} & \textbf{\hyperref[problem:PCACDT]{$\ProblemPCACDT$}} \\
\midrule
\multirow{2}{\linewidth}{none} & $\br{1, \frac{e}{e-1}}$ & $\br{1, \frac{e}{e-1}}$  & $\br{\frac{4e}{e-1},1}$ & $\cO(\log n)$  & $\cO(\log n/\log\log n)$  \\
      &  \cite{TheBudgetedMaximumCoverageProblem} &  \cite{TheBudgetedMaximumCoverageProblem} & \cite{EfficientSequencesOfTrials} &  \cite{DiagnosisDetermination} & \cite{TightAnalysisGreedyUniformDecisionTree} \\

\addlinespace[2pt]
\cdashline{2-6}
\addlinespace[2pt]

inforest & $\br{1, \frac{e}{e-1}}$ Thm.~\ref{thm:BPCSC-inforest} & $\br{1, \frac{e}{e-1}}$ Thm.~\ref{thm:BPCSC-inforest} & $\br{3421.7, \frac{e}{e-1}}$ Cor.~\ref{cor:BPCSC-inforest-to-PCMSSC} & $\cO(\log n)$ Cor.~\ref{cor:PCWCDT-special-cases} & $\cO(\log n)$ \newline Cor.~\ref{cor:PCACDT-special-cases} \\

\addlinespace[2pt]
\cdashline{2-6}
\addlinespace[2pt]

outforest & $\br{\cO(\log n), 4}$ Thm.~\ref{thm:BPCSC-outforest} & $\br{\cO(\log n), 4}$ Thm.~\ref{thm:BPCSC-outforest} & $\br{\cO(\log n), 4}$ Thm.~\ref{thm:BPCSC-outforest} & $\cO(\log^2 n)$ Cor.~\ref{cor:PCWCDT-special-cases} & $\cO(\log^2n)$ \newline Cor.~\ref{cor:PCACDT-special-cases} \\

\addlinespace[2pt]
\cdashline{2-6}
\addlinespace[2pt]

general & $\br{\sqrt{m \cdot H_n}+1, 1}$ Thm.~\ref{thm:BPCSC} & $\br{\cO(\sqrt{m}/f), 2}$ Thm.~\ref{thm:MDPCStoPCSC}, $\br{\sqrt{m \cdot H_n}+1, 1}$ Cor.~\ref{cor:BPCSC-to-PCSC} & $(\sqrt{m \cdot H_n}+1, 1)$ Thm.~\ref{thm:BPCSC-to-PCMSSC} & $\cO(\sqrt{m}\cdot\log n)$ Cor.~\ref{cor:PCWCDT-special-cases} & $\cO(\sqrt{m}\cdot\log^{3/2}n)$ Cor.~\ref{cor:PCACDT-special-cases} \\
\bottomrule
\end{tabular}
}
\end{table}

For arbitrary precedence constraints, we obtain an $\cO(\sqrt{m}\cdot\log n)$-approximation algorithm for \textsc{Precedence Constrained Worst Case Decision Tree} ($\hyperref[problem:PCWCDT]{\ProblemPCWCDT}$) and $\cO(\sqrt{m}\cdot\log^{3/2}n)$-approximation for \textsc{Precedence Constrained Average Case Decision Tree} ($\hyperref[problem:PCACDT]{\ProblemPCACDT}$), where $m$ is the number of tests and $n$ is the number of hypotheses. These results are obtained through reductions to fractional set cover problems ($\hyperref[problem:PCSC]{\ProblemPCSC}$, $\hyperref[problem:PCMSSC]{\ProblemPCMSSC}$). The intuition behind the reduction is that we wish find a sequence of tests (respecting the precedence constraints) whose execution separates the hypotheses into responses of significantly smaller size, and then apply the same procedure recursively on the resulting subproblems. To find such a sequence we construct an appropriate precedence constrained set covering instance, where the sets correspond to tests and the elements correspond to hypotheses. If done correctly, it can be shown that the cost of an optimal solution to the set cover instance (for either of the two cost objectives) is a lower bound on the cost of the decision tree instance, and any approximation for the set cover instance can be converted into an approximation for the decision tree with an additional logarithmic factor loss.

The remaining task is to obtain quality approximations for the set covering problems with precedence constraints. We remark that there already exists an $\cO(\sqrt{m})$-approximation for the \textsc{Precedence-Constrained Min-Sum Set Cover}~\cite{PCMSSC}, but it works only for the case when $f=1$. As it turns out, the case $f\neq 1$ is significantly more challenging. Moreover, this generalization is required for our decision tree algorithms to work (namely we require $f=1/4$). To remedy this issue, we develop new $\bigo\br{\sqrt{m\cdot\log n}}$-approximation algorithms for both cost criteria. The key in our analysis is to analyze yet another variant of set covering, which we call \textsc{Budgeted Precedence Constrained Set Cover} (denoted as $\ProblemBPCSC)$. The difference is that now we wish to find a (precedence-closed) subset of sets of size at most $b$ that covers as many elements as possible. Again, it turns out that one can reduce both of the previously mentioned covering problems to the budgeted version. For the $\ProblemPCSC$, it suffices to guess the value of $b$ to be equal to the cost of the optimal solution, and then apply the approximation for $\ProblemBPCSC$. However, for the $\ProblemPCMSSC$, such idea does not work. Instead, we find a good covering sequence by iteratively allowing exponentially growing budgets and applying the approximation for $\ProblemBPCSC$ at each step. Intuitively, since we start from a very low budget, this allows the algorithm to choose which elements need to be covered quickly. Our approach is reminiscent of the ideas used for solving problems regarding \textsc{Minimum Latency TSP}~\cite{PathsTreesMinimumLatencyTours,KTravelingRepairmenProblem} and \textsc{Adaptive TSP}~\cite{ApproxAlgsForOptDTsAndAdapTSPProblems}. The overarching theme of this connection is that one can treat appending sets to the covering sequence as a kind of tour on the DAG induced by the precedence constraints, and the cost of covering an element as a latency of ''visiting'' that element. Note that however, this intuition is not exact and is never used directly neither in the procedure nor its analysis.

Next, to obtain an $\br{\bigo\br{\sqrt{m\cdot\log n}}, 1}$-bicriteria approximation for $\ProblemBPCSC$, (we may exceed to budget $\bigo\br{\sqrt{m\cdot\log n}}$ times) we employ an idea inspired by the previously mentioned algorithm of \cite{PCMSSC}, however our analysis is more intricate. Here, the algorithm follows a greedy scheme similar to the one used for the classical \textsc{Set Cover}, but instead of selecting a single set at each step, it picks a precedence-closed family of sets with a good  (number of covered elements per set). Note, that for our needs we additionally need to control the size of the selected family. We call this problem \textsc{Budgeted Max-Density Precedence-Closed Subfamily} ($\ProblemBMDPCS$). We show that an approximation for $\ProblemBMDPCS$ can be used to obtain an approximation for $\ProblemBPCSC$. To solve $\ProblemBMDPCS$ we then use yet another greedy scheme, by simply choosing the set with its closure (if smaller then $b$) which covers the most new elements per unit of size. This algorithm has an $b$-approximation which may be as large as $\Omega\br{n}$, however by a clever analysis can be used to obtain an $\br{\sqrt{m\cdot H_n}+1, 1}$-approximation for $\ProblemBPCSC$. Using similar ideas we also develop an algorithm for $\ProblemPCSC$, whose approximation ratio depends on the required fraction of covered elements $f$ and is $\cO\br{\sqrt{m}/f}$. Since in our applications $f$ is constant this gives a slightly better approximation ratio for $\ProblemPCWCDT$.

At last we also concern ourselves with some special cases of precedence constraints. For the inforests, where each test/set has at most one child, we show that $\ProblemBPCSC$ can be encoded as an instance of the well-known \textsc{Maximum Coverage}. This allows us to obtain an $\cO(\log n)$-approximation for $\ProblemPCWCDT$ and $\ProblemPCACDT$. For the case of outforests, for which every test/set has at most one parent, we show that $\ProblemBPCSC$ can be reduced the \textsc{Group Steiner Orienteering Problem} on trees, which yields an $\br{\cO\br{\log n}, 1}$-bicriteria approximation for $\ProblemBPCSC$ and as a consequence an $\cO\br{\log^2 n}$-approximation for $\ProblemPCWCDT$ and $\ProblemPCACDT$. For this subcase, the connection between our problems and graph exploration becomes apparent, since any sequence of sets becomes a tour of the precedence graph.

\paragraph{Hardness Results.} Our hardness results establish matching or nearly matching lower bounds for most variants through a hierarchy of reductions illustrated in Figure~\ref{fig:hardness_diagram}. We systematically reduce from known hard problems such as \textsc{Group Steiner Tree} and \textsc{Planted Dense Subgraph Conjecture} to precedence constrained covering problems and then to precedence constrained decision tree problems, showing that achieving better approximation ratios is computationally hard even for special cases. Our hardness results, including state-of-the-art, are summarized in Table~\ref{tab:hardness}.

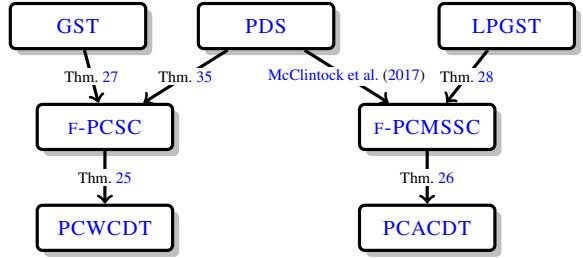
\begin{figure}[ht!]
    \centering
    \begin{tikzpicture}[
    node distance=0.7cm and 2cm,
    problem/.style={rectangle, draw, very thick, minimum width=1.8cm, minimum height=0.6cm, align=center, font=\scriptsize, fill=white, drop shadow, rounded corners=2pt},
    arrow/.style={->, very thick},
    edgelabel/.style={font=\tiny, fill=white, inner sep=1pt}
]

% Define positions based on the hand-drawn diagram
% Top level
\node[problem] (gst) at (-2, 0) {$\hyperref[problem:GST]{\ProblemGST}$};
\node[problem] (pds) at (0.5, 0) {$\hyperref[problem:PDS]{\ProblemPDS}$};
\node[problem] (lpgst) at (3.7, 0) {$\hyperref[problem:LPGST]{\ProblemLPGST}$};

% Middle level
\node[problem, below left=0.7cm and 0.3cm of pds] (fpcsc) {$\hyperref[problem:PCSC]{\ProblemPCSC}$};
\node[problem, below right=0.7cm and 0.3cm of pds] (fpcmssc) {$\hyperref[problem:PCMSSC]{\ProblemPCMSSC}$};

% Bottom level
\node[problem, below=0.7cm of fpcsc] (pcwcdt) {$\hyperref[problem:PCWCDT]{\ProblemPCWCDT}$};
\node[problem, below=0.7cm of fpcmssc] (pcacdt) {$\hyperref[problem:PCACDT]{\ProblemPCACDT}$};

% Arrows with theorem references
\draw[arrow] (gst) -- node[edgelabel, pos=0.5] {Thm.~\ref{thm:PCSC-outforest-hardness}} (fpcsc);
\draw[arrow] (lpgst) -- node[edgelabel, pos=0.5] {Thm.~\ref{thm:PCMSSC-outforest-hardness}} (fpcmssc);
\draw[arrow] (pds) -- node[edgelabel, pos=0.5] {Thm.~\ref{thm:PCSC-PDS-hardness}} (fpcsc);
\draw[arrow] (pds) -- node[edgelabel, pos=0.5] {\cite{PCMSSC}} (fpcmssc);
\draw[arrow] (fpcsc) -- node[edgelabel, pos=0.5] {Thm.~\ref{thm:PCSC-to-PCWCDT-hardness}} (pcwcdt);
\draw[arrow] (fpcmssc) -- node[edgelabel, pos=0.5] {Thm.~\ref{thm:PCMSSC-to-PCACDT-hardness}} (pcacdt);

\end{tikzpicture}
    \caption{Inapproximability relations between problems, $\Pi_1 \to \Pi_2$ denotes that an inapproximability result for problem $\Pi_1$ implies an inapproximability result for problem $\Pi_2$.}
    \label{fig:hardness_diagram}
\end{figure}

\begin{table}[ht!]
\caption{Inapproximability results. PDS $\equiv$ claims conditional on the Planted Dense Subgraph Conj.}
\label{tab:hardness}
\centering
{
\footnotesize
\begin{tabular}{p{0.08\linewidth}C{0.19\linewidth}C{0.19\linewidth}C{0.19\linewidth}C{0.19\linewidth}}
\toprule
\textbf{precedenses} & \textbf{\hyperref[problem:PCSC]{\ProblemPCSC}} & \textbf{\hyperref[problem:PCMSSC]{\ProblemPCMSSC}} & \textbf{\hyperref[problem:PCWCDT]{\ProblemPCWCDT}} & \textbf{\hyperref[problem:PCACDT]{\ProblemPCACDT}} \\
\midrule
\multirow{2}{\linewidth}{none} & $(1-o(1))\cdot H_n$ & $(4-\epsilon)$ & $o(\log n)$ & $\br{4-\epsilon}$ \\
& \cite{AnalyticalApproachToParallelRepetition} & \cite{ApproximatingMinSumSetCover} & \cite{HardnessOfMinHeightDTP} & \cite{DTsforEntIdent} \\

\addlinespace[2pt]
\cdashline{2-5}
\addlinespace[2pt]

\multirow{2}{\linewidth}{outforest} & $o(\log^{2} n)$ & $o(\log^{2} n)$ & $o(\log^{2} n)$ & $(4-\epsilon)$ \\
& Thm.~\ref{thm:PCSC-outforest-hardness} & Thm.~\ref{thm:PCMSSC-outforest-hardness} & Cor.~\ref{thm:PCWCDT-outforest-hardness} & \cite{DTsforEntIdent} \\

\addlinespace[2pt]
\cdashline{2-5}
\addlinespace[2pt]

\multirow{2}{\linewidth}{general} & $o(n^{1/6})$, $o(m^{1/12})$ & $o(n^{1/6})$, $o(m^{1/12})$ & $o(n^{1/6})$, $o(m^{1/12})$ & $o(n^{1/6})$, $o(m^{1/12})$ \\
& Thm.~\ref{thm:PCSC-PDS-hardness}, PDS & \cite{PCMSSC}, PDS & Cor.~\ref{thm:PCWCDT-PDS-hardness}, PDS & Cor.~\ref{thm:PCACDT-PDS-hardness}, PDS \\
\bottomrule
\end{tabular}
}
\end{table}

For arbitrary precedence constraints, we show that under the \textsc{Planted Dense Subgraph Conjecture} ($\ProblemPDS$), no algorithm can achieve $o(m^{1/12})$ nor $o(n^{1/6})$-approximation for $\ProblemPCSC$. We do this by adjusting the reduction from $\ProblemPDS$ to $\ProblemPCMSSC$ of \cite{PCMSSC}. We also show that $\ProblemPCWCDT$ and $\ProblemPCACDT$ generalize $\ProblemPCSC$ with $f=1$ and $\ProblemPCMSSC$ with $f=1$ respectively, which carries on the inapproximability results for the decision tree problems as well. 

For the special case of outforests, for any $\epsilon>0$ we establish $\cO(\log^{2-\epsilon} n)$ hardness of approximation for $\ProblemPCSC$ by reducing from \textsc{Group Steiner Tree} on trees, which is known to be hard to approximate within $\cO(\log^{2-\epsilon} n)$ factor, unless $\text{NP}\subseteq \text{ZTIME}(n^{\text{polylog}(n)})$. This hardness result naturally extends to $\ProblemPCWCDT$ via our reduction hierarchy. A similar result also holds for $\ProblemPCMSSC$, however here $f\neq 1$ and thus the reduction does not carry to $\ProblemPCACDT$.

\subsection{Organization}
Section~\ref{sec:preliminaries} introduces the necessary notions and preliminaries including all problem definitions. Section \ref{sec:AL} presents our approximation algorithms for decision trees with precedence constraints, relying on approximation algorithms for set covering with precedence constraints developed in Section~\ref{sec:SC}.
The latter use algorithms for density-type problems in Section~\ref{sec:MDPCS}.
Section~\ref{sec:hardness} provides approximation hardness results for both kinds of the aforementioned problems as well as hardness of binary search with precedence constraints.

\section{Preliminaries and problem definitions} \label{sec:preliminaries}

%For any set $X$, we denote by $\spr{X}$ its cardinality.
%For any integer $k \geq 1$, we use $[k]$ to denote the set $\brc{1, 2, \ldots, k}$.

For any integer $k \geq 1$, we use $[k]$ to denote the set $\brc{1,\ldots,k}$. 
Let $\sigma$ be a sequence and let $\ell\in[\spr{\sigma}]$. By $\pref{\sigma}{\ell}$ we mean the prefix of $\sigma$ of length $\ell$. For a given rooted tree $D=\br{V\br{D}, E\br{D}}$ we denote its root by $r\br{D}$. 
For any node $v$ of $D$, $D_v$ is the subtree of $D$ that consists of $v$ and all its descendants.

% \paraTitle{Decision tree problems.}
% In the decision tree problems, we are given a set of hypotheses $\cH$ and a set of tests $\cT$. Each test $t$ is an arbitrary partition of $\cH$ into disjoint subsets $U_{t,1}, U_{t,2}, \ldots, U_{t,r_t}$, where $r_t$ is the number of possible responses to test $t$. When a test $t$ is performed, the response indicates which subset $U_{t,j}$ contains the hidden hypothesis $\target \in \cH$. The tests are subject to precedence constraints encoded by a DAG $\cF = \brc{\cT, \preceq}$, meaning that a test $t$ can be performed only if all its predecessors in $\cF$ have already been performed.

\paraTitle{Partial orders and precedence constraints.}
Throughout this paper, we work with POSETs $(\cX, \preceq)$, where $\preceq$ is a reflexive, transitive, and antisymmetric binary relation.
For elements $X, Y \in \cX$, we say that $Y$ is a \emph{predecessor} of $X$ if $Y \preceq X$, and that $X$ is a \emph{successor} of $Y$.
A subset $\cS \subseteq \cX$ is \emph{precedence-closed} if for all $X \in \cS$ and all $Y \in \cX$ such that $Y \preceq X$, it holds that $Y \in \cS$.
A sequence $(X_1, X_2, \ldots, X_k)$ of elements from $\cX$ is \emph{precedence-closed} if for any $X_i$ and $X_j$ such that $X_i \preceq X_j$, we have $i < j$. We consider two special types of partial orders: An \emph{inforest} is a partial order in which every element has at most one successor. An \emph{outforest} is a partial order in which every element has at most one predecessor.

\paraTitle{Decision trees.}
In decision tree problems, the goal is to design a \textit{strategy} $\cS$ understood as an adaptive algorithm, that based on the previous replies, gives the \questioner the next test to be performed.
This strategy is typically represented as a \textit{decision tree}, which is a rooted tree $D$ where internal nodes represent tests to be performed, edges encode possible replies and leaves represent the hypotheses.
Formally, a decision tree is defined recursively as follows.
If $r=\brc{H_1,\ldots,H_l}\in\cT$ is the first test performed by $\cS$, then $r$ is the root of $D$.
For each reply $H_i$, $i\in[l]$, take inductively defined decision tree $D_i$ derived from the tests done by $\cS$ when the reply to $r$ is $H_i$.
Then, $D$ is obtained by making the roots of $D_1,\ldots,D_l$ the children of $r$, and the edge connecting $r$ to the root of $D_i$ is labeled with the reply $H_i$.
To complete the inductive definition, if $\cS$ is a trivial strategy that outputs the target hypothesis $h^*\in \cH$, then the corresponding tree $D$ is a leaf labeled with $h^*$.
It should be remarked that it is possible for a test to appear multiple times in the decision tree.
However, with a slight abuse of the notation we assume that non-leaf vertices are a subset of $\cT$ whenever there is no ambiguity. Similarly, we may assume that the set of leaves of $D$ is $\cH$. When the tests are subject to precedence constraints $(\cT, \preceq)$, a decision tree $D$ is called \emph{precedence-closed} if for every node $v$ in $D$, the sequence of tests on the path from $r\br{D}$ to $v$ is precedence-closed. This means that a test $t$ is executed only if all its predecessors in $(\cT, \preceq)$ have already been performed.

\paraTitle{Cost measures for decision trees.}
Let $\tests{D}{v}$ denote the sequence of tests on the path from $r\br{D}$ to $v$ including $v$ if it denotes a test. The cost of identifying a hypothesis $h$ using a decision tree $D$ is $\COST\br{D, h}=\spr{\tests{D}{h}}$.
We consider two cost measures for decision trees: the worst-case cost 
\[
\COSTW\br{D} = \max_{h \in \cH} \brc{\COST\br{D, h}}
\] 
and the average-case cost\footnote{Here, for simplicity we do not divide by the number of hypotheses $n$, since this does not affect the approximation ratios.} 
\[
\COSTA\br{D} = \sum_{h \in \cH} \COST\br{D, h}.
\]
%The criteria we consider for any $\ProblemPCDT$ instance $\cI$ are denoted by
%$$\OPTA\br{\cI}=\min\brc{\COSTA\br{D} \mid D\textup{ is a decision tree}},$$
%$$\OPTW\br{\cI}=\min\brc{\COSTW\br{D} \mid D\textup{ is a decision tree}}.$$
%\DD{For $X\subseteq\cH$, $p(X)=\sum_{x\in X}p(x)$.}
%(Whenever the criterion is not important or we want to make a claim that applies both to $\ProblemPCWCDT$ and $\ProblemPCACDT$, we use the symbol $\ProblemPCDT$ to refer to a \emph{Precedence Constrained Decision Tree} instance).

\medskip
Having introduced the necessary notation, we now formally state the decision tree problems for which we provide algorithms in Section~\ref{sec:AL}.
Throughout the paper we use $\cI$ to denote instances of particular problems, and $\OPT\br{\cI}$ denotes then the cost of an optimal solution for $\cI$.
\begin{problem}[\textsc{Precedence Constrained Worst Case Decision Tree} ($\ProblemPCWCDT$)] \label{problem:PCWCDT}
Given a set of hypotheses $\cH$, a set of tests $\cT$ and a precedence relation $(\cT,\preceq)$ on tests, find a precedence-closed decision tree $D$ that minimizes the worst-case cost $\COSTW\br{D}=\OPT\br{\cI}$.
\end{problem}
\begin{problem}[\textsc{Precedence Constrained Average Case Decision Tree} ($\ProblemPCACDT$)] \label{problem:PCACDT}
Given a set of hypotheses $\cH$, a set of tests $\cT$ and a precedence relation $(\cT,\preceq)$ on tests, find a precedence-closed decision tree $D$ that minimizes the average-case cost $\COSTA\br{D}=\OPT\br{\cI}$.
\end{problem}

\paraTitle{Additional notation for decision trees.}
Suppose that a decision tree $D$ performed some tests $T\subseteq\cT$.
The set $\cH_T(D)\subseteq\cH$ consists of all hypotheses $h$ such that for each test $t\in T$ the reply is a set $X\in t$ such that $h\in X$ and $X\subseteq \cH_T(D)$.
Note that $\cH_T(D)$ does not depend on the order of performing the tests in $T$.
Likewise, if $v$ is any node of a decision tree $D$ then $\cH_v(D)=\cH_T(D)$, where $T=\tests{D}{v}$.
We usually write $\cH_T$ when $D$ is clear from the context.

\medskip
\paraTitle{Set covering problems.}
%For any partial order $(X,\preceq)$ and $S\subseteq X$, we say that $S$ is \emph{precedence-closed} if for all $x \in S$ and all $y \in X$ such that $y \preceq x$, it holds that $y \in S$.
For a family of sets $(\cS,\preceq)$ and a set $S\in\cS$, let $\closure{S}$ be the minimal precedence-closed subset of $\cS$ containing $S$.
For a family of sets $\cC\subseteq \cS$ its \emph{coverage} is defined as $\cov\br{\cC}=\bigcup_{C\in\cC}C$, and by extension, $\cov\br{\cC, X} = \cov\br{\cC} \cap X$ for any set $X$.
If $\cC$ is a sequence of sets, rather than a family, we analogously define $\cov\br{\cC}=\bigcup_{C\in\cC}C$.
Let $\cC=\angl{C_1,\ldots,C_l}$. We say that $\cC$ is \emph{precedence-closed} if for any $C_i$ and $C_j$ such that $C_i\preceq C_j$ it holds $i<j$.
The \emph{coverage time} of an element $x$, $\coverageTime\br{x,\cC}$, in such precedence-closed sequence $\cC$ is the minimum index $i$ such that $x\in C_i$ if such index exists, i.e., if $x\in\cov\br{\cC}$, and $\spr{\cC}$ whenever $x\notin \cov\br{\cC}$.
Given a universe $\cU$, we denote by $\coverageTime\br{\cC, \cU}=\sum_{x\in\cU}\coverageTime\br{x,\cC}$ the sum of coverage times of all elements in $\cU$.

\begin{problem}[\textsc{Precedence Constrained Set Cover} ($\ProblemPCSC$)] \label{problem:PCSC}
Given a universe $\cU$ of $n$ items, a family $\cS$ of $m$ subsets of $\cU$, a precedence relation $(\cS,\preceq)$ and $0<f\leq 1$, find a precedence-closed subfamily $\cC\subseteq\cS$ that $\spr{\cov\br{\cC,\cU}}\geq f\cdot n$ items and minimizes $\spr{\cC}$.
\end{problem}
\begin{problem}[\textsc{Precedence Constrained Min-Sum Set Cover} ($\ProblemPCMSSC$)] \label{problem:PCMSSC}
Given a universe $\cU$ of $n$ items, a family $\cS$ of $m$ subsets of $\cU$, a precedence relation $(\cS,\preceq)$ and $0<f\leq 1$, find a precedence-closed sequence $\cC$ that $\spr{\cov\br{\cC,\cU}}\geq f\cdot n$ items and minimizes $\coverageTime\br{\cC, \cU}$.
\end{problem}
\begin{problem}[\textsc{Budgeted Precedence Constrained Set Cover} ($\ProblemBPCSC$)] \label{problem:BPCSC}
Given a universe $\cU$ of items, a family $\cS$ of subsets of $\cU$, a precedence relation $(\cS,\preceq)$ and a budget $\budget>0$, find a precedence-closed set cover $\cC$ that maximises $\spr{\cov\br{\cC,\cU}}$ and satisfies $\spr{\cC}\leq \budget$.
\end{problem}

\medskip
\paraTitle{Maximum density problems.}
The algorithms for set covering are obtained via reductions from a problem of finding a maximum density subset (cf. Section~\ref{sec:MDPCS}).
The \emph{density} of a nonempty family $\cC$ on a set $X$ is
$
\Delta\br{\cC, X} = \frac{\spr{\cov\br{\cC, X}}}{\spr{\cC}}.
$
We write $\Delta\br{\cC}$ when $X$ is the universe.

\begin{problem}[\textsc{Max-Density Precedence-Closed Subfamily} ($\ProblemMDPCS$)] \label{problem:MDPCS}
Given a universe $\cU$ of $n$ items, a family $\cS$ of $m$ subsets of $\cU$ and a precedence relation $(\cS,\preceq)$, find a precedence-closed subfamily $\cC \subseteq \cS$ that maximizes $\Delta\br{\cC}$.
\end{problem}

\begin{problem}[\textsc{Bounded Max-Density Precedence-Closed Subfamily} ($\ProblemBMDPCS$)] \label{problem:BMDPCS}
Given a universe $\cU$ of $n$ items, a family $\cS$ of $m$ subsets of $\cU$, a precedence relation $(\cS,\preceq)$ and an integer $\budget>0$, find a precedence-closed subfamily $\cC \subseteq \cS$ that maximizes $\Delta\br{\cC}$ and $\spr{\cC}\leq \budget$.
\end{problem}

\medskip
\paraTitle{Steiner problems.}
Let $(V,d)$ denote a symmetric metric with distance function $d: V \times V \to \mathbb{R}_{\geq 0}$ satisfying the triangle inequality.
An \emph{$r$-tour} $\tau$ is a walk that starts and ends at $r \in V$. Denote by $\spr{\tau}$ the length of tour $\tau$ with respect to $d$.
Let $\cX$ be a family of \emph{groups}, where for each $X \in \cX$, $X\subseteq V$.
Let $w: \cX \to \mathbb{R}_{\geq 0}$ be a weight assignment to groups.
Let $\cov\br{\tau} \subseteq \cX$ denote the set of all groups \emph{covered} by tour $\tau$, that is, $\cov\br{\tau} = \brc{X \in \cX\colon X \cap \tau \neq \emptyset}$ where we slightly abuse notation and write $X \cap \tau$ for the intersection of $X$ with the set of vertices visited by $\tau$.
% For a tour $\tau$ and a group $X \in \cX$, define $t_\tau(X)$ as the length of the shortest prefix of $\tau$ (with respect to $d$) that visits at least one vertex from $X$; if $X \notin \cov\br{\tau}$, we set $t_\tau(X) = \spr{\tau}$.

\begin{problem}[\textsc{Group Steiner Tree} (GST)] \label{problem:GST}
Given a metric space $(V, d)$, a root vertex $r \in V$ and a family of groups $\cX \subseteq 2^V$, find an $r$-tour such that $\cov\br{\tau} = \cX$ minimizing $\spr{\tau}$.
\end{problem}

\begin{problem}[\textsc{Group Steiner Orienteering} (GSO)] \label{problem:GSO}
Given a metric space $(V, d)$, a root vertex $r \in V$, a family of groups $\cX \subseteq 2^V$, a weight function $w: \cX \to \mathbb{R}_{\geq 0}$, and a budget $\budget > 0$, find an $r$-tour $\tau$ with $\spr{\tau} \leq \budget$ that maximizes
$\sum_{X \in \cov\br{\tau}} w(X)$.
\end{problem}
% \begin{problem}[\textsc{Partial Latency Group Steiner Tree} (LPGST)] \label{problem:LPGST}
% Given a metric space $(V, d)$, a root vertex $r \in V$, a family of groups $\cX \subseteq 2^V$, a weight function $w: \cX \to \mathbb{R}_{\geq 0}$, and a target $h \leq \spr{\cX}$, find an $r$-tour $\tau$ such that $\spr{\cov\br{\tau}} \geq h$ and that minimizes
% $c(\tau) = \sum_{X\in \cX} w(X) \cdot t_\tau(X)$.
% %$$c(\tau) = \sum_{X \in \cov\br{\tau}} w(X) \cdot t_\tau(X) + \sum_{X \in \cX \setminus \cov\br{\tau}} w(X) \cdot \spr{\tau}.$$
% \end{problem}

\medskip
\paraTitle{Approximation conventions.}
Throughout this paper, when we use bicriteria ratio $(\alpha,\beta)$, which means that an algorithm is allowed to lose a factor $\beta$ on the coverage requirement (equivalently, it covers at least a $1/\beta$-fraction of the target amount), while simultaneously losing at most a factor $\alpha$ on the cost side, i.e., it either pays at most $\alpha$ times the optimum cost or violates the given budget by at most a factor $\alpha$, depending on the problem formulation.

\section{Decision Trees via Covering Problems} \label{sec:AL}

In this section we present approximation algorithms for both $\ProblemPCWCDT$ and $\ProblemPCACDT$ by reducing them to $\ProblemPCSC$ and $\ProblemPCMSSC$, respectively. The overarching idea of these reductions is to use the solution of the appropriate covering problem in order to find a precedence-closed sequence of tests, such that performing all of them, guarantees that the size of each response is significantly smaller than $\spr{\cH}$. Moreover, the cost of performing these tests should be $\cO\br{\OPT}$ (up to the approximation ratio of the covering problem). The rest of the decision tree is then built recursively and since the depth of the recursion is $\cO\br{\log n}$ and as a consequence the loss in the approximation ratio is $\cO\br{\log n}$ as well. In what follows we make these ideas precise. 

We will require some additional notation.
%For any node $v$ of the $D$ by $\cH_v$ we will mean the set of hypotheses such that the test $t$ associated with $v$ is performed in $D$ if the set of possible hypotheses is $\cH_v$.
Let $\cH'\subseteq \cH$. By $t\br{\cH'}$ we denote the test $t$ restricted to hypotheses in $\cH'$, i.e., for each reply $H\in t$, we have a reply $H\cap \cH'$ in $t\br{\cH'}$ (if it is nonempty). By $D\setminus V$ we denote the collection of subtrees (a forest) obtained by the removal of all nodes in $V$ from $D$.
We will use the following immediate monotonicity properties. 
\begin{lemma}\label{lemma:subspace_opt}
    Let $\cI=\br{\cH, \cT,\preceq}$ and $\cI'=\br{\cH', \cT, \preceq}$ be two instances of $\ProblemPCACDT$ or $\ProblemPCWCDT$, where $\cH'\subseteq \cH$.
    Then, $\OPT\br{\cI'} \leq \OPT\br{\cI}$.
\end{lemma}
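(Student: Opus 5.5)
The plan is to take an optimal precedence-closed decision tree $D$ for $\cI$ and build from it a decision tree $D'$ for $\cI'$ whose cost is at most that of $D$. The construction restricts $D$ to $\cH'$ and prunes what becomes useless. For the average case the cost on $\cH'$ is then a sum over the smaller set $\cH'$ of nonnegative terms, each bounded by the corresponding term of $D$. For the worst case it is a maximum over the smaller set $\cH'$. So it suffices to show that for every $h\in\cH'$,
\[
\COST\br{D',h}\le \COST\br{D,h}.
\]

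The construction goes as follows. Every internal node $v$ of $D$ labelled by test $t$ is relabelled by $t\br{\cH'}$, which is the same element of $\cT$ acting on the smaller hypothesis space. Every edge whose reply $H$ has $H\cap\cH'=\emptyset$ is deleted together with its subtree. Leaves labelled by hypotheses outside $\cH'$ disappear in this way. The result is a tree whose root-to-leaf paths, for each $h\in\cH'$, are exactly the path of $h$ in $D$. The tree still identifies every $h\in\cH'$, because the replies along its path determine $h$ in $\cH$ and hence in $\cH'$.

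Precedence-closedness needs to be preserved. Each root-to-node path of the pruned tree is a root-to-node path of $D$, with the same tests in the same order, so it stays precedence-closed. A further optional step improves the tree: whenever a node has only one surviving child, or whenever $\cH_v$ has already shrunk to a singleton, that test can be contracted or the path truncated to a leaf. Removing a test from a path keeps the path precedence-closed only if no later test on the path requires it. So I would avoid contraction and simply keep the tests.

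Keeping the tests still gives a valid decision tree with $\COST\br{D',h}=\COST\br{D,h}$ for all $h\in\cH'$, which is exactly the needed inequality. The one subtlety, and the only real obstacle, is that the formal definition expects every leaf to be a hypothesis and every reply set to be nonempty. I would check this explicitly. If a test $t\br{\cH'}$ has a single nonempty reply, the node has one child. The recursive definition allows this, since a test may be a trivial partition of $\cH'$. Therefore $\OPT\br{\cI'}\le\COST\br{D'}\le\COST\br{D}=\OPT\br{\cI}$ for both criteria.
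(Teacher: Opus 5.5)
Your proof is correct and is essentially the argument the paper has in mind. The paper states this monotonicity as immediate and gives no proof; restricting an optimal precedence-closed tree for $\cI$ to $\cH'$ while keeping every test, so that each root-to-node test sequence and each per-hypothesis cost is unchanged, is exactly the implicit justification, and your choice not to contract single-child nodes correctly avoids the one way precedence-closedness could be broken.
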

\begin{lemma}\label{lemma:subinstances_avg}
    Let $\cI=(\cH, \cT, \preceq)$ be an instance of $\ProblemPCACDT$.
    Let $\cH_1,\dots,\cH_k\subseteq \cH$ be such that $\bigcup_{i=1}^k \cH_i \subseteq \cH$ and for any $i\neq j$, $\cH_i\cap \cH_j = \emptyset$.
    Then, 
    \[
    \OPT\br{\cI} \geq \sum_{i=1}^k \OPT\br{\br{\cH_i, \cT, \preceq}}
    \]
\end{lemma}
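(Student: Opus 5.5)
The plan is to take an optimal precedence-closed decision tree $D$ for $\cI$ with $\COSTA\br{D}=\OPT\br{\cI}$ and, for each $i\in[k]$, extract from it a precedence-closed decision tree $D_i$ for the instance $\br{\cH_i,\cT,\preceq}$. We want
\[
\COSTA\br{D_i}\leq \sum_{h\in\cH_i}\COST\br{D,h}.
\]
Because the $\cH_i$ are pairwise disjoint subsets of $\cH$, summing over $i$ gives
\[
\sum_{i=1}^k \OPT\br{\br{\cH_i,\cT,\preceq}}\leq \sum_{i=1}^k\COSTA\br{D_i}\leq \sum_{h\in\cH}\COST\br{D,h}=\OPT\br{\cI},
\]
where the middle inequality uses disjointness and the fact that every summand $\COST\br{D,h}$ is nonnegative, so dropping the hypotheses outside $\bigcup_i\cH_i$ only decreases the sum.

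To build $D_i$ we restrict $D$ to $\cH_i$. Every test node $v$ of $D$ is replaced by the restricted test $t\br{\cH_i}$. Every edge whose reply $H$ satisfies $H\cap\cH_i=\emptyset$ is deleted, together with the subtree below it. Every leaf labelled by a hypothesis outside $\cH_i$ is deleted. After this pruning, some internal nodes may be left with a single remaining child. Such a node no longer splits $\cH_i$, so we contract it: its parent is connected directly to its unique child, and the reply label is inherited.

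We then check three properties, using induction on the structure of $D$.
\begin{itemize}
\item \emph{$D_i$ is a valid decision tree for $\cH_i$.} For every $h\in\cH_i$, the root-to-leaf path of $h$ in $D$ survives the pruning, since every reply on it contains $h$. Each leaf of $D_i$ therefore still identifies exactly one hypothesis of $\cH_i$.
\item \emph{Path lengths do not increase.} Pruning and contraction only remove nodes from paths, so $\COST\br{D_i,h}\leq\COST\br{D,h}$ for every $h\in\cH_i$.
\item \emph{$D_i$ is precedence-closed.} This is the one delicate point, because contraction deletes tests from root-to-node paths, and a deleted test might be a predecessor of a later test on the same path.
\end{itemize}

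The main obstacle is this third property. The cleanest way around it is to avoid contraction altogether. We keep every node of $D$ that lies on the path to some $h\in\cH_i$, allowing a test to have a single relevant reply. The definition of a decision tree does not forbid a test with only one outcome, and $t\br{\cH_i}$ is still a partition of $\cH_i$. With no contraction, each root-to-node path in $D_i$ is exactly a root-to-node path in $D$, so it inherits precedence-closedness directly. Path lengths are unchanged on surviving paths, which gives $\COST\br{D_i,h}=\COST\br{D,h}$ for $h\in\cH_i$.

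This is also the same restriction argument that underlies Lemma~\ref{lemma:subspace_opt}, so both monotonicity statements follow from one construction. With the restricted trees in hand, the summation displayed above completes the proof.
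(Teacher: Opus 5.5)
Your argument is correct and is the restriction argument the paper takes for granted. The paper gives no proof and lists this lemma among ``immediate monotonicity properties''; keeping only the root-to-leaf paths of an optimal tree that lead to $\cH_i$, without contraction (so path lengths and precedence-closedness carry over verbatim), and then summing over the disjoint $\cH_i$ is exactly the implied justification. One small tidy-up: use your final rule throughout (keep precisely the nodes lying on a path to some $h\in\cH_i$) instead of the initial rule of deleting edges with $H\cap\cH_i=\emptyset$. The initial rule can keep an edge whose reply meets $\cH_i$ only in hypotheses already ruled out higher on the path, which leaves a leafless branch.
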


The following definition gives a basic tool for analysis of our algorithms.
\begin{definition}[Sepcover]
  Let $D$ be any decision tree for $\cI=\br{\cH, \cT, \cF}$.
  Let $v$ be a node that is closest to the root and has the property that after removal of $\tests{D}{v}$ from $D$, each subtree has at most $\spr{\cH}/2$ leaves.
  Then the set $\tests{D}{v}$ is called a \emph{sepcover} for $D$ and is denoted by $\sepcoverD{D}$.
  If $\COSTW\br{D}=\OPT\br{\cI}$, then we write $\sepcover{\cI} = \sepcoverD{D}$ (ties broken arbitrarily).
\end{definition}
Figure \ref{fig:sepcover} illustrates the definition of sepcover.

\begin{figure}[h]
\centering
\begin{tikzpicture}[scale=1.3]
    % Test nodes in sequence (nodes with labels)
    \node[circle, fill=black, inner sep=5pt, label=center:{}] (t1) at (0, 0) {};
    \node[circle, fill=black, inner sep=5pt, label=center:{}] (t2) at (2, -0.7) {};
    \node[circle, fill=black, inner sep=5pt, label=center:{}] (t3) at (4, -1.4) {};
    \node[circle, fill=black, inner sep=5pt, label=center:{}] (t4) at (7, -2.45) {};
    
    % Connecting lines in the main sequence with dots
    \draw[very thick] (t1) -- (t2);
    \draw[very thick] (t2) -- (t3);
    \draw[very thick] (t3) -- (5.25, -1.84);
    \node[rotate=-20] at (5.5, -1.93) {$\dots$};
    \draw[very thick] (5.75, -2.02) -- (t4);
    
    % Subtrees from first test (3 triangles, decreasing size)
    \draw[very thick] (t1) -- (-1.1, -0.65);
    \draw[very thick, fill=gray!20, drop shadow] 
        (-1.1,-0.65) -- (-1.55,-1.45) -- (-0.65,-1.45) -- cycle;
    
    \draw[very thick] (t1) -- (-0.1, -0.65);
    \draw[very thick, fill=gray!20, drop shadow] 
        (-0.1,-0.65) -- (-0.5,-1.35) -- (0.3,-1.35) -- cycle;
    
    \draw[very thick] (t1) -- (0.8, -0.65);
    \draw[very thick, fill=gray!20, drop shadow] 
        (0.8,-0.65) -- (0.45,-1.25) -- (1.15,-1.25) -- cycle;
    
    % Subtrees from second test (4 triangles, decreasing size)
    \draw[very thick] (t2) -- (1.1, -1.35);
    \draw[very thick, fill=gray!20, drop shadow] 
        (1.1,-1.35) -- (0.65,-2.15) -- (1.55,-2.15) -- cycle;
    
    \draw[very thick] (t2) -- (2.0, -1.35);
    \draw[very thick, fill=gray!20, drop shadow] 
        (2.0,-1.35) -- (1.6,-2.1) -- (2.4,-2.1) -- cycle;
    
    \node at (2.6, -1.6) {$\dots$};
    
    \draw[very thick] (t2) -- (3.2, -1.35);
    \draw[very thick, fill=gray!20, drop shadow] 
        (3.2,-1.35) -- (2.85,-2.0) -- (3.55,-2.0) -- cycle;
    
    % Subtrees from third test (2 triangles, decreasing size)
    \draw[very thick] (t3) -- (3.8, -1.95);
    \draw[very thick, fill=gray!20, drop shadow] 
        (3.8,-1.95) -- (3.4,-2.75) -- (4.2,-2.75) -- cycle;
    
    \draw[very thick] (t3) -- (4.9, -1.95);
    \draw[very thick, fill=gray!20, drop shadow] 
        (4.9,-1.95) -- (4.55,-2.65) -- (5.25,-2.65) -- cycle;
    
    % Subtrees from fourth test (3 triangles, smallest)
    \draw[very thick] (t4) -- (5.9, -2.95);
    \draw[very thick, fill=gray!20, drop shadow] 
        (5.9,-2.95) -- (5.5,-3.65) -- (6.3,-3.65) -- cycle;
    
    \draw[very thick] (t4) -- (6.9, -2.95);
    \draw[very thick, fill=gray!20, drop shadow] 
        (6.9,-2.95) -- (6.55,-3.6) -- (7.25,-3.6) -- cycle;
    
    \draw[very thick] (t4) -- (7.8, -2.95);
    \draw[very thick, fill=gray!20, drop shadow] 
        (7.8,-2.95) -- (7.5,-3.55) -- (8.1,-3.55) -- cycle;
    
    % Diagonal brace along the sequence
    \draw [very thick, decorate, 
       decoration={brace, amplitude=20pt, mirror}] 
      (7.35, -2.5) -- (-0.2, 0.25);
    
    % Label for the brace
    \node at (3.85, -0.45) {$P_D$};
    
\end{tikzpicture}
\caption{Sepcover sequence in a decision tree}
\label{fig:sepcover}
\end{figure}
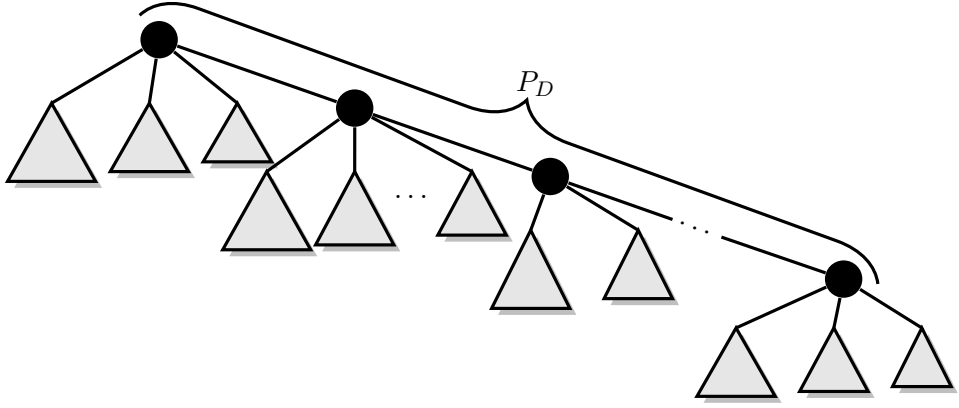

Intuitively, our idea is to construct a set covering instance such that the optimal solution to this instance is connected to the cost of tests in the sepcover of an optimal decision tree for the input instance $\cI$. This is done as follows.
%We will say that a test in $\sepcover{\cI}$ \emph{sepcovers} an element $u \in \cH$ in $C$ if after applying the test $t$ in $C$, $u$ belongs to a reply $\cH'$ of size at most $\spr{\cH}/2$.
Consider a set of hypotheses $\cH$ and the corresponding tests $\cT$.
For each test $t\in\cT$, let
\[
\xi(t)=\brc{h\in\cH \mid h\in H \textup{ for some }H\in t\textup{ s.t. }  \spr{H}\leq\frac{3}{4}\cdot\spr{\cH}}.
\]
Then, $\xi(\cT)=\brc{\xi(t)\mid t\in\cT}$ and if $\preceq$ is a partial order on $\cT$, then $\preceq_{\xi}$ is the corresponding partial order on $\xi(\cT)$, i.e., $t\preceq t'$ if and only if $\xi(t)\preceq_{\xi}\xi(t')$.
By extension, for any subset $S\subseteq\cT$, $\xi(S)=\brc{\xi(t) \mid t\in S}$.
We also set $\cU=\bigcup_{t\in\cT}\xi(t)$.
\begin{lemma}
    We have $\spr{\cU}\geq \spr{\cH}-1$ and $\spr{\cU}\geq \frac{4}{5}\cdot\spr{\cH}$.
\end{lemma}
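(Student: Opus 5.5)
The plan is to show that at most one hypothesis can lie outside $\cU$, and then to handle small values of $\spr{\cH}$ separately to obtain the multiplicative bound. Throughout I will use the standing assumption that the instance is solvable, i.e.\ for every two distinct hypotheses $h,h'\in\cH$ there is a test $t\in\cT$ whose replies separate them; otherwise no decision tree identifying the target exists. I will also assume $\spr{\cH}\ge 2$, since for $\spr{\cH}=1$ the instance is trivial and no tests are needed. Write $n=\spr{\cH}$.

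\textbf{Step 1 (at most one hypothesis missing).} By the definition of $\xi$, a hypothesis $h\notin\cU$ means that for every $t\in\cT$ the reply $H\in t$ containing $h$ has $\spr{H}>\frac34 n$. Suppose two distinct hypotheses $h,h'$ were both outside $\cU$, and take a test $t$ separating them. Then $h\in H$ and $h'\in H'$ for two distinct, hence disjoint, replies $H,H'\in t$, both of size greater than $\frac34 n$. This gives $\spr{H}+\spr{H'}>\frac32 n>n$, which is impossible. Hence $\spr{\cU}\ge n-1$.

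\textbf{Step 2 (the $\frac45$ bound).} For $n\ge 5$ we have $n-1\ge\frac45 n$, so this follows from Step 1. For $2\le n\le 4$, I will show that in fact $\cU=\cH$. If some $h\notin\cU$, then in every test $h$ lies in a reply of size greater than $\frac34 n$. For $n\in\{2,3,4\}$ this forces the reply to have size $n$, because $\frac34 n> n-1$ exactly when $n<4$, and for $n=4$ the condition $>3$ gives $4$. So the reply is all of $\cH$, meaning no test separates $h$ from any other hypothesis. This contradicts solvability, since $n\ge 2$.

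The only delicate point is this small-$n$ case analysis, together with making explicit the implicit assumptions: solvability of the instance, and $n\ge 2$ (the claim fails literally for $n=1$, where $\cU=\emptyset$). The main argument in Step 1 is a direct counting contradiction.
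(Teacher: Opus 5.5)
Your proof is correct and follows essentially the same route as the paper. A test separating two hypotheses outside $\cU$ would need two disjoint replies each of size greater than $\frac34\spr{\cH}$, which is impossible; so $\spr{\cU}\ge\spr{\cH}-1$, giving the $\frac45$ bound for $\spr{\cH}\ge 5$. For $\spr{\cH}\le 4$, a nontrivial test already has all replies of size at most $\frac34\spr{\cH}$, so $\cU=\cH$. Your version proves Step 1 for every $\spr{\cH}$ (the paper states it only for $\spr{\cH}\ge 5$), and you make explicit the assumptions $\spr{\cH}\ge 2$ and pairwise separability that the paper uses implicitly.
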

\begin{proof}
    If $\spr{\cH}\leq4$, then there exists a test $t$ which separates some pair of hypotheses and thus for every $H\in t$, $\spr{H}\leq\frac{3}{4}\cdot\spr{\cH}$, and therefore $\xi\br{t}=\cH$. Assume that $\spr{\cH}\geq5$. We show, that there is at most one hypothesis that is not in $\cU$. To see this observe that if a hypothesis $h_1\notin \cU$, then for every test $t\in\cT$, it holds that for the reply $H\in t$ containing $h_1$, $\spr{H}>\frac{3}{4}\cdot\spr{\cH}$. Assume that there is a second hypothesis $h_2\notin \cU$. By definition we know that there exists a test which distinguishes $h_1$ and $h_2$, so $h_2$ needs to belong to a reply of size at most $\frac{3}{4}\cdot\spr{\cH}$, a contradiction. Thus $\spr{\cU}\geq \min_{i\geq5}\brc{\frac{i-1}{i}}\cdot\spr{\cH}=\frac{4}{5}\cdot\spr{\cH}$.
\end{proof}
 
%In our construction, in order to solve a $\ProblemPCDT$ for an instance $\cI=\br{\cH, \cT, \preceq}$, we will apply an algorithm for $\ProblemPCSC$ with an input $\cI=\br{\cH, \xi(\cT), \preceq_{\xi}, \spr{\cH}/4}$.

\medskip
The idea behind the algorithms for the $\ProblemPCWCDT$ (Section~\ref{subsection:PCWCDT}) and the $\ProblemPCACDT$ (Section~\ref{subsection:PCACDT}) is similar.
Hence we extract here the common generic subroutine (see Algorithm~\ref{alg:worstDecisionTree}).
The only difference is a black-box procedure (we call it $\blackBox$) that is used to solve a particular subproblem - either $\ProblemPCSC$ or $\ProblemPCMSSC$ respectively.
Let $\cC$ be the cover obtained for the instance $\br{\cU, \xi\br{\cT}, \preceq_{\xi}, f=1/4}$ by the black-box procedure.
Assume that $\cC$ contains some ordering of its elements. If $\cC$ is the output of an approximation algorithm for $\ProblemPCSC$, then this ordering is a topological ordering of the tests in $\cC$ according to the precedence constraints.
% Note that $\cC$ is supposed to play the role of $\sepcover{\cI}$ according to the lower bounds in Lemmas~\ref{lemma:cover_sep_worst_case} and~\ref{lemma:cover_sep_average_case} given in the subsections below.
We iteratively build a decision tree $D$ by appending tests associated with consecutive sets from $\cC$.
During each such iteration, we look at all possible replies to the appended test and for each sufficiently small reply $\cH'$ ($\spr{\cH'} \leq \frac{5\beta-1}{5\beta}\cdot \spr{\cH}$), we recursively call \ProcDecisionTree on the subinstance induced by $\cH'$. If there exists a large reply $\cH'$ ($\spr{\cH'} > \frac{5\beta-1}{5\beta}\cdot \spr{\cH}$), then we continue appending tests from $\cC$ as a reply associated with $\cH'$. Note that since $\frac{5\beta-1}{5\beta}>\frac{1}{2}$ for any $\beta \geq 1$, there is at most one such reply.
% We also note that the sequence of tests $t_1,t_2\ldots$ in Algorithm~\ref{alg:worstDecisionTree} is a `substitute' of sepcover (cf. Lemma~\ref{lemma:cover_sep_worst_case}).
\begin{algorithm}[h]
\DontPrintSemicolon
\LinesNumbered

\caption{A generic approximation algorithm for $\ProblemPCACDT$ and $\ProblemPCWCDT$.}
\label{alg:worstDecisionTree}
\SetKwProg{Proc}{procedure}{}{}
\Proc{\ProcDecisionTree$(\cH, \cT, \preceq, \blackBox)$}{
\If{$\spr{\cH} = 1$}{
    \Return the trivial decision tree with a single leaf corresponding to the only hypothesis in $\cH$.
}
%\ForEach{$t \in \cT$}{
%Set $t$ to cover $u\in \cH$ if for $u \in U_{t,j}$, $\spr{U_{t,j}}\leq \frac{3}{4}\cdot\spr{\cH}$.
%}
%$S \gets$ Run the $\br{\gamma,\alpha}$-approximation algorithm for $\ProblemPCSC$ on instance $(\cH, \xi(\cT), \preceq_{\xi}, \spr{\cH}/4)$.
$\cC \gets$ Call $\br{\alpha, \beta}$-approximation algorithm $\blackBox$ on instance $(\cU, \xi(\cT), \preceq_{\xi}, f=1/4)$.

$t_{1} \gets$ the test associated with the first set in $\cC$.

$D\gets\br{\brc{t_1}, \emptyset}$.

$\cH_{1} \gets \cH$. 

\For{$i=1$ \KwTo $|\cC|$}{
\ForEach{\textnormal{response $\cH'$, $\cH'\in t_{i}\br{\cH_{i}}$, such that $1\leq\spr{\cH'} \leq \frac{5\beta-1}{5\beta}\cdot \spr{\cH}$}}{
    $D_{\cH'} \gets$ \ProcDecisionTree$(\cH', \cT, \preceq, \blackBox)$.

    Hang $D_{\cH'}$ below $t_{i}$ in $D$.
}
\If{\textnormal{there is no $\cH'$, $\cH'\in t_{i}\br{\cH_{i}}$, such that $\spr{\cH'} > \frac{5\beta-1}{5\beta}\cdot \spr{\cH}$}}{
    \Return $D$.
}

$\cH_{i+1} \gets\argmax_{\cH', \cH'\in t_{i}\br{\cH_{i}}} \brc{\spr{\cH'}}$.

$t_{i+1} \gets$ the test associated with $i+1$-th set in $\cC$.

Hang $t_{i+1}$ below $t_{i}$.
}
\Return $D$.
}
\end{algorithm}

It should be noted, that we can restrict any decision tree $D$ to a subset of tests $\cC\subseteq \cT$, such that $D$ restricted to $\cC$ is connected and containing $r\br{D}$ and we can measure the contribution of $\cC$ to the cost of $D$. This is done by restricting each $\tests{h}{D}$ to $\cC$ and summing the costs of all tests in the restricted paths. The two cost criteria are then defined accordingly.
Let $D_{\cC}$ be the decision tree $D$ restricted only to tests corresponding to members of $\cC$, i.~e., we only consider the path formed by these tests.

\subsection{Precedence Constrained Worst Case Decision Tree} \label{subsection:PCWCDT}

We start with the worst-case cost criterion, which is simpler to analyze. We have the following immediate corollary of the definition of sepcover.
\begin{observation} \label{obs:sepcoverWC}
    Let $\cI$ be any instance of $\ProblemPCWCDT$.
    Then, $\spr{\sepcover{\cI}} \leq \OPT\br{\cI}$.
\end{observation}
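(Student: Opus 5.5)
The observation follows by unwinding the definition of sepcover. We fix an optimal precedence-closed decision tree $D$ for $\cI$, so that $\COSTW\br{D}=\OPT\br{\cI}$ and $\sepcover{\cI}=\sepcoverD{D}=\tests{D}{v}$. Here $v$ is the node closest to the root such that removing $\tests{D}{v}$ leaves only subtrees with at most $\spr{\cH}/2$ leaves. The plan is to show that $\spr{\tests{D}{v}}$, the number of tests on the root-to-$v$ path, is at most the depth of some leaf of $D$. Since $\COSTW\br{D}$ is the maximum of $\COST\br{D,h}=\spr{\tests{D}{h}}$ over all leaves $h$, this gives the claim.

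First, we check that $v$ is well defined and is an internal (test) node. Removing all tests on the path to the deepest relevant node leaves only single leaves, so some node with the required property exists. We may assume $\spr{\cH}\ge 2$. Otherwise the tree is a single leaf, the sepcover is empty, and the inequality is trivial.

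Second, we extend the path from $r\br{D}$ to $v$ to a root-to-leaf path. Take any leaf $h$ in the subtree $D_v$; such a leaf exists because every internal node has descendants that are leaves. Then $\tests{D}{v}$ is a prefix of $\tests{D}{h}$. Hence
\[
\spr{\sepcover{\cI}}=\spr{\tests{D}{v}}\le \spr{\tests{D}{h}}=\COST\br{D,h}\le \COSTW\br{D}=\OPT\br{\cI}.
\]

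The only point needing care is bookkeeping. One must confirm that $\tests{D}{v}$ counts the tests on the path including $v$ itself, since $v$ is a test. One must also confirm that $\tests{D}{h}$ for a leaf $h$ contains all of these tests plus possibly more. Both follow directly from the definition of $\tests{\cdot}{\cdot}$, so there is no substantive obstacle.
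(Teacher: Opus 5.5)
Your argument is correct and is exactly the unwinding the paper intends when it calls this an immediate corollary of the definition: $\sepcover{\cI}=\tests{D^*}{v}$ is a prefix of $\tests{D^*}{h}$ for any leaf $h$ of $D^*_v$, so its length is at most $\COSTW\br{D^*}=\OPT\br{\cI}$. Your side remark on the existence of $v$ is loosely phrased, since removing the tests on a single path does not leave only single leaves (following, at each node, the child with the most leaves down to a leaf does give a qualifying node); in any case existence is presupposed by the definition and is not needed for the inequality.
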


%We will use $\spr{\sepcover{\cI}}$ as a lower bound on $\OPT\br{\cI}$ in the analysis of the approximation algorithm for $\ProblemPCWCDT$.
\begin{lemma}\label{lemma:cover_sep_worst_case}
    Let $\cI=\br{\cH, \cT, \preceq}$ be any $\ProblemPCWCDT$ instance.
    Let $\cC^*$ be an optimal solution to $\ProblemPCSC$ on instance $\br{\cU, \xi(\cT), \preceq_{\xi}, f=1/4}$.
    %and a test $t$ covers $h\in \cH$ if $\brc{U_{t}\br{u}}\leq \frac{3}{4}\cdot \brc{\mathcal{U}}$.
    Then, $\spr{\cC^*} \leq \spr{\sepcover{\cI}}$.
\end{lemma}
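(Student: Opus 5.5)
The plan is to show that $\xi(\sepcover{\cI})$ is a feasible solution to the $\ProblemPCSC$ instance $\br{\cU, \xi(\cT), \preceq_{\xi}, f=1/4}$. Since $\spr{\xi(\sepcover{\cI})}\leq\spr{\sepcover{\cI}}$, optimality of $\cC^*$ then gives the claim. Let $D$ be an optimal decision tree with $\COSTW\br{D}=\OPT\br{\cI}$, let $v$ be the node defining the sepcover, and let $P=\tests{D}{v}=(t_1,\ldots,t_k)$. Two things must be checked: that $\xi(P)$ is precedence-closed, and that it covers at least $\spr{\cU}/4$ elements of $\cU$.

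\textbf{Precedence-closedness.} $D$ is precedence-closed, so the path $P$ from the root to $v$ contains every predecessor of each of its tests. Since $\preceq_\xi$ is defined by transporting $\preceq$ through $\xi$, the family $\xi(P)$ is closed under $\preceq_\xi$.

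\textbf{Coverage.} After performing all tests of $P$, every subtree of $D\setminus P$ has at most $\spr{\cH}/2$ leaves. Fix a hypothesis $h$ that does not end at the final ``continuing'' branch. Such an $h$ leaves the path at some test $t_i$, through a reply $H\in t_i\br{\cH_{\{t_1,\ldots,t_{i-1}\}}}$ of size at most $\spr{\cH}/2$. The difficulty is that $\xi(t_i)$ is defined using the reply of $t_i$ on the whole of $\cH$, not on the restricted set, so I need to control the size of $h$'s full reply.

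I would argue by minimality of $v$. The parent $u$ of $v$ on the path fails the sepcover property, so after $\tests{D}{u}$ some subtree has more than $\spr{\cH}/2$ leaves; this must be the one continuing along the path. Hence the hypotheses branching off before $t_k$ number fewer than $\spr{\cH}/2$, and those branching off at $t_k$ are sent by $t_k$ into pieces of size at most $\spr{\cH}/2$. For $t_k$ itself, the hypotheses in $\cH_u$ (more than half of $\cH$) are split into parts of size at most $\spr{\cH}/2$. A full reply $H\in t_k$ with $\spr{H}>\frac34\spr{\cH}$ would meet $\cH_u$ in at least $\spr{\cH_u}-\spr{\cH}/4$ elements. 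I expect a counting argument to show that at least a quarter of the hypotheses land in full replies of size at most $\frac34\spr{\cH}$, for instance via a case split on whether $t_k$ has a reply larger than $\frac34\spr{\cH}$.

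More simply, if some $H\in t_k$ has $\spr{H}>\frac34\spr{\cH}$, then every other reply of $t_k$ lies in $\xi(t_k)$, which already contains $\cH\setminus H$. I then need $\spr{\cH\setminus H}$ together with the contributions of earlier tests to reach $\spr{\cU}/4$. This counting is the main obstacle; the constants $\frac34$ and $\frac14$ appear chosen to make it work, possibly using $\spr{\cU}\geq\frac45\spr{\cH}$ from the preceding lemma. If the direct count fails, I would use that a hypothesis $h$ ending in a small subtree shares, for each test on its branch-off path, the reply of every other hypothesis of that subtree restricted to the current set. I would then try to bound full replies through the nesting of the sets $\cH_T$ along the path.
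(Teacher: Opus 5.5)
\paragraph{Verdict: there is a genuine gap.} Your reduction to showing that $\xi(\sepcover{\cI})$ is feasible is fine, and so is the precedence-closedness check. The coverage bound, however, is the whole content of the lemma, and it is not proved: you stop at ``I expect a counting argument'' and call it the main obstacle. You correctly identify the real difficulty, which is bounding the \emph{full} reply of a hypothesis that leaves the path, but you never resolve it.

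You use minimality of $v$ only to count that fewer than $\spr{\cH}/2$ hypotheses branch off before $t_k$. That count is not what is needed. Moreover, no per-hypothesis statement can hold at $t_k$. A hypothesis can exit $t_k$ into a child with at most $\spr{\cH}/2$ leaves while its full reply has more than $\frac34\spr{\cH}$ elements, so it is not in $\xi(t_k)$. Some aggregate argument at $t_k$ is therefore unavoidable.

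\paragraph{Closing the gap via minimality of $v$.} Minimality should be used to bound full replies, not to count. Fix $i<k$.
\begin{itemize}
\item The subtrees hanging off $t_1,\dots,t_i$ away from the path are also components of $D\setminus\tests{D}{v}$, so each has at most $\spr{\cH}/2$ leaves.
\item Since $t_i$ fails the sepcover property, removing $\tests{D}{t_i}$ leaves a component with more than $\spr{\cH}/2$ leaves. By the previous point, it must be the subtree rooted at $t_{i+1}$.
\item Its hypotheses all give the same full reply $Q\in t_i$, so $\spr{Q}>\spr{\cH}/2$.
\item Any $h$ leaving the path at $t_i$ lies in a full reply disjoint from $Q$. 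That reply has size less than $\spr{\cH}/2\le\frac34\spr{\cH}$, so $h\in\xi(t_i)$.
\end{itemize}
At $t_k$ there are two cases. If every reply has size at most $\frac34\spr{\cH}$, then $\xi(t_k)=\cH$. Otherwise the unique larger reply $L$ corresponds to a single child of $t_k$, which contains at most $\spr{\cH}/2$ hypotheses, and every other hypothesis at $t_k$ lies in $\xi(t_k)$.

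Hence at most $\spr{\cH}/2$ hypotheses are uncovered, and the coverage is at least $\spr{\cH}/2\ge\spr{\cU}/4$. You do not need $\spr{\cU}\ge\frac45\spr{\cH}$.

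\paragraph{The paper's route.} The paper argues by contradiction instead. It takes the first test on the path at which an uncovered hypothesis exits, via a reply $H$. The hypotheses of $H$ not present at that node left the path earlier, so they are covered and number fewer than $\spr{\cU}/4$. The exiting subtree contributes at most $\spr{\cH}/2$ more. This gives $\spr{H}<\frac34\spr{\cH}$, contradicting the assumption that the exiting hypothesis is uncovered.
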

    \begin{proof}
        It is enough to argue that $\xi(\sepcover{\cI})$ covers at least $\spr{\cU/4}$ elements from $\cU$.
        Assume towards a contradiction that this is not the case.
%Therefore there exists $t\in \sepcover{\cI}$ and a reply $\cH'$ of size $\spr{\cH'}\leq \spr{\cH}/2$ such that hypotheses in $\cH'$ are not covered by $\sepcover{\cI}$, otherwise the claim holds trivially, since all hypotheses are covered. Let $\cH'\subseteq U_{t,j}$ (since $\cH'$ is a reply to a test $t$, such $U_{t,j}$ always exists). By assumption, we have that $\spr{U_{t,j}-\cH'} < \spr{\cH}/4$. Therefore, we have that $\spr{U_{t,j}} = \spr{\cH'}+\spr{U_{t,j}-\cH'}<3/4\cdot \spr{\cH}$ which by definition means that $h$ is covered by $\sepcover{\cI}$, a contradiction.
        Therefore, there exists a test $t\in \sepcover{\cI}$ which is executed first in $D^*$ and a reply $H\in t$ that corresponds to a child $u$ of $t$ such that hypotheses in $H$ are not covered by $\xi(\sepcover{\cI})$ and $\spr{\cU_u}\leq\spr{\cU}/2$.
        Otherwise the claim holds trivially, since all hypotheses would be covered.
        Note that $\cU_u\subseteq H$.
        We have that $H\setminus\cU_u$ is a subset of hypotheses covered by other tests in $\sepcover{\cI}$ and hence by assumption, $\spr{H\setminus\cU_u}< \spr{\cU}/4$.
        Hence, we get $\spr{H} = \spr{\cU_u}+\spr{H\setminus\cU_u}<3/4\cdot \spr{\cU}$ which means that hypotheses in $H$ are covered by $\sepcover{\cI}$, a contradiction.
    \end{proof}

%\DD{I tak bedziemy oszczedzac miejsce, wiec moze nie warto parafrazowac algorytmu, a dac jakas intuicje jako klej?}
%We now state an algorithm (Algorithm~\ref{alg:worstDecisionTree}) that outputs a decision tree for an input instance $\cI=\br{\cH, \cT, \preceq}$ of $\ProblemPCWCAL$.

%     Algorithm \ref{alg:worstDecisionTree} is recursive and works as follows.
%     Given an instance $\cI=\br{\cH, \cT, \preceq}$, if $\spr{\cH}=1$ we return the trivial decision tree with a single leaf corresponding to the only hypothesis in $\cH$. Otherwise, we run the $\br{\gamma,\alpha}$-approximation algorithm for $\ProblemPCSC$ on instance $\br{\cH, \xi(\cT), \preceq_{\xi}, \spr{\cH}/4}$.
%     Let $S\subseteq\cT$ be the output. We build a decision tree $D_S$ on tests from $S$ closed under $\preceq_{\xi}$.
%     For each $\cH' \in \cH\setminus S$, we recursively call \ProcWorstDecisionTree on instance $\br{\cH', \cT\setminus S, \preceq-S}$ and attach the returned decision tree to the leaf of $D_S$ corresponding to $\cH'$. Finally, we return the constructed decision tree $D$.

The following observation is due to Lemmas \ref{lemma:subspace_opt} and \ref{lemma:cover_sep_worst_case}:
    \begin{observation} \label{obs:DS}
        Consider $D_{\cC}$ computed in Algorithm~\ref{alg:worstDecisionTree}, that uses a $\br{\gamma,\alpha}$-approximation algorithm for $\ProblemPCSC$ as the $\blackBox$ subroutine.
        %Let $D_S$ be the decision tree built on tests from $S$ respecting the precedence constraints $\preceq$.
        Then, $\COSTW\br{D_{\cC}} \leq \alpha \cdot \spr{\sepcover{\cI}}$.
    \end{observation}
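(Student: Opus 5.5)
The plan is to bound $\COSTW\br{D_{\cC}}$ by the length of the sequence $\cC$, then compare $\spr{\cC}$ with the optimum of the covering instance and invoke Lemma~\ref{lemma:cover_sep_worst_case}.

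First, I will observe that $D_{\cC}$, the tree restricted to the tests of $\cC$, is a single path. Algorithm~\ref{alg:worstDecisionTree} appends the test of the $(i+1)$-th set of $\cC$ only below the unique large reply $\cH_{i+1}$ of $t_i$. Uniqueness holds because $\frac{5\beta-1}{5\beta}>\frac12$. Every other reply is handed to a recursive call and does not belong to $D_{\cC}$. The loop runs for at most $\spr{\cC}$ iterations, so every root-to-node path restricted to $\cC$ contains at most $\spr{\cC}$ tests. This gives $\COSTW\br{D_{\cC}}\le\spr{\cC}$.

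Second, I will bound $\spr{\cC}$. The procedure $\blackBox$ is a $\br{\alpha,\beta}$-approximation for $\ProblemPCSC$ on the instance $\br{\cU,\xi(\cT),\preceq_\xi,f=1/4}$. Under the bicriteria convention, its cost is at most $\alpha$ times the optimum $\spr{\cC^*}$ of that instance, while its coverage requirement is relaxed by the factor $\beta$. The key point is that the cost comparison is against the optimum of the exact requirement $f=1/4$. Hence $\spr{\cC}\le\alpha\spr{\cC^*}$.

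Third, Lemma~\ref{lemma:cover_sep_worst_case} gives $\spr{\cC^*}\le\spr{\sepcover{\cI}}$. Chaining the three bounds yields $\COSTW\br{D_{\cC}}\le\alpha\cdot\spr{\sepcover{\cI}}$.

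I expect no real obstacle in this argument; the main subtlety is notational. The statement writes the approximation pair as $\br{\gamma,\alpha}$, whereas the algorithm writes $\br{\alpha,\beta}$, so I must make sure the cost factor used is the one stated in the observation. One might also worry that the black box only satisfies the coverage requirement approximately. That does not matter here, because the bound concerns only the number of tests on the path, and the argument never uses coverage. If the intended object is instead a recursive subinstance, Lemma~\ref{lemma:subspace_opt} lets me replace its quantities by those of $\cI$.
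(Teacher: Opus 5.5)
Your proof is correct and is the argument the paper intends; the paper itself gives only a pointer to Lemma~\ref{lemma:cover_sep_worst_case}. The chain is: $D_{\cC}$ is a single path of at most $\spr{\cC}$ tests, the black box's cost guarantee gives $\spr{\cC}\le\alpha\spr{\cC^*}$ with $\alpha$ read as the cost factor (as you noticed, the $(\gamma,\alpha)$ in the statement is a leftover labelling), and Lemma~\ref{lemma:cover_sep_worst_case} gives $\spr{\cC^*}\le\spr{\sepcover{\cI}}$. Your closing hedge is unnecessary, and Lemma~\ref{lemma:subspace_opt}, which the paper also cites, only matters in Theorem~\ref{thm:PCSC-to-PCWCDT}, where sub-instance optima are compared with $\OPT\br{\cI}$.
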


\begin{theorem} \label{thm:PCSC-to-PCWCDT}
If there is a polynomial-time bicriteria $\br{\alpha, \beta}$-approximation algorithm for $\ProblemPCSC$, then there is a polynomial-time
$\cO\br{\frac{\alpha}{\log\br{\frac{5\beta}{5\beta -1}}} \cdot \log n}$-approximation algorithm for $\ProblemPCWCDT$. 
% In particular when $\gamma = \cO\br{1}$, the approximation is $\cO\br{\alpha \cdot \log n}$.
\end{theorem}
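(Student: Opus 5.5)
We will run Algorithm~\ref{alg:worstDecisionTree} with the $(\alpha,\beta)$-approximation for $\ProblemPCSC$ as $\blackBox$, and prove by induction on $|\cH|$ that the returned tree $D$ satisfies $\COSTW(D)\le c\cdot\alpha\cdot \OPT(\cI)\cdot\log_{5\beta/(5\beta-1)}|\cH|$ plus lower-order terms; here $c$ is an absolute constant. This gives the claimed $\cO\br{\frac{\alpha}{\log(5\beta/(5\beta-1))}\log n}$ ratio.

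\textbf{Validity.} First we check that the output is a valid precedence-closed decision tree. The cover $\cC$ is precedence-closed and taken in topological order. Every recursive call starts from an empty test history, and precedence-closedness is only a requirement on root-to-node paths. Hence the path $t_1,\dots,t_i$ followed by any path inside a recursively built subtree is again precedence-closed: tests that already appeared earlier may simply be repeated or skipped. If the loop runs through all of $\cC$ and a large reply still remains, we may have to continue. In that case we argue that this cannot happen, or else we finish by recursing on the leftover set as well; see the obstacle below.

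\textbf{Progress step.} We must show that after the path $t_1,\dots,t_{|\cC|}$ every reply has size at most $\frac{5\beta-1}{5\beta}|\cH|$. The algorithm only follows a reply larger than $\frac{5\beta-1}{5\beta}|\cH|>|\cH|/2$. The cover $\cC$ covers at least $|\cU|/(4\beta)\ge |\cH|/(5\beta)$ elements of $\cU$, using $|\cU|\ge\frac45|\cH|$ from the lemma above.

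Take a hypothesis $h$ covered by $\xi(t_i)$. Then $h$ lies in a reply of $t_i$ of size at most $\frac34|\cH|$. The large reply $\cH_{i+1}$ followed after $t_i$ must be contained in a reply of $t_i$ of size greater than $\frac{5\beta-1}{5\beta}|\cH|\ge\frac45|\cH|>\frac34|\cH|$. Therefore $h\notin\cH_{i+1}$.

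So all covered hypotheses are eventually separated off the main path, and the surviving set has size at most $|\cH|-|\cH|/(5\beta)=\frac{5\beta-1}{5\beta}|\cH|$. This means the loop ends with every reply being small, so it terminates through the ``return'' branch. Every recursive call is therefore made on a set $\cH'$ with $|\cH'|\le\frac{5\beta-1}{5\beta}|\cH|$.

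\textbf{Cost recursion.} By Observation~\ref{obs:DS} together with Observation~\ref{obs:sepcoverWC} and Lemma~\ref{lemma:cover_sep_worst_case}, the path contributes at most $|\cC|\le\alpha|\cC^*|\le\alpha\,\OPT(\cI)$ tests to any root-leaf path. By Lemma~\ref{lemma:subspace_opt}, each subinstance satisfies $\OPT(\cI')\le\OPT(\cI)$. Hence
\[
\COSTW(D)\le \alpha\,\OPT(\cI)+\max_{\cH'}\COSTW(D_{\cH'}),
\]
and the recursion depth is at most $\log_{5\beta/(5\beta-1)} n = \ln n/\ln\frac{5\beta}{5\beta-1}$. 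Unrolling gives the bound. Polynomial running time follows because the recursion tree has at most $n$ leaves and each hypothesis set strictly shrinks.

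\textbf{Main obstacle.} The delicate point is the progress step when $|\cH|$ is small or the thresholds interact. Here we need the inequality $\frac{5\beta-1}{5\beta}\ge\frac34$ (true since $\beta\ge1$) and the $\frac45$ bound on $|\cU|$ to translate ``$\cC$ covers $|\cU|/(4\beta)$ items'' into ``the large reply shrinks by a $1/(5\beta)$ fraction''. We also need to handle the base case, where $\OPT(\cI)\ge1$ whenever $|\cH|\ge2$, so the additive terms are absorbed.
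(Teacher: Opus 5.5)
Your proposal is correct and is essentially the paper's argument: the path built from $\cC$ costs at most $\alpha|\cC^*|\le\alpha|\sepcover{\cI}|\le\alpha\,\OPT(\cI)$, every recursive subinstance has $\OPT(\cI')\le\OPT(\cI)$ and at most $\frac{5\beta-1}{5\beta}n$ hypotheses, and the paper packages your unrolling as an induction on $n$ that telescopes exactly, $\alpha\,\OPT(\cI)+\frac{\alpha}{\log c}\log(n/c)\,\OPT(\cI)=\frac{\alpha}{\log c}\log n\cdot\OPT(\cI)$ with $c=\frac{5\beta}{5\beta-1}$, so no additive slack appears. Your explicit progress step is a useful addition: covered hypotheses leave the main path because $\frac34<\frac45\le\frac{5\beta-1}{5\beta}$, so the loop always exits through the return branch, which justifies the size bound on subinstances that the paper only asserts.
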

\begin{proof}
     Consider a call to \ProcDecisionTree on input $\cI=\br{\cH, \cT, \preceq}$, where $\blackBox$ is a $\br{\alpha,\beta}$-approximation algorithm for $\ProblemPCSC$.
     We prove by induction on $n$ that
     \begin{equation} \label{eq:D}
     \COSTW\br{D} \leq \frac{\alpha}{\log\br{\frac{5\beta}{5\beta -1}}}\cdot \log n \cdot \OPT\br{\cI}.
     \end{equation}

     The base case when $n=1$ is trivial since the the cost of the decision tree is 0.
     Assume that for every $\cI' = \br{\cH', \cT, \preceq}$ such that $\spr{\cH'}<\spr{\cH}$ and $n' = \spr{\cH'}$ we have \[
     \COSTW\br{D'} \leq \frac{\alpha}{\log\br{\frac{5\beta}{5\beta -1}}}\cdot \log n' \cdot \OPT\br{\cI'}
     \] 
     where $D'$ is the decision tree recursively returned by \ProcDecisionTree on input $\cI'$.
    %  To point out the dependency of $D'$ on $\cH'$, we write $D'(\cH')$.
     We have
%         \begin{align*}
%             \COSTW\br{D, \cI} \leq & \COSTW\br{D_S, \cI} + \max_{D'(\cH')} \COSTW\br{D', \cI'} \\
%             \leq & \alpha \cdot \spr{S^*} + \max_{D'(\cH')} \frac{\alpha}{\log\br{\frac{4\gamma}{4\gamma -1}}}\cdot \log n' \cdot \OPT\br{\cI'} \\
%             \leq & \alpha \cdot \spr{\sepcover{\cI}} + \frac{\alpha}{\log\br{\frac{4\gamma}{4\gamma -1}}}\cdot \log \br{\frac{\br{{4\gamma-1}}\cdot n}{4\gamma}} \cdot \OPT\br{\cI} \\
%             = & \alpha \cdot \OPT\br{\cI} + \frac{\alpha}{\log\br{\frac{4\gamma}{4\gamma -1}}}\cdot \log n \cdot \OPT\br{\cI} - \alpha\cdot \OPT\br{\cI} \\
%             = & \frac{\alpha}{\log\br{\frac{4\gamma}{4\gamma -1}}}\cdot \log n \cdot \OPT\br{\cI} \\
%        \end{align*}
        \begin{align*}
            \COSTW\br{D} \leq & \COSTW\br{D_{\cC}} + \max_{\cH\in D\setminus D_{\cC}} \brc{\COSTW\br{D'}} \\
            \leq & \alpha \cdot \spr{\sepcover{\cI}} + \max_{\cH\in D\setminus D_{\cC}} \brc{\frac{\alpha}{\log\br{\frac{5\beta}{5\beta -1}}}\cdot \log n' \cdot \OPT\br{\cI'}} \\
            \leq & \alpha \cdot \OPT\br{\cI} + \frac{\alpha}{\log\br{\frac{5\beta}{5\beta -1}}}\cdot \log \br{\frac{\br{{5\beta-1}}\cdot n}{5\beta}} \cdot \OPT\br{\cI} \\
            = & \alpha \cdot \OPT\br{\cI} + \frac{\alpha}{\log\br{\frac{5\beta}{5\beta -1}}}\cdot \log n \cdot \OPT\br{\cI} - \alpha\cdot \OPT\br{\cI} \\
            = & \frac{\alpha}{\log\br{\frac{5\beta}{5\beta -1}}}\cdot \log n \cdot \OPT\br{\cI},
        \end{align*}
        where the second inequality is by the induction hypothesis and Observation~\ref{obs:DS}, the third inequality follows by Observation~\ref{obs:sepcoverWC}, Lemma \ref{lemma:subspace_opt} and the fact that $n'=\spr{\cH'} \leq \br{1-\frac{4}{5}\cdot\frac{1}{4\cdot\beta}}\cdot n=\frac{(5\beta -1)}{5\beta}\cdot n$ for every for every $\cH'\in D\setminus D_{\cC}$.
        This concludes the proof of the theorem.
\end{proof}

Therefore, by Theorem \ref{thm:PCSC-to-PCWCDT} we obtain the following corollary:
\begin{corollary}\label{cor:PCWCDT-special-cases}
    There exist polynomial-time algorithms for $\ProblemPCWCDT$ with the following guarantees:
    \begin{itemize}
        \item $\cO\br{\log n}$-approximation for inforests, by Theorem~\ref{thm:BPCSC-inforest}, and Corollary~\ref{cor:BPCSC-to-PCSC}.
        \item $\cO\br{\log^2n}$-approximation for outforests, by and Theorem~\ref{thm:BPCSC-outforest}.
        \item General $\cO\br{\sqrt{m}\cdot\log n}$-approximation, by Theorem~\ref{thm:MDPCStoPCSC}, and Theorem~\ref{thm:MDPCStoPCSC}.
    \end{itemize}
\end{corollary}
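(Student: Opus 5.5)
The corollary follows by plugging the covering approximations from Section~\ref{sec:SC} into Theorem~\ref{thm:PCSC-to-PCWCDT}. The only real check is that the coverage loss $\beta$ stays an absolute constant in every case. Then $\log\br{\frac{5\beta}{5\beta-1}}$ is a positive constant, and the decision-tree ratio is $\cO\br{\alpha\log n}$.

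For inforests, Theorem~\ref{thm:BPCSC-inforest} gives a $\br{1,\frac{e}{e-1}}$-approximation for $\ProblemBPCSC$. Corollary~\ref{cor:BPCSC-to-PCSC} turns this into a bicriteria approximation for $\ProblemPCSC$ with the same constants. We guess the budget $b=\OPT$ by trying all $b\in[m]$ and keeping the smallest $b$ whose output covers at least $f n/\beta$ items. Two points need checking here:
\begin{itemize}
\item the instance $\br{\cU,\xi(\cT),\preceq_\xi}$ is still an inforest, since $\xi$ is a bijection that preserves the order;
\item the requirement $f=1/4$ is allowed.
\end{itemize}
With $\alpha=1$ and $\beta=\frac{e}{e-1}$, Theorem~\ref{thm:PCSC-to-PCWCDT} gives $\cO\br{\log n}$.

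For outforests, Theorem~\ref{thm:BPCSC-outforest} gives $\br{\cO(\log n),4}$ for $\ProblemPCSC$, again via the budget guessing. So $\alpha=\cO(\log n)$ and $\beta=4$, and the result is $\cO\br{\log^2 n}$. The preservation of outforest structure under $\xi$ is immediate, for the same reason as for inforests.

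For general orders, Theorem~\ref{thm:MDPCStoPCSC} gives an $\br{\cO(\sqrt m/f),2}$-approximation for $\ProblemPCSC$. With $f=1/4$ this is $\alpha=\cO(\sqrt m)$ and $\beta=2$, so the ratio is $\cO\br{\sqrt m\log n}$. Note that $m$ here counts the sets of $\xi(\cT)$, which number at most $|\cT|$.

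Each call is polynomial, and the recursion of Algorithm~\ref{alg:worstDecisionTree} creates at most $\cO(n)$ subinstances, so the overall running time is polynomial. The main obstacle is bookkeeping rather than mathematics. The bicriteria conventions of the budgeted results ($\ProblemBPCSC$ loses on budget, then is converted to a cost loss) must match exactly the $(\alpha,\beta)$ convention assumed in Theorem~\ref{thm:PCSC-to-PCWCDT}. In particular, we must ensure that the guessed $b\le\OPT$ of the $\ProblemPCSC$ instance gives cost at most $\alpha\cdot\OPT$.
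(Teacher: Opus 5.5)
Your proposal is correct and matches the paper's argument. The paper likewise plugs three covering guarantees into Theorem~\ref{thm:PCSC-to-PCWCDT}: the $(1,\frac{e}{e-1})$ inforest and $(\cO(\log n),4)$ outforest guarantees (converted to $\ProblemPCSC$ by budget guessing, Corollary~\ref{cor:BPCSC-to-PCSC}), and the $(4\gamma/f,2)$ bound of Theorem~\ref{thm:MDPCStoPCSC} with $\gamma=\cO(\sqrt{m})$ and $f=1/4$, relying on $\beta$ being constant in each case. Your extra checks are sound bookkeeping that the paper leaves implicit: $\xi$ preserves the inforest/outforest structure, the smallest-feasible-$b$ search, and the $\cO(n)$ bound on recursive calls.
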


\subsection{Precedence Constrained Average Case Decision Tree} \label{subsection:PCACDT}

Here, the analysis becomes more involved, since in order to compare the cost of our solution with the cost of sepcover, rather then just comparing the size of sepcover with the cost of our solution, we need to take into account the contribution of each test to the objective function.
We start with few observations.
%For a decision tree $D$ and its node $v$, let $c\br{D, v} = p(v)\cdot\spr{\cH_v}$.
\begin{observation} \label{obs:dec-tree-averaging}
    If $D$ is a decision tree for a $\ProblemPCACDT$, then $\COSTA\br{D}=\sum_{v\in V(D)}\spr{\cH_v}$.
%    $$
%    \COSTA\br{D} = \sum_{t \in X} p(\cH_t)\cdot\spr{\cH_t},
%    $$
%    where $X$ is the set of all non-leaf vertices of $D$.
\end{observation}
% Here, $D-S\br$ is the collection of subtrees (a forest) obtained by the removal of all nodes in $S$ from $D$.
Observation~\ref{obs:dec-tree-averaging} allows us to decompose the cost criterion as follows.
\begin{observation} \label{obs:WCdecomposition}
    Let $D$ be any decision tree for an instance of $\ProblemPCACDT$ and let $S=\tests{D}{v}$ for any node $v$ of $D$.
    Then,
    \[
    \COSTA\br{D} = \COSTA\br{D_{S}} + \sum_{D'\in D\setminus V\br{D_{S}}} \COSTA\br{D'}.
    \]
%     $$
%     \COSTA\br{D, \cI} = \COSTA\br{S, \cI} + \sum_{D'\in D\setminus S} \COSTA\br{D', \cI}.
%     $$

\end{observation}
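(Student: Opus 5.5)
The plan is to derive Observation~\ref{obs:WCdecomposition} directly from Observation~\ref{obs:dec-tree-averaging}, by splitting the node set of $D$ into the nodes of the restricted tree $D_S$ and the nodes of the remaining subtrees. Let $v$ be a node of $D$ and $S=\tests{D}{v}$. Then $D_S$ is exactly the root-to-$v$ path $P$, and its cost is measured as in the remark preceding Subsection~\ref{subsection:PCWCDT}. That remark says to restrict each root-to-leaf path to $S$ and count the tests that remain. Hence $\COSTA\br{D_S}=\sum_{h\in\cH}\spr{\tests{D}{h}\cap P}$.

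\textbf{Step 1: rewrite the cost of $D_S$.} Exchanging the order of summation, each node $u\in P$ is counted once for every hypothesis whose leaf lies below $u$. That number is $\spr{\cH_u}$, so
\[
\COSTA\br{D_S}=\sum_{u\in V\br{D_S}}\spr{\cH_u}.
\]

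\textbf{Step 2: handle the remaining nodes.} Removing $V\br{D_S}$ from $D$ leaves a forest $D\setminus V\br{D_S}$. Each of its components is a subtree $D_w$ rooted at a child $w$ of some node of $P$ with $w\notin P$. Such a $D_w$ is a decision tree for the hypothesis set $\cH_w$. The nodes of $D$ are partitioned into $V\br{D_S}$ and the vertex sets of these components. Each node $x$ lies in exactly one part, and $\spr{\cH_x}$ is the same whether computed in $D$ or in the component containing it: the hypotheses below $x$ are the same in both trees. Applying Observation~\ref{obs:dec-tree-averaging} to each component $D'$ gives $\COSTA\br{D'}=\sum_{x\in V(D')}\spr{\cH_x}$, with $\cH_x$ taken within the subinstance.

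\textbf{Step 3: combine.} Summing Observation~\ref{obs:dec-tree-averaging} for $D$ over this partition yields the claimed identity.

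The only delicate point is the bookkeeping convention for leaves. If leaves are counted in the sum of Observation~\ref{obs:dec-tree-averaging}, they must be counted consistently in the components. A component that is a single leaf has cost $0$ under the path-length definition, so the convention must be chosen so that both sides agree. I expect to settle this by using the cost definition $\COST(D,h)=\spr{\tests{D}{h}}$, which counts only test nodes. Under that definition the sums range over internal nodes, and the identity becomes a pure reindexing with no further obstacles.
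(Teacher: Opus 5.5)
Your proof is correct and follows the paper's route: the paper gives no separate proof and presents the observation as an immediate consequence of Observation~\ref{obs:dec-tree-averaging}, i.e., splitting the node sum $\sum_{u}\spr{\cH_u}$ between the root-to-$v$ path and the hanging subtrees, exactly as you do. Your treatment of leaves is the right fix: Observation~\ref{obs:dec-tree-averaging} taken literally over all of $V(D)$ would overcount by $\spr{\cH}$, whereas summing only over test nodes matches $\COST(D,h)=\spr{\tests{D}{h}}$ and makes the identity a pure reindexing.
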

\paraTitle{Remark on notation.}
Throughout this subsection, we treat $\sepcover{\cI}$ as a sequence rather than a set, since the order of tests matters for computing the average-case cost. 
While formally $\sepcover{\cI}$ is defined as the set of tests along a path in the decision tree, when we write $\COSTA\br{\sepcover{\cI}}$ we implicitly refer to the natural ordering of these tests induced by their position on the path from the root.
As an immediate corollary, by taking an optimal $D^*$ and the corresponding $\sepcover{\cI}$ we have:
\begin{observation} \label{obs:sepcoverAC}
    Let $\cI$ be any instance of $\ProblemPCACDT$.
    Then, $\COSTA\br{\sepcover{\cI}} \leq \OPT\br{\cI}$.
\end{observation}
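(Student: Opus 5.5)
The plan is to view $\COSTA\br{\sepcover{\cI}}$ as the restriction of $D^*$ to its sepcover path, and to compare it term by term with $\COSTA\br{D^*}$. Fix an optimal precedence-closed decision tree $D^*$ for $\cI$, so that $\COSTA\br{D^*}=\OPT\br{\cI}$, and let $\sepcover{\cI}=\tests{D^*}{v}$ for the node $v$ from the definition of sepcover. Write this set as the path sequence $t_1,\ldots,t_k$ from $r\br{D^*}$ to $v$. By the remark on notation, $\COSTA\br{\sepcover{\cI}}$ is exactly $\COSTA\br{D^*_{S}}$ with $S=\sepcover{\cI}$. Concretely, each hypothesis $h$ pays the number of path tests it passes through, namely the number of $i$ with $h\in\cH_{t_i}\br{D^*}$.

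First apply Observation~\ref{obs:WCdecomposition} with this $S$:
\[
\COSTA\br{D^*} = \COSTA\br{D^*_{S}} + \sum_{D'\in D^*\setminus V\br{D^*_{S}}} \COSTA\br{D'}.
\]
Every $\COSTA\br{D'}$ is a sum of nonnegative path lengths, so each term of the second sum is nonnegative. Dropping that sum gives $\COSTA\br{D^*_S}\le\COSTA\br{D^*}=\OPT\br{\cI}$, which is the claim.

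An alternative, more hands-on check goes hypothesis by hypothesis. The restricted sequence $\tests{D^*}{h}\cap S$ is a subsequence of $\tests{D^*}{h}$, so its length is at most $\COST\br{D^*,h}$. Summing over $h\in\cH$ then gives the bound without the decomposition. Observation~\ref{obs:dec-tree-averaging} supports the same conclusion: the node sum $\sum_{v}\spr{\cH_v}$ restricted to the path nodes is a subsum of the full sum.

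The only step needing care is matching conventions: it must be confirmed that the cost of the sepcover sequence is defined so that each hypothesis is charged only for path tests that are actually executed on its branch. Since the path is a prefix of the root-to-leaf path of $D^*$ for the hypotheses that follow it, and is cut off for the others, this charging is bounded by their full root-to-leaf cost. If instead the paper charges hypotheses that branch off early for the whole sequence, as in the coverage-time convention, the bound could fail. I would resolve this by fixing the restriction definition stated before Section~\ref{subsection:PCWCDT}, which restricts each $\tests{D}{h}$ to $\cC$ and is exactly the branch-respecting convention.
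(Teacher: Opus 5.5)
Your proposal is correct and matches the paper's argument. The paper treats this as an immediate corollary of taking an average-case optimal $D^*$ with its sepcover path, applying the decomposition of Observation~\ref{obs:WCdecomposition}, and dropping the nonnegative subtree costs. Your branch-respecting reading of $\COSTA\br{\sepcover{\cI}}$ is also the one the paper uses later in Lemma~\ref{lemma:cover_sep_average_case}.
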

We follow a similar idea as the one for the worst case cost, by using the connection to $\ProblemPCMSSC$ instead of $\ProblemPCSC$.
This allows to use $\COSTA\br{\sepcover{\cI}}$ as a lower bound on $\OPT\br{\cI}$ in the analysis of the approximation algorithm for $\ProblemPCACDT$.
We have a lemma that is analogous to Lemma \ref{lemma:cover_sep_worst_case}.
\begin{lemma}\label{lemma:cover_sep_average_case}
    Let $\cI=\br{\cH, \cT, \preceq}$ be any $\ProblemPCACDT$ instance and let $\cI_{\cU}=\br{\cU, \cT, \preceq}$. Let $\cC^*$ be an optimal solution to $\ProblemPCMSSC$ on instance $\cI_{\cU}$.
    %, where a test $t$ covers element $u \in \cH$ if for $u\in U_{t,j}$, $\spr{U_{t,j}}\leq \frac{3}{4}\cdot \spr{\cH}$.
    Then, $\coverageTime\br{\cC^*} \leq \COSTA\br{\sepcover{\cI_{\cU}}}$.
\end{lemma}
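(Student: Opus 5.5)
The plan is to exhibit the sepcover sequence itself, viewed through $\xi$, as a feasible solution of the covering instance, and to compare its min-sum coverage time with $\COSTA$ of the sepcover path. Let $D^*$ be the optimal tree defining $\sepcover{\cI_{\cU}}$, and let $t_1,\dots,t_k$ be the tests on the sepcover path in root order. Since $D^*$ is precedence-closed, the sequence $\cC=\angl{\xi(t_1),\dots,\xi(t_k)}$ is precedence-closed with respect to $\preceq_\xi$. By the argument of Lemma~\ref{lemma:cover_sep_worst_case}, which depends only on the set of tests on the sepcover path and not on their order, $\cC$ covers at least $\spr{\cU}/4$ elements. Hence $\cC$ is feasible for $\ProblemPCMSSC$ with $f=1/4$, and it suffices to prove $\coverageTime\br{\cC,\cU}\leq\COSTA\br{\sepcover{\cI_{\cU}}}$.

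For the cost comparison I would use the definition of the restricted cost. A hypothesis $h$ whose leaf lies in a subtree hanging off the path right after $t_i$ pays $i$ in $\COSTA\br{\sepcover{\cI_{\cU}}}$. Hypotheses still on the path after $t_k$ pay $k$. The key claim is then pointwise: every $h$ leaving the path at $t_i$ satisfies $\coverageTime\br{h,\cC}\leq i$. Every uncovered $h$ has coverage time $\spr{\cC}=k$, and it pays $k$ in the restricted cost only if it survives the whole path. Summing these pointwise bounds over $h\in\cU$ would give the lemma.

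The main obstacle is the pointwise claim. When $h$ branches off at $t_i$ into a reply $H\ni h$, we get $h\in\xi(t_i)$ only if $\spr{H}\leq\frac34\spr{\cU}$. This is guaranteed for every $i<k$ by minimality of $v$ in the sepcover definition: the branch after $t_i$ ends up in a subtree with at most half the hypotheses, while the continuing reply holds the rest. I would argue that any reply leaving the path before the last test has size at most $\spr{\cU}/2$, and so lies in $\xi(t_i)$. At $t_k$ all off-path subtrees have at most $\spr{\cU}/2$ leaves, so again $h\in\xi(t_k)$.

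A second subtlety is that a reply of $t_i$ restricted to the current node may be smaller than the full reply $H\in t_i$, whose size is what $\xi$ measures. I expect to handle this as in Lemma~\ref{lemma:cover_sep_worst_case}: if $h\notin\xi(t_i)$, then $h$ is covered earlier by some $t_j$ with $j<i$, which only decreases its coverage time. If that fails, I would accept uncovered $h$ paying $k\geq i$ only when consistent; otherwise I would fall back on charging these via the $f=1/4$ slack, bounding coverage times only for covered elements and assigning the rest cost at most their path cost.
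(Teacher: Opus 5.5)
Your plan matches the paper's: map a prefix of the sepcover path through $\xi$, use it as a feasible solution for $f=1/4$, and compare hypothesis by hypothesis. (The paper takes the shortest prefix covering $\spr{\cU}/4$ items; you take the whole path, which is equally fine.) However, the step that carries the lemma is not established: that a hypothesis $h$ branching off at $t_i$, $i<k$, lies in $\xi(t_i)$. Your $\spr{\cU}/2$ bound is on the reply \emph{restricted} to the current node, whereas $\xi$ measures the full reply $H\in t_i$. You notice this, but your patch (``if $h\notin\xi(t_i)$ then $h$ is covered by an earlier $t_j$'') is false, and so is your claim that every $h$ reaching $t_k$ lies in $\xi(t_k)$.

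For a counterexample, take $\spr{\cH}=100$ and disjoint $L,P,Q_1,Q_2$ of sizes $20,24,32,24$. Use tests $t_1=\brc{L,\,P\cup Q_1\cup Q_2}$, $t_2=\brc{P,\,L\cup Q_1\cup Q_2}$, $t_3=\brc{L\cup P\cup Q_1,\,Q_2}$, and a singleton test $\brc{\brc{h},\cH\setminus\brc{h}}$ for every $h$, with the chain $t_1\preceq t_2\preceq t_3$ below all singleton tests. Every feasible tree starts every branch with $t_1,t_2,t_3$. The surviving sets on the heavy path have sizes $100,80,56$, and $t_3$ splits $56$ into $32+24$, so the sepcover is $\angl{t_1,t_2,t_3}$ and $\cU=\cH$. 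Yet the full replies containing $Q_1$ have sizes $80,76,76>75$, so $Q_1$ meets no $\xi(t_j)$. The ``$f=1/4$ slack'' fallback does not help either, since $\coverageTime\br{\cC,\cU}$ charges every uncovered element $\spr{\cC}$ regardless of $f$.

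The missing idea is a disjointness observation. For $i<k$, let $\cH_{i+1}$ be the set surviving on the path after $t_i$; by the choice of $v$, $\spr{\cH_{i+1}}>\spr{\cU}/2$. This set is the current set intersected with one full reply of $t_i$. Any $h$ branching off at $t_i$ lies in a \emph{different} full reply $H$, so $H\cap\cH_{i+1}=\emptyset$. Hence $\spr{H}<\spr{\cH}-\spr{\cU}/2\leq\frac35\spr{\cH}\leq\frac34\spr{\cH}$, using $\spr{\cU}\geq\frac45\spr{\cH}$. So $h\in\xi(t_i)$ and $\coverageTime\br{h,\cC}\leq i$. For $h$ reaching $t_k$ no membership is needed: $h$ pays $k$ on the path and $\coverageTime\br{h,\cC}\leq\spr{\cC}=k$ in every case, which is how the $Q_1$ example is absorbed. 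With this, your whole-path argument goes through. It also avoids the paper's minimality-of-prefix device, which the paper uses to bound the previously eliminated hypotheses by $\spr{\cU}/4$, so that the full reply has size at most $\frac12\spr{\cU}+\frac14\spr{\cU}$.
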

\begin{proof}
    Assume towards a contradiction that $\coverageTime\br{\cC^*} > \COSTA\br{\sepcover{\cI_{\cU}}}$.
    We will show that in such case there exists a cover $\sigma \subseteq \xi(\sepcover{\cI_{\cU}})$ such that $\coverageTime\br{\sigma} \leq \COSTA\br{\sepcover{\cI_{\cU}}}$, contradicting the optimality of $\cC^*$.
    Let $\sigma$ be the shortest prefix of $\sepcover{\cI_{\cU}}$ that covers at least $\spr{\cU}/4$ items.
    Note that because we consider a prefix, the precedence constraints are satisfied.
    Such a subsequence exists, by repeating the argument used in the proof of Lemma \ref{lemma:cover_sep_worst_case}.
    Take an arbitrary $h \in \cU$.
    There are two cases to consider:
    \begin{itemize}
        \item $\sigma$ covers $h$. Consider the first test $t$ that sepcovered $h$ in $\sepcover{\cI_{\cU}}$. By definition, the tests prior to $t$ in $\sigma$ cover at most $\spr{\cU}/4$ items. Since at the moment of sepcovering, $h$ belonged to a reply of size at most $\spr{\cU}/2$, we know that $t$ also covers $h$ in $\sigma$. This means that the contribution of $h$ to $\coverageTime\br{\sigma}$ is at most its contribution to $\COSTA\br{\sepcover{\cI_{\cU}}}$.
        \item $\sigma$ does not cover $h$. Since $h$ is sepcovered by some test $t$ in $\sepcover{\cI_{\cU}}$ but not covered by $\sigma$, the contribution of $h$ to $\coverageTime\br{\sigma}$ is $\spr{\sigma}$ and its contribution to $\COSTA\br{\sepcover{\cI_{\cU}}}$ is at least $\spr{\sigma}$.
    \end{itemize}
    Thus, we have that $\coverageTime\br{\sigma} < \COSTA\br{\sepcover{\cI_{\cU}}}$, a contradiction.
\end{proof}

The following observation is analogous to Observation~\ref{obs:DS} however requires a more careful consideration.
\begin{observation} \label{obs:DS-average}
    Consider $D_{\cC}$ computed in Algorithm~\ref{alg:worstDecisionTree}, that uses an $\br{\alpha,\beta}$-approximation algorithm for $\ProblemPCMSSC$ as the $\blackBox$ subroutine.
    Then, $\COSTA\br{D_{\cC}} \leq 2\alpha \cdot \COSTA\br{\sepcover{\cI_{\cU}}}$.
\end{observation}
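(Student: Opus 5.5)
We need to bound $\COSTA\br{D_{\cC}}$, the contribution of the tests on the path $t_1,t_2,\ldots$ built from the sequence $\cC$. The plan is to compare it term by term with the min-sum coverage time $\coverageTime\br{\cC,\cU}$, and then invoke the approximation guarantee together with Lemma~\ref{lemma:cover_sep_average_case}.

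\textbf{Step 1 (expressing the cost).} By Observation~\ref{obs:dec-tree-averaging}, restricted to the path,
\[
\COSTA\br{D_{\cC}}=\sum_{i\geq 1}\spr{\cH_i},
\]
where $\cH_i$ is the set of hypotheses still present when $t_i$ is performed. Equivalently, each hypothesis $h$ is charged the index of the last path test it reaches. A hypothesis leaves the path at step $i$ exactly when it falls into a small reply of $t_i$.

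\textbf{Step 2 (pointwise comparison with coverage time).} Fix $h\in\cU$ covered by $\cC$ at time $j=\coverageTime\br{h,\cC}$. Then the reply of $t_j$ on the full $\cH$ that contains $h$ has size at most $\frac34\spr{\cH}$. Restricting to $\cH_j$ only shrinks that reply, and $\frac34\leq\frac{5\beta-1}{5\beta}$ for $\beta\geq1$, so $h$ leaves the path no later than step $j$. Its charge is therefore at most $\coverageTime\br{h,\cC}$.

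A hypothesis that is not covered, or that is not in $\cU$ (at most one such), is charged at most $\spr{\cC}$. This equals its coverage-time value for elements of $\cU$ by definition. The possible single hypothesis outside $\cU$ contributes an extra $\spr{\cC}$. I would absorb it by noting that at least $\spr{\cU}/(4\beta)\geq1$ elements are covered, so $\spr{\cC}$ is at most the contribution of the uncovered elements (if any), or simply at most $\coverageTime\br{\cC,\cU}$. This gives
\[
\COSTA\br{D_{\cC}}\leq 2\,\coverageTime\br{\cC,\cU}.
\]
The $2$ in the statement is consistent with this slack.

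\textbf{Step 3 (chaining).} Since $\cC$ is an $\br{\alpha,\beta}$-approximation, $\coverageTime\br{\cC,\cU}\leq\alpha\cdot\coverageTime\br{\cC^*}$. Lemma~\ref{lemma:cover_sep_average_case} then gives $\coverageTime\br{\cC^*}\leq\COSTA\br{\sepcover{\cI_{\cU}}}$, and the claim follows.

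\textbf{Main obstacle.} The hard part is Step 2's bookkeeping. First, I must verify that the algorithm's stopping rule never ends the path before $\cC$ has covered the hypotheses whose cost we compare. Early termination only lowers the charge, so this is fine. Second, I must reconcile the charge of the last index with the coverage time for hypotheses remaining in the large reply when $\cC$ is exhausted. This is where the factor $2$ and the treatment of the hypothesis outside $\cU$ will need care.
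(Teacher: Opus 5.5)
Your proposal is correct and is essentially the paper's own argument. A hypothesis of $\cU$ first covered by $t_j$ lies in a reply of size at most $\frac34\spr{\cH}\le\frac{5\beta-1}{5\beta}\spr{\cH}$, so it leaves the path by step $j$; the lone hypothesis outside $\cU$ adds at most $\spr{\cC}\le\coverageTime\br{\cC,\cU}$; and the factor $\alpha$ together with Lemma~\ref{lemma:cover_sep_average_case} completes the chain.

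One small tightening, needed equally in the paper's proof: $\spr{\cC}\le\coverageTime\br{\cC,\cU}$ can fail if all of $\cU$ is covered early and $\cC$ ends with sets covering nothing new. Instead, charge the outside hypothesis only the path length $L$. Since $\spr{\cH_L}\ge 2$ and at most one hypothesis lies outside $\cU$, some $h\in\cH_L\cap\cU$ exists, and its coverage time is at least $L$.
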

\begin{proof}
    Firstly, examine how the Algorithm \ref{alg:worstDecisionTree} constructs $D_{\cC}$. It builds a decision tree by obeying the ordering of tests in $\cC$ and whenever a response is small enough the rest of the decision tree is built recursively. We want to have $\COSTA\br{D_{\cC}}\leq \coverageTime\br{\cC}+\spr{\cC}$. To do so, we show that for $i\in [\cC]$ and hypothesis $h\in\cU$ whenever $h\in\cov\br{\prefix{\cC}{i}}$, $h\notin \cH_{i+1}$. This would mean that the contribution of $h$ to $\COSTA\br{D_{\cC}}$ is at most its contribution to $\coverageTime\br{\cC}$. To see that this is the case, observe that $h\in\cov\br{\prefix{\cC}{i}}$ means that there exists a test $t\in \cov\br{\prefix{\cC}{i}}$, such that $h\in H$, for some reply $H\in t$, such that $\spr{H}\leq \frac{3}{4}\cdot\spr{\cH}$. However, this also mean that the response to this test containing $t$ is of size at most $\frac{3}{4}\cdot\spr{\cH}\leq \frac{5\beta-1}{5\beta}\cdot \spr{\cH}$, and hence $h\notin \cH_{i+1}$ as required. 
    
    Moreover if the single hypothesis $h\in \cH\setminus\cU$ exists the contribution of $h$ to $\COSTA\br{D_{\cC}}$ is at most $\spr{\cC}$. We also trivially have that $\spr{\cC}\leq \coverageTime\br{\cC}$. Hence, we get that 
    \[
    \COSTA\br{D_{\cC}} \leq \coverageTime\br{\cC} + \spr{\cC} \leq 2\cdot \coverageTime\br{\cC}\leq 
    2\alpha \cdot \COSTA\br{\sepcover{\cI_{\cU}}}
    \]
\end{proof}

\begin{theorem}\label{thm:PCMSSC-to-PCACDT}
    If there is a polynomial-time bicriteria $\br{\alpha, \beta}$-approximation algorithm for $\ProblemPCMSSC$, then there is a polynomial-time $\cO\br{\frac{\alpha}{\log\br{\frac{5\beta}{5\beta -1}}}\cdot \log n}$-approximation algorithm for $\ProblemPCACDT$.
    % In particular when $\mu = \cO\br{1}$, the ratio is $\cO\br{\beta \cdot \log \br{m+n}}$.
\end{theorem}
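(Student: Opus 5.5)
We mirror the proof of Theorem~\ref{thm:PCSC-to-PCWCDT}, running \ProcDecisionTree with the $\br{\alpha,\beta}$-approximation for $\ProblemPCMSSC$ as $\blackBox$. Set $c=\frac{5\beta}{5\beta-1}$. By induction on $n=\spr{\cH}$ we show that the returned tree $D$ satisfies
\[
\COSTA\br{D}\leq \frac{2\alpha}{\log c}\cdot \log n\cdot \OPT\br{\cI}.
\]
The base case $n=1$ is trivial, since the cost is $0$.

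For the inductive step we use Observation~\ref{obs:WCdecomposition} to split $\COSTA\br{D}$ into $\COSTA\br{D_{\cC}}$ and the sum of $\COSTA\br{D'}$ over the recursively built subtrees $D'$. Each $D'$ is built on a reply $\cH'$. The replies hung as recursive calls are pairwise disjoint subsets of $\cH$. Moreover, the final large reply, if the loop ends without returning, must also be accounted for. Here we use that $\cC$ covers at least $\frac{1}{4\beta}\spr{\cU}\geq\frac{1}{5\beta}\spr{\cH}$ hypotheses. As in the proof of Observation~\ref{obs:DS-average}, each covered hypothesis leaves the large branch, so every recursive subinstance has size $n'\leq n/c$.

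For the first term, Observation~\ref{obs:DS-average}, Lemma~\ref{lemma:cover_sep_average_case}, Observation~\ref{obs:sepcoverAC} and Lemma~\ref{lemma:subspace_opt} give
\[
\COSTA\br{D_{\cC}}\leq 2\alpha\cdot\COSTA\br{\sepcover{\cI_{\cU}}}\leq 2\alpha\cdot\OPT\br{\cI_{\cU}}\leq 2\alpha\cdot\OPT\br{\cI}.
\]
For the second term, the induction hypothesis bounds $\sum\COSTA\br{D'}$ by $\frac{2\alpha}{\log c}\log(n/c)\sum\OPT\br{\cI'}$. Since the $\cH'$ are disjoint, Lemma~\ref{lemma:subinstances_avg} gives $\sum\OPT\br{\cI'}\leq\OPT\br{\cI}$. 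Adding the two bounds, the $-2\alpha\,\OPT\br{\cI}$ coming from $\log(n/c)=\log n-\log c$ cancels the first term exactly, which closes the induction. The approximation ratio is therefore $\cO\br{\frac{\alpha}{\log c}\log n}$. Polynomial running time follows because the recursion tree has disjoint leaf sets at each level and depth $\cO\br{\log_c n}$.

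The main obstacle is the accounting in the decomposition. One must check that every hypothesis's cost is charged either to $D_{\cC}$ or to exactly one recursive subtree. One must also check that the case where the loop exhausts $\cC$ while a large reply remains cannot occur. This needs care, because Algorithm~\ref{alg:worstDecisionTree} returns $D$ after $\spr{\cC}$ iterations even if a large reply still exists. I would first argue, via the coverage guarantee $\spr{\cov\br{\cC,\cU}}\geq\spr{\cU}/(4\beta)$ and the $\cU$ versus $\cH$ size relation (the $\frac{4}{5}$ bound), that after all of $\cC$ is processed the remaining branch $\cH_{\spr{\cC}+1}$ has size at most $\frac{5\beta-1}{5\beta}n$. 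Hence that branch is either empty of large replies or, if needed, should also be recursed on, and it then fits into the same sum without changing the bound.
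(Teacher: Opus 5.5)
Your proposal is correct and follows the paper's proof essentially step for step. It uses the same induction on $n$ for the bound $\frac{2\alpha}{\log c}\cdot\log n\cdot\OPT(\cI)$, the split via Observation~\ref{obs:WCdecomposition}, the chain $\COSTA(D_{\cC})\le 2\alpha\,\COSTA(\sepcover{\cI_{\cU}})\le 2\alpha\,\OPT(\cI)$, and the exact cancellation from $n'\le n/c$. Two steps the paper leaves implicit you make explicit, and both are sound: invoking Lemma~\ref{lemma:subinstances_avg} for $\sum\OPT(\cI')\le\OPT(\cI)$ (where the paper cites Lemma~\ref{lemma:subspace_opt}), and checking that no large reply survives past the last set of $\cC$, because at least $n/(5\beta)$ covered hypotheses leave the heavy branch.
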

\begin{proof}
    Consider a call to \ProcDecisionTree on input $\cI=\br{\cH, \cT, \preceq}$ where $\blackBox$ is an $\br{\alpha,\beta}$-approximation algorithm for $\ProblemPCMSSC$.
    We argue by induction on $n$ that
    \begin{equation} \label{eq:D-average}
     \COSTA\br{D} \leq \frac{2\alpha}{\log\br{\frac{5\beta}{5\beta -1}}}\cdot \log n \cdot \OPT\br{\cI}.
    \end{equation}

    The base case when $n=1$ is trivial since the cost of the decision tree is 0.
    Assume by induction that for every $\cI' = \br{\cH', \cT, \preceq}$ such that $\spr{\cH'}<\spr{\cH}$ and $n' = \spr{\cH'}$ we have 
    \[
    \COSTA\br{D'} \leq \frac{2\alpha}{\log\br{\frac{5\beta}{5\beta -1}}}\cdot \log n' \cdot \OPT\br{\cI'}
    \]
    where $D'$ is the decision tree recursively returned by \ProcDecisionTree on input $\cI'$.
    % To point out the dependency of $D'$ on $\cH'$, we write $D'(\cH')$.
    We have
\begin{align*}
    \COSTA\br{D} \leq & \COSTA\br{D_{\cC}} + \sum_{\cH\in D\setminus D_{\cC}} \COSTA\br{D'} \\
    \leq & 2\alpha \cdot \COSTA\br{\sepcover{\cI_{\cU}}} + \sum_{\cH\in D\setminus D_{\cC}} \frac{2\alpha}{\log\br{\frac{5\beta}{5\beta -1}}}\cdot \log n' \cdot \OPT\br{\cI'} \\
    \leq & 2\alpha \cdot \OPT\br{\cI} + \frac{2\alpha}{\log\br{\frac{5\beta}{5\beta -1}}}\cdot \log \br{\frac{\br{{5\beta-1}}\cdot n}{5\beta}} \cdot \OPT\br{\cI} \\
    = & 2\alpha \cdot \OPT\br{\cI} + \frac{2\alpha}{\log\br{\frac{5\beta}{5\beta -1}}}\cdot \log n \cdot \OPT\br{\cI} - 2\alpha\cdot \OPT\br{\cI} \\
    = & \frac{2\alpha}{\log\br{\frac{5\beta}{5\beta -1}}}\cdot \log n \cdot \OPT\br{\cI},
\end{align*}
where the first inequality is by Observation~\ref{obs:WCdecomposition}, the second follows by the induction hypothesis and Observation~\ref{obs:DS-average}, the third follows by Observation~\ref{obs:sepcoverAC}, Lemma~\ref{lemma:subspace_opt} and the fact that $n'=\spr{\cH'} \leq \br{1-\frac{4}{5}\cdot\frac{1}{4\cdot\beta}}\cdot n=\frac{(5\beta -1)}{5\beta}\cdot n$ for every $\cH'\in D\setminus D_{\cC}$.
This concludes the proof of the theorem.
\end{proof}

Thus, by Theorem \ref{thm:PCMSSC-to-PCACDT} we obtain the following corollary:

\begin{corollary}\label{cor:PCACDT-special-cases}
    There exist poly-time algorithms for $\ProblemPCACDT$ with the following guarantees:
    \begin{itemize}
        \item $\cO\br{\log n}$-approximation for inforests, by Corollary~\ref{cor:BPCSC-inforest-to-PCMSSC}.
        \item $\cO\br{\log^2 n}$-approximation for outforests, by Theorem~\ref{thm:BPCSC-outforest}.
        \item $\cO\br{\sqrt{m}\cdot\log^{3/2} n}$-approximation for general precedence constraints, by Theorem~\ref{thm:BPCSC-to-PCMSSC} and Theorem~\ref{thm:BPCSC}.
    \end{itemize}
\end{corollary}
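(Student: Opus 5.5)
This corollary follows by plugging the covering-problem guarantees of later sections into Theorem~\ref{thm:PCMSSC-to-PCACDT}. That theorem turns an $(\alpha,\beta)$-approximation for $\ProblemPCMSSC$ into an $\cO\br{\frac{\alpha}{\log\br{\frac{5\beta}{5\beta-1}}}\cdot\log n}$-approximation for $\ProblemPCACDT$. Since $\log\br{\frac{5\beta}{5\beta-1}} = -\log\br{1-\frac{1}{5\beta}} \ge \frac{1}{5\beta}$, the factor is $\cO\br{\alpha\beta\log n}$. Whenever $\beta$ is a constant it is therefore just $\cO\br{\alpha\log n}$. For each bullet I will take the corresponding $\ProblemPCMSSC$ guarantee, check that $\beta=\cO(1)$, and multiply $\alpha$ by $\log n$.

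\emph{Inforests.} Corollary~\ref{cor:BPCSC-inforest-to-PCMSSC} gives a $\br{3421.7,\frac{e}{e-1}}$-approximation for $\ProblemPCMSSC$. Both $\alpha$ and $\beta$ are constants, so the ratio is $\cO(\log n)$.

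\emph{Outforests.} Theorem~\ref{thm:BPCSC-outforest} gives an $\br{\cO(\log n),4}$-approximation for $\ProblemPCMSSC$. With $\beta=4$ the denominator is $\log(20/19)$, a constant, so the ratio is $\cO(\log^2 n)$.

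\emph{General constraints.} Theorem~\ref{thm:BPCSC} gives a $\br{\sqrt{m H_n}+1,1}$-approximation for $\ProblemBPCSC$. Theorem~\ref{thm:BPCSC-to-PCMSSC} converts it into a $\br{\sqrt{m H_n}+1,1}$-approximation for $\ProblemPCMSSC$. With $\beta=1$ the denominator is $\log(5/4)$, giving $\cO\br{\sqrt{m\log n}\cdot\log n}=\cO\br{\sqrt m\log^{3/2}n}$, using $H_n=\Theta(\log n)$.

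One point needs care. Algorithm~\ref{alg:worstDecisionTree} calls $\blackBox$ on $(\cU,\xi(\cT),\preceq_\xi)$, so the special structure must survive the map $\xi$. Since $\preceq_\xi$ is isomorphic to $\preceq$, inforests and outforests are preserved. I also have to check that the number of sets stays at most $m$ and the universe size at most $n$, so the bounds are expressed in the original parameters. The one thing I expect to verify rather than just cite is that the $\ProblemPCMSSC$ results hold for $f=1/4$, as the reduction requires, and not only for $f=1$. The cited results are stated for general $f$, so this should be immediate.
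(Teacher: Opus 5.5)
Your proposal is correct and follows the paper's argument. You feed the $\ProblemPCMSSC$ guarantees from Corollary~\ref{cor:BPCSC-inforest-to-PCMSSC}, from Theorem~\ref{thm:BPCSC-outforest}, and from Theorem~\ref{thm:BPCSC} composed with Theorem~\ref{thm:BPCSC-to-PCMSSC} into Theorem~\ref{thm:PCMSSC-to-PCACDT}, and $\beta=\cO(1)$ makes $\log\br{\frac{5\beta}{5\beta-1}}$ a constant. Your extra checks are sound: the structure is preserved under $\xi$, the parameters do not grow in recursive calls, and $f=1/4$ is allowed. One small nit: Theorem~\ref{thm:BPCSC-to-PCMSSC} gives $\br{\cO\br{\sqrt{m H_n}},1}$ rather than exactly $\br{\sqrt{mH_n}+1,1}$, which does not matter inside the $\cO$.
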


\section{Max-Density Precedence-Closed Subfamily} \label{sec:MDPCS}

In order to solve $\ProblemPCSC$ and $\ProblemPCMSSC$, which is a requirement for our main results, we firstly solve two essential subproblems: $\ProblemMDPCS$ and $\ProblemBMDPCS$. We will then use such subroutines as an oracle in order to construct greedy, set-cover like approximation algorithms for our problems.
%An approximation algorithm for $\ProblemMDPCS$ can be used as an essential subroutine in our algorithms for PCSC and PCMSSC.
By \cite{PCMSSC}, the following solution to $\ProblemMDPCS$ on instance $\br{\cU,\cS,\preceq}$ provides an $\bigo\br{\sqrt{m}}$-approximation: pick $\cC$ to be the best solution among
\[
\argmax_{\closure{S}, S\in\cS} \brc{\Delta\br{\closure{S}}}
\]
and $\cS$.
We will refer to such $\cC$ as a \emph{greedy} solution.

When $\max_{S \in \cS} \Delta\br{\closure{S}} \geq 1$, then the approximation factor of the greedy can also be bounded by $\bigo\br{\sqrt{n}}$.
Additionally, we show that a slightly modified greedy rule achieves an $\budget$-approximation for the parametrized version of the problem, $\ProblemBMDPCS$.
Any family $\cC$ that satisfies
    \[
    \cC=\argmax_{\closure{S}, S \in \cS, \spr{\closure{S}}\leq \budget} \brc{\Delta\br{\closure{S}}}
    \]
will be called a \emph{$\budget$-greedy} solution to $\ProblemBMDPCS$.
\begin{theorem}\label{thm:BMDPCS-greedy}
    Let $\cC^*$ be an optimal solution to $\ProblemBMDPCS$ and let $\cC$ be $\budget$-greedy.
    Then, $\Delta\br{\cC} \geq \frac{\Delta\br{\cC^*}}{\budget}$.
\end{theorem}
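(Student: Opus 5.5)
Fix an optimal $\cC^*$ with $\spr{\cC^*}\leq\budget$; we may assume $\cC^*$ is nonempty and covers at least one item, since otherwise $\Delta\br{\cC^*}=0$ and there is nothing to prove. The plan is to compare $\cC$ with the single set in $\cC^*$ that covers the most items, and to pay for the budget twice: once when passing from all of $\cC^*$ to one of its sets, and once when passing to that set's closure.

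\textbf{Step 1 (averaging).} Since $\cov\br{\cC^*}=\bigcup_{S\in\cC^*}S$, some $S^*\in\cC^*$ satisfies
\[
\spr{S^*}\geq \frac{\spr{\cov\br{\cC^*}}}{\spr{\cC^*}}=\Delta\br{\cC^*}.
\]

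\textbf{Step 2 (closure fits in the budget).} Because $\cC^*$ is precedence-closed and contains $S^*$, we have $\closure{S^*}\subseteq\cC^*$. Hence $\spr{\closure{S^*}}\leq\spr{\cC^*}\leq\budget$, so $\closure{S^*}$ is one of the candidates over which the $\budget$-greedy rule maximises.

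\textbf{Step 3 (density of the candidate).} The closure contains $S^*$, so it covers at least $\spr{S^*}$ items, and it has at most $\budget$ members. Therefore
\[
\Delta\br{\closure{S^*}}\geq\frac{\spr{S^*}}{\spr{\closure{S^*}}}\geq\frac{\Delta\br{\cC^*}}{\budget}.
\]

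\textbf{Step 4 (conclude).} By the choice of $\cC$ as $\budget$-greedy, $\Delta\br{\cC}\geq\Delta\br{\closure{S^*}}\geq\Delta\br{\cC^*}/\budget$, which is the claim.

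Each step is elementary. The only point needing care is Step 2: precedence-closedness of $\cC^*$ is what guarantees that the candidate $\closure{S^*}$ respects the size bound, and without it the greedy rule might not consider this candidate at all. This is also why the bound loses only a factor $\budget$ rather than something depending on $m$.
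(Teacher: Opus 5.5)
Your proof is correct and follows essentially the same route as the paper's. The paper also uses precedence-closedness of $\cC^*$ to make every $\closure{S}$ with $S\in\cC^*$ a feasible $\budget$-greedy candidate of size at most $\spr{\cC^*}\leq\budget$. It then sums $\spr{\cov\br{\closure{S}}}\leq\Delta\br{\cC}\cdot\spr{\closure{S}}$ over all $S\in\cC^*$, which is just the summed form of your averaging step that selects the largest set $S^*$.
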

\begin{proof}
    We utilize a partial argument of \cite{PCMSSC} for $\ProblemMDPCS$. Let $k=\spr{\cC^*}\leq \budget$.
    For each $S\in \cC^*$, by the greedy rule we trivially have that $\Delta\br{\closure{S}}\leq \Delta\br{\cC}$.
    Therefore:
    \[
    \spr{\cov\br{\cC^*}} \leq \sum_{S\in\cC^*}\spr{\cov\br{\closure{S}}}\leq \Delta\br{\cC}\sum_{S\in\cC^*} \spr{\closure{S}} \leq \Delta\br{\cC} \cdot k^2,
    \]
    where the first inequality is by the definition of union. We have that:
    \[
    \frac{\Delta\br{\cC^*}}{\Delta\br{\cC}}=\frac{\spr{\cov\br{\cC^*}}}{k\cdot\Delta\br{\cC}}\leq k\leq \budget,
    \]
    which by rearranging gives the desired inequality.
\end{proof}
The above theorem gives a $\budget$-approximation algorithm for the $\ProblemBMDPCS$. Note that for large values of $\budget$ this might be much worse than the $\cO\br{\sqrt{m}}$-approximation provided for the non-parametrized version of the problem. In the original version of the above argument it could be shown that if one additionally considers a candidate solution which is the whole $\cS$, then the approximation of greedy is $\min\brc{k, m/k}=\sqrt{m}$. However, since we enforce additional condition on the size of the dense subfamily this might not be a feasible solution. Luckily, it turns out that the $\budget$-approximation will be sufficient for our further needs. We leave as an open question whether one can obtain a $o\br{m}$-approximation for $\ProblemBMDPCS$.

% Also, note that the above analysis works also when we associate non-uniform weights to elements of $\cU$ and define density with respect to the total weight of covered elements.

\section{Set covering with precedence constraints} \label{sec:SC}

% Similarly as in previous section we assume our input instance to be unweighted. Note that the analysis of the following procedures works also in the weighted setting, where every element $u\in\cU$ has a weight $w(u)$ and the goal is to maximize the total weight of covered elements or cover at least a given total weight of elements with minimal size of the cover. Since this generalization is straightforward we omit it for clarity.

Having obtained appropriate approximation algorithms for $\ProblemMDPCS$ and $\ProblemBMDPCS$, we now turn our attention towards approximating various set covering problems. In particular, we show how to obtain a bicriteria $\br{\bigo\br{\sqrt{m\cdot \log n}}, 1}$-approximation for $\ProblemBPCSC$. This can be then turned into a $\br{\bigo\br{\sqrt{m\cdot \log n}}, 1}$-approximation for $\ProblemPCSC$ and a $\br{\bigo\br{\sqrt{m\cdot \log n}}, O\br{1}}$-approximation for $\ProblemPCMSSC$. At the end of the section we also show several better guarantees for inforest and outforest precedence constraints.
\subsection{$\ProblemBPCSC$ via $\ProblemBMDPCS$} \label{subsection:BPCSC}

We start with the budgeted variant, which will become a useful primitive for the fractional variants.
The approximation algorithm is the greedy procedure shown in Algorithm~\ref{alg:BPCSC}.
For the analysis, let $\cC_i$ and $\cA_i$, $i\in[l]$, be the sets $\cC$ and $\cA$ obtained in the $i$-th iteration; $\cC_0=\emptyset$.
Note that $\cov\br{\cA_i,\cU\setminus\cov\br{\cC_{i-1}}}$ is the set of element in $\cU$ that are being covered in the $i$-th iteration. %, and for each $u\in\cov\br{\cA_i,\cU\setminus\cC_{i-1}}$ denote $c(u):=\Delta\br{\cA_i,\cU\setminus\cC_{i-1}}$.
\begin{algorithm}[htb!]
\DontPrintSemicolon
\LinesNumbered

\caption{The $\br{\sqrt{m\cdot H_n}+1, 1}$-approximation algorithm for $\ProblemBPCSC$}
\label{alg:BPCSC}
\SetKwProg{Proc}{procedure}{}{}
\Proc{$\ProcedureBPCSC(\cU, \cS, \preceq, \budget)$}{
%    \If{$\budget\geq \sqrt{m/H_n}$}
%    {
%        \Return $\cS$.
%    }
%  \Else{
  $\cC \gets \emptyset$.

  \While{$\spr{\cC} < \min\brc{\sqrt{m\cdot H_n}\cdot \budget, m}$}{

    $\cA \gets$ Run the $\budget$-approx. algorithm for $\ProblemBMDPCS$ on $(\cU \setminus \cov(\cC), \preceq, \cS\setminus\cC, \budget)$.

%    \ForEach{$u \in \cov\br{\cA, \cU-\cC}$}{
%        $c\br{u}\gets \Delta\br{\cA, \cU-\cC}$. \tcp*{for analysis only}
%    }
    $\cC \gets \cC \cup \cA$.
  }

  \Return $\cC$.
%  }
}
\end{algorithm}

First we bound the size of the set returned by the algorithm (cf. Lemma~\ref{lem:BPCSC-cost}) and then we estimate how much it covers (cf. Lemma~\ref{lem:BPCSC-coverage}).
\begin{lemma}\label{lem:BPCSC-cost}
    Let $\cC$ be the cover returned by Algorithm~\ref{alg:BPCSC}. Then, $\spr{\cC} \leq \br{\sqrt{m\cdot H_n}+1}\cdot \budget$.
\end{lemma}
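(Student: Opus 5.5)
The plan is to exploit the loop invariant of Algorithm~\ref{alg:BPCSC} together with a per-iteration bound on how much a single iteration can add. The loop condition is checked before each iteration, so at the moment the final iteration starts we have $\spr{\cC} < \min\brc{\sqrt{m\cdot H_n}\cdot \budget, m} \leq \sqrt{m\cdot H_n}\cdot\budget$. It therefore suffices to show that each iteration adds at most $\budget$ new sets to $\cC$. Summing these gives $\spr{\cC} < \sqrt{m\cdot H_n}\cdot\budget + \budget = \br{\sqrt{m\cdot H_n}+1}\cdot\budget$.

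For the per-iteration bound, note what the subroutine returns. The $\budget$-approximation for $\ProblemBMDPCS$ is the $\budget$-greedy rule of Theorem~\ref{thm:BMDPCS-greedy}, run on the residual instance $(\cU\setminus\cov\br{\cC}, \cS\setminus\cC, \preceq, \budget)$. Its output $\cA$ is by definition a closure $\closure{S}$ with $\spr{\closure{S}}\leq\budget$, where the closure is taken in the restricted poset on $\cS\setminus\cC$. Hence $\spr{\cA\setminus\cC}\leq\spr{\cA}\leq\budget$. This also shows that $\cC\cup\cA$ remains precedence-closed in $\cS$: every predecessor of a set in $\cA$ is either already in $\cC$ or lies in $\cS\setminus\cC$ and is therefore in $\cA$. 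This closedness is not needed for the size bound, but it confirms that the update is well defined.

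The only step needing care is to make sure the procedure is well defined and terminates, so that "the last iteration" makes sense. Each iteration must add at least one new set, which requires $\cA\neq\emptyset$. I would argue this as follows. Whenever the loop runs, $\spr{\cC}<m$, so $\cS\setminus\cC$ is nonempty. Its poset has a minimal element $S$ with $\closure{S}=\brc{S}$ of size $1\leq\budget$, so at least one feasible candidate exists, and the greedy maximizer is nonempty. It is also disjoint from $\cC$, since the subroutine draws only from $\cS\setminus\cC$.

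Thus $\spr{\cC}$ strictly increases in every iteration and the loop terminates after at most $m$ iterations. Combining the invariant before the last iteration with the increment of at most $\budget$ gives the claimed bound.
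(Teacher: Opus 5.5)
Your proof is correct and is essentially the paper's argument: the loop guard forces $\spr{\cC}<\sqrt{m\cdot H_n}\cdot\budget$ before the last iteration, and that iteration adds at most $\budget$ sets. The paper splits into the cases $\budget\geq\sqrt{m/H_n}$ (where the loop runs until $\cC=\cS$, so $\spr{\cC}=m\leq\sqrt{m\cdot H_n}\cdot\budget$) and $\budget<\sqrt{m/H_n}$. You handle both cases at once via $\min\brc{\sqrt{m\cdot H_n}\cdot\budget,m}\leq\sqrt{m\cdot H_n}\cdot\budget$, and you also justify termination and that $\cA\neq\emptyset$, which the paper leaves implicit.
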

\begin{proof}
    We consider two cases:
    \begin{enumerate}
        \item If $\budget\geq \sqrt{m/H_n}$, then for $\cC=\cS$ it holds: $\spr{\cC} = m \leq \sqrt{m\cdot H_n}\cdot \budget$.
        \item Else, when $\budget < \sqrt{m/H_n}$, we have that before the last iteration, that is in iteration $l-1$, of the while loop, $\spr{\cC_{l-1}} < \sqrt{m\cdot H_n}\cdot \budget$. Since in the last iteration we add to $\cC_{l-1}$ sets in a collection $\cA_l$ such that $\spr{\cA_l} \leq \budget$, we have that 
        \[
        \spr{\cC_l} \leq \sqrt{m\cdot H_n}\cdot \budget + \budget = \br{\sqrt{m\cdot H_n}+1}\cdot \budget.
        \]
    \end{enumerate}
\end{proof}

\begin{lemma}\label{lem:BPCSC-coverage}
    Let $\cC$ be the cover returned by Algorithm~\ref{alg:BPCSC}. Then, $\spr{\cov\br{\cC}} \geq \spr{\cov\br{\cCopt}}$.
\end{lemma}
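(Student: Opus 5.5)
We argue via a greedy charging scheme in the style of the classical set cover analysis, with the $\budget$-approximation of Theorem~\ref{thm:BMDPCS-greedy} supplying the per-iteration guarantee. Here $\cCopt$ denotes an optimal $\ProblemBPCSC$ solution with $\spr{\cCopt}\le\budget$. First we dispose of the trivial case. If the loop stops because $\spr{\cC}=m$, then $\cC=\cS$ and it covers everything. So assume that $\spr{\cC_l}\ge\sqrt{m\cdot H_n}\cdot\budget$ at termination, and suppose towards a contradiction that $\spr{\cov\br{\cC_l}}<\spr{\cov\br{\cCopt}}$.

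\emph{Per-iteration bound.} Fix an iteration $i$ and put $R_{i}=\cov\br{\cCopt}\setminus\cov\br{\cC_{i-1}}$. The family $\cCopt\setminus\cC_{i-1}$ is precedence-closed in $\cS\setminus\cC_{i-1}$, because $\cC_{i-1}$ is a union of precedence-closed families. It has size at most $\budget$ and covers $R_i$ in the residual universe. Hence the optimal $\ProblemBMDPCS$ density in iteration $i$ is at least $\spr{R_i}/\budget$. Theorem~\ref{thm:BMDPCS-greedy} then gives
\[
\Delta\br{\cA_i,\cU\setminus\cov\br{\cC_{i-1}}}\ge \frac{\spr{R_i}}{\budget^2}.
\]
Under our assumption, $\spr{R_i}\ge \spr{\cov\br{\cCopt}}-\spr{\cov\br{\cC_{i-1}}}$, which is positive throughout.

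\emph{Charging.} Assign each newly covered element $u$ in iteration $i$ the price $p(u)=1/\Delta\br{\cA_i,\cU\setminus\cov\br{\cC_{i-1}}}$. The total price paid in iteration $i$ is then exactly $\spr{\cA_i}$, so $\sum_u p(u)=\spr{\cC_l}$. Following the standard harmonic argument, order the elements of $\cov\br{\cCopt}$ as they become covered. While $j$ elements of $\cov\br{\cCopt}$ remain uncovered, we have $\spr{R_i}\ge j$, so each element newly covered in that iteration costs at most $\budget^2/j$. Summing this bound over the at most $n$ elements gives a total of at most $\budget^2 H_n$.

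\emph{Main obstacle.} The difficulty is that elements outside $\cov\br{\cCopt}$ are also covered and also absorb budget. Their prices are still bounded by $\budget^2/\spr{R_i}$, and there may be many of them, so the charging is not immediate. The way to handle this is to reorganize the argument by iterations. The fraction of $R_i$ that iteration $i$ removes cannot be controlled directly. Instead we use that iteration $i$ covers at least $\spr{\cA_i}\spr{R_i}/\budget^2$ new elements in total, and each such element reduces the deficit $\spr{\cov\br{\cCopt}}-\spr{\cov\br{\cC}}$ by one. Combining this with the hypothetical failure, the deficit $d_i$ satisfies $d_i\le d_{i-1}\br{1-\spr{\cA_i}/\budget^2}$, with $d_0\le n$. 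Since the deficit must remain at least $1$, this yields
\[
\sum_i\spr{\cA_i}\le \budget^2\ln n\le \budget^2 H_n .
\]
Together with $\spr{\cC_l}\ge\sqrt{m H_n}\,\budget$ this forces $\budget\ge\sqrt{m/H_n}$.

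For such $\budget$ we have $\sqrt{m H_n}\,\budget\ge m$, so the loop runs until $\cC=\cS$, which contradicts the assumed deficit. The delicate point is verifying the deficit recursion; in particular, the final iteration's overshoot must not break the inequality.
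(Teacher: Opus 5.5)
Your proof is correct, and it takes a different route from the paper's at the accounting step.

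\textbf{What is shared.} Both arguments rest on the same per-iteration density bound, $\Delta\br{\cA_i,\cU\setminus\cov\br{\cC_{i-1}}}\ge \spr{R_i}/b^2$.

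\textbf{The paper's route.} The paper uses the classical pricing argument. It imagines the loop running until the first iteration $t$ with $\spr{\cov\br{\cC_t}}\ge\spr{\cov\br{\cCopt}}$. It then bounds the $j$-th price by $b^2/(k-j+1)$ and sums to get $\spr{\cC_{t-1}}\le b^2 H_n$. Adding $\spr{\cA_t}\le b$ and using $b<\sqrt{m/H_n}$, it concludes that the threshold $\sqrt{m H_n}\,b$ is not reached too early.

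\textbf{Your route.} You argue by contradiction over the actual run and track the count-deficit $d_i=\spr{\cov\br{\cCopt}}-\spr{\cov\br{\cC_i}}$. You observe that every newly covered element, inside $\cov\br{\cCopt}$ or not, lowers $d_i$ by one. Together with $\spr{R_i}\ge d_{i-1}$ this gives multiplicative decay and $\spr{\cC_l}\le b^2\ln n$.

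\textbf{Your ``main obstacle'' is real.} It is exactly where the paper's write-up is loose. The paper orders only the elements of $\cov\br{\cCopt}$, yet identifies the sum of their prices with $\spr{\cC_{t-1}}$. That identification ignores elements outside $\cov\br{\cCopt}$, which also absorb prices. The pricing proof is repaired by ordering all elements of $\cov\br{\cC_{t-1}}$ instead. When the $j$-th of these is covered, the deficit is still at least $k-j+1$, and that is the discrete form of your recursion.

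\textbf{What each version buys.} Your version makes this observation explicit and avoids running the loop hypothetically past termination. It also covers the last iteration, so there is no additive $+b$ term. The paper's version keeps the familiar per-element pricing form that it reuses elsewhere.

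\textbf{The ``delicate point'' is not delicate.} Under the contradiction hypothesis, $d_{i-1}\ge d_l\ge 1$ for every $i\le l$. So the recursion applies in the final iteration as well. Each factor also satisfies $1-\spr{\cA_i}/b^2\ge 1-1/b\ge 0$.

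\textbf{Two small clean-ups.} First, drop your initial harmonic-charging paragraph; it contains precisely the gap you then identify. Second, precedence-closedness of $\cCopt\setminus\cC_{i-1}$ in the residual poset follows from closedness of $\cCopt$ alone, not from properties of $\cC_{i-1}$.
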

\begin{proof}
    If $\budget \geq \sqrt{m/H_n}$, then the algorithm returns $\cC_l=\cS$ and trivially covers at least as many elements as $\cCopt$.

    Otherwise, we assume that the while loop is being executed until the number of covered elements is at least $\spr{\cov\br{\cCopt}}$ and we bound the cost of the cover constructed up to that point by $\br{\sqrt{m\cdot H_n}+1}\cdot \budget$. By doing so, we show that when the cost of the cover $\cC$ exceeds $\sqrt{m\cdot H_n}\cdot \budget$, the number of covered elements is at least $\spr{\cov\br{\cCopt}}$.
    
    %Let $\cC_0 = \emptyset$ and let $\cC_i$ be the cover after the $i$-th iteration of the while loop, for $i \geq 1$.
    Let $R_i = \cU \setminus \cov\br{\cC_i}$ be the set of uncovered elements after the $i$-th iteration; take $R_0=\emptyset$.
    %Let $\cA_i$ be the set added to $\cC_{i-1}$ in the $i$-th iteration, so that $\cC_i = \cC_{i-1} \cup \cA_i$.
    We have that $\spr{\cC_i} = \sum_{j=1}^{i} \spr{\cA_j}$. For any covered element $u \in \cU$, let $i(u)$ be the first iteration in which $u$ is covered, i.e., $u \in \cov\br{\cA_{i(u)}, R_{i(u)-1}}$. We set the \emph{price} of $u$ to be 
    \[c(u) = \frac{\spr{\cA_{i(u)}}}{\spr{\cov\br{\cA_{i(u)}, R_{i(u)-1}}}}.
    \]

    Let $t$ be the index of the first iteration of the while loop when $\spr{\cov\br{\cC_{t}}}\geq\spr{\cov\br{\cCopt}}$.
    (Note that such $t$ exists since we are analyzing Algorithm~\ref{alg:BPCSC} under the assumption that the loop works indefinitely, until the number of covered elements is large enough).
    Let $k=\spr{\cov\br{\cC_{t-1}}}\leq\spr{\cov\br{\cCopt}}$.
    Order the elements of $\cov\br{\cCopt}$ as $u_1, u_2, \ldots, u_k$ in the order in which they are covered by the algorithm with ties broken arbitrarily (i.e., if $i(u_j)<i(u_{j'})$, then $j<j'$).

    \begin{claim}\label{lem:cost-per-element}
        For each $j \in \brc{1, \ldots, k}$ 
        \[
        c(u_j) \leq \frac{\budget^2}{k-j+1}.
        \]
    \end{claim}
    \begin{proof}
     Consider the iteration $i(u_j)$ in which $u_j$ is covered. Since $\cCopt$ is a precedence-closed family, we know that during iteration $i(u_j)$, there exists a precedence-closed family $\cCopt\cap R_{i(u_j)-1}$ that covers at least $k-j+1$ elements of $R_{i(u_j)-1}$ and has size at most $\spr{\cCopt}\leq \budget$. Since the algorithm selects a $\budget$-approximation to $\ProblemBMDPCS$ during iteration $i(u_j)$, we have that:
    \[
    \Delta\br{\cA_{i(u_j)}, R_{i(u_j)-1}} \geq \frac{\Delta\br{\cCopt, R_{i(u_j)-1}}}{\budget} \geq \frac{k-j+1}{\spr{\cCopt} \cdot \budget} \geq \frac{k-j+1}{\budget^2},
    \] 
    where the first inequality is by the approximation guarantee of the algorithm and the greedy choice, the second inequality is by definition of density, and the last inequality is by the budget constraint on $\cCopt$. Rearranging the above inequality yields:
    \[
    c(u_j) = \frac{\spr{\cA_{i(u_j)}}}{\spr{\cov\br{\cA_{i(u_j)}, R_{i(u_j)-1}}}} = 
    \frac{1}{\Delta\br{\cA_{i(u_j)}, R_{i(u_j)-1}}} \leq \frac{\budget^2}{k-j+1}
    \]
    and the claim follows.
    \end{proof}

    Using Claim~\ref{lem:cost-per-element}, the sum of prices of all elements in $\cC_{t-1}$ is bounded by $\sum_{j=1}^k c(u_j)\leq \budget^2\cdot H_k$.
    By the price definition, $\sum_{j=1}^k c(u_j)=\spr{\cC_{t-1}}$.
    Since $\spr{\cA_t}\leq \budget$, we get 
    \[\spr{\cC_t}\leq\spr{\cC_{t-1}} + \spr{\cA_t} \leq \budget^2\cdot H_n+\budget.
    \]
    Recall that we are considering the case when $\budget\leq\sqrt{m/H_n}$, which gives $\spr{\cC_t}\leq \br{\sqrt{m\cdot H_n}+1}\cdot\budget$.
    This completes the proof of the lemma.
%     \begin{align*}
%         \spr{\cC} &\leq \spr{\cC_{t-1}} + \spr{\cA_t}
%     \\&\leq
%     \sum_{u \in \cov\br{\cC_{t-1}}} c(u) + \budget
%     \\&\leq
%     \sum_{j=1}^{k} \frac{\budget^2}{k-j+1} + \budget
%     \\&=
%     \budget^2 \cdot H_k + \budget
%     \\&\leq
%     \budget^2 \cdot H_n + \budget
%     \\&\leq
%     \br{\sqrt{m\cdot H_n}+1}\cdot \budget,
%     \end{align*}
%     where the first inequality is by construction of $\cC$, the second inequality is by definition of the cost of covered elements and the fact that for every $1\leq i\leq t$, $\spr{\cA}\leq \budget$, the third inequality is by the previous lemma, the fourth equality is by definition of harmonic numbers, the fifth inequality is because trivially $k \leq n$, and the last inequality is by the assumption that $\budget < \sqrt{m/H_n}$.
\end{proof}

Lemmas \ref{lem:BPCSC-cost} and \ref{lem:BPCSC-coverage} prove the following.
\begin{theorem} \label{thm:BPCSC}
    There exists a polynomial-time $\br{\sqrt{m\cdot H_n}+1, 1}$-bicriteria approximation algorithm for $\ProblemBPCSC$, where $n=\spr{\cU}$ and $m=\spr{\cS}$. That is, the algorithm returns a solution $\cC$ such that $\spr{\cov\br{\cC}}\geq \OPT\br{\cI}$ and $\spr{\cC}\leq \br{\sqrt{m\cdot H_n}+1}\cdot \budget$.
\end{theorem}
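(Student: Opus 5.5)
The plan is to run Algorithm~\ref{alg:BPCSC}, instantiated with the $\budget$-greedy rule of Theorem~\ref{thm:BMDPCS-greedy} as the $\ProblemBMDPCS$ subroutine, and to verify three properties separately: polynomial running time, the budget bound, and the coverage bound. The budget and coverage bounds are exactly Lemmas~\ref{lem:BPCSC-cost} and~\ref{lem:BPCSC-coverage}, so the theorem follows by combining them.

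\emph{Running time.} We use the $\budget$-greedy rule: for every remaining $S\in\cS\setminus\cC$ with $\spr{\closure{S}}\leq\budget$ in the residual poset, compute $\Delta\br{\closure{S},\cU\setminus\cov\br{\cC}}$ and take the best. This costs polynomial time in $n$ and $m$. The loop terminates after at most $m$ iterations, provided each iteration adds at least one new set. This holds whenever some remaining set has closure of size at most $\budget$. To guarantee it, the while-loop condition should also halt when no such candidate exists, which is harmless: no further feasible addition is available, and the analysis below accounts for it. With the min with $m$ in the guard, the loop stops once $\cC=\cS$.

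\emph{Budget.} Lemma~\ref{lem:BPCSC-cost} gives $\spr{\cC}\leq\br{\sqrt{m\cdot H_n}+1}\cdot\budget$. The reason is that the loop only runs while $\spr{\cC}<\sqrt{m H_n}\,\budget$, and each iteration adds at most $\budget$ sets. In the case $\budget\geq\sqrt{m/H_n}$ we have the trivial bound $m\leq\sqrt{mH_n}\,\budget$.

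\emph{Coverage.} Lemma~\ref{lem:BPCSC-coverage} supplies the key argument, a price/charging scheme in the spirit of greedy Set Cover. At any iteration where fewer than $\spr{\cov\br{\cCopt}}$ elements are covered, the residual family $\cCopt\setminus\cC$ is precedence-closed in the residual poset. It has size at most $\budget$ and covers at least $k-j+1$ new elements. The $\budget$-approximation of Theorem~\ref{thm:BMDPCS-greedy} then forces price at most $\budget^2/(k-j+1)$. Summing gives total size at most $\budget^2 H_n+\budget\leq\br{\sqrt{mH_n}+1}\budget$ when $\budget<\sqrt{m/H_n}$. So the loop cannot stop before reaching $\OPT\br{\cI}$ covered elements.

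The main obstacle is making sure the residual instance passed to the subroutine keeps $\cCopt\setminus\cC$ as a feasible precedence-closed candidate, so that the density comparison in the claim is legitimate. I would argue this by noting that removing an already chosen precedence-closed family $\cC$ leaves $\cCopt\setminus\cC$ closed under the induced order. In that residual order, closures are computed relative to $\cS\setminus\cC$, so each closure is only smaller and stays within budget.
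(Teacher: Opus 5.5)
Your proposal is correct and follows the paper's route: the paper proves the theorem by combining Lemmas~\ref{lem:BPCSC-cost} and~\ref{lem:BPCSC-coverage} for Algorithm~\ref{alg:BPCSC}, run with the $\budget$-greedy subroutine of Theorem~\ref{thm:BMDPCS-greedy}. Your extra halting rule is harmless but never fires, because a minimal set of the residual poset always has closure of size $1\le\budget$; so every iteration adds at least one set, and the loop runs at most $m$ times.
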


\subsection{$\ProblemPCSC$ via $\ProblemMDPCS$}\label{subsection:PCSC}
Let $k=f\cdot n$.
In order to prove the main theorem of this section consider a greedy procedure shown as Algorithm~\ref{alg:PCSC}.
We note a subtle fact of finishing the loop once $k/2$ elements are covered and not iterating till reaching $k$.
This is because the last set $\cA$ may be of unbounded size but we argue that the size of an optimal solution to the instance from the last iteration is lower-bounded by $\Omega(k)$, which is enough to obtain the required approximation ratio.
\begin{algorithm}[h]
\DontPrintSemicolon
\LinesNumbered
\caption{A $\br{4\gamma/f, 2}$-approximate greedy algorithm for $\ProblemPCSC$}
\label{alg:PCSC}
\SetKwProg{Proc}{procedure}{}{}
\Proc{$\ProcedurePCSC(\cU, \cS, \preceq, k)$}{
  $\cC \gets \emptyset$\;

  \While{$\spr{\cov\br{\cC}} < k/2$}{

    $\cA \gets$ Run the $\gamma$-approx. algorithm for $\ProblemMDPCS$ on $(\cU \setminus \cov(\cC), \preceq, \cS\setminus\cC)$\;

    $\cC \gets \cC \cup \cA$\;
  }

  \Return $\cC$
}
\end{algorithm}

% \begin{algorithm}[h]
% \DontPrintSemicolon
% \LinesNumbered
% \caption{The $\gamma$-greedy algorithm for PCSC}
% \label{alg:PCSC}
% \SetKwProg{Proc}{procedure}{}{}
% \Proc{\textsc{PCSC}$(\cU, \cS, \mathcal{F}, K)$}{
% $\cC \gets \emptyset$\;
%
% \While{$\spr{\cov\br{\cC, \cU}} < K$}{
% $\cA \gets$ Run the $\gamma$-approx. algorithm for MDPCS on $(\cU - \cC, \cS-\cC, \mathcal{F}-\cC, m)$\;
%
% \If{$\spr{\cov\br{\cC\cup \cA, \cU}} \geq K$}
% {
%     Find the minimum budget $B \in [\spr{\cA}]$, such that the $\gamma$-approx. algorithm for MDPCS on $(\cU - \cC, \cS-\cC, \mathcal{F}-\cC, B)$ returns a set
%     $\mathcal{B}$ with $\spr{\cov\br{\mathcal{B}, \cU-\cC}} \geq \frac{K-\cov\br{\cC, \cU}}{\alpha}$\;
%
%     \While{$\spr{\cov\br{\cC, \cU}} < K$}{
%         $\mathcal{B} \gets$ Run the $\gamma$-approx. algorithm for MDPCS on $(\cU - \cC, \cS-\cC, \mathcal{F}-\cC, B)$\;
%
%         $\cC\gets \cC \cup \mathcal{B}$\;
%     }
%
%     \Return $\cC$
% }
% \ForEach{$u \in \cov\br{\cA, \cU-\cC}$}{
%     $c\br{u}\gets \Delta\br{\cA, \cU-\cC}$\;
% }
% $\cC \gets \cC \cup \cA$\;
% }
% }
% \end{algorithm}

\begin{theorem} \label{thm:MDPCStoPCSC}
    If there exists a polynomial-time $\gamma$-approximation algorithm for $\ProblemMDPCS$, then there exists a polynomial-time bicriteria $\br{4\gamma/f, 2}$-approximation algorithm for $\ProblemPCSC$.
\end{theorem}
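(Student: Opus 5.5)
The plan is to analyse Algorithm~\ref{alg:PCSC} with the given $\gamma$-approximation for $\ProblemMDPCS$ as the oracle, using a price-charging argument in the style of the greedy \textsc{Set Cover} analysis.

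\textbf{Setup and feasibility.} Let $\cCopt$ be an optimal solution to the $\ProblemPCSC$ instance and write $\OPT=\spr{\cCopt}$. Recall $k=f\cdot n$, so $\spr{\cov\br{\cCopt}}\geq k$. The loop stops exactly when $\spr{\cov\br{\cC}}\geq k/2=\frac{f}{2}\cdot n$, which is the factor $2$ loss on coverage. The output is precedence-closed, being a union of families each closed in the residual order on $\cS\setminus\cC$ (with $\cC$ itself already closed). Termination is immediate: while fewer than $k$ elements are covered, some set of $\cCopt$ covers a new element, so the optimal residual density is positive. Hence the oracle returns a family $\cA$ of positive residual density, and each iteration covers at least one new element.

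\textbf{Density lower bound per iteration.} Consider any iteration $i$ that starts with $\spr{\cov\br{\cC_{i-1}}}<k/2$. Take the family $\cCopt\setminus\cC_{i-1}$, which I expect to be the only point needing care.
\begin{itemize}
\item It is precedence-closed in $(\cS\setminus\cC_{i-1},\preceq)$: any predecessor of one of its members lies in $\cCopt$, and if it is not in $\cC_{i-1}$ it belongs to the family.
\item It has size at most $\OPT$.
\item It covers at least $k-\spr{\cov\br{\cC_{i-1}}}>k/2$ elements of $R_{i-1}=\cU\setminus\cov\br{\cC_{i-1}}$.
\end{itemize}
Its residual density therefore exceeds $k/(2\OPT)$, and the $\gamma$-approximation gives
\[
\Delta\br{\cA_i,R_{i-1}}\geq \frac{k}{2\gamma\,\OPT}.
\]
Equivalently, $\spr{\cA_i}\leq \frac{2\gamma\,\OPT}{k}\cdot\spr{\cov\br{\cA_i,R_{i-1}}}$, i.e.\ each newly covered element carries price at most $2\gamma\,\OPT/k$.

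\textbf{Summing the cost.} Let $l$ be the last iteration. I split the total cost into two parts.
\begin{itemize}
\item Iterations $1,\dots,l-1$ together newly cover $\spr{\cov\br{\cC_{l-1}}}<k/2$ elements. Their total size is therefore at most $\frac{k}{2}\cdot\frac{2\gamma\OPT}{k}=\gamma\,\OPT$.
\item The last iteration can cover many elements and cannot be charged against $k/2$. Its density bound still holds, because the loop condition held at its start. It covers at most $n$ new elements, so $\spr{\cA_l}\leq n\cdot\frac{2\gamma\OPT}{k}=\frac{2\gamma}{f}\OPT$.
\end{itemize}
Altogether
\[
\spr{\cC}\leq \gamma\,\OPT+\frac{2\gamma}{f}\OPT\leq \frac{3\gamma}{f}\OPT\leq\frac{4\gamma}{f}\OPT,
\]
using $f\leq 1$. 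This is the claimed $\br{4\gamma/f,2}$ guarantee, and polynomial time follows since there are at most $n$ iterations, each making one oracle call.

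The main obstacle is the unbounded last family $\cA_l$. The paper's remark suggests controlling it via an $\Omega(k)$ lower bound on the residual optimum, but bounding its newly covered elements crudely by $n$ and absorbing the resulting $1/f$ already suffices here.
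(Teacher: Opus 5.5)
Your proof is correct and follows essentially the same route as the paper. You compare each $\cA_i$ with the residual family $\cCopt\setminus\cC_{i-1}$, which is precedence-closed in $\cS\setminus\cC_{i-1}$, has size at most $\spr{\cCopt}$ and covers more than $k/2$ uncovered elements, and you handle the last iteration separately via the crude bound $n$. The only difference is that you charge iterations $1,\dots,l-1$ to the fewer than $k/2$ elements they actually cover rather than to $n$, which gives the slightly sharper total $\br{1+\frac{2}{f}}\gamma\cdot\spr{\cCopt}\leq\frac{3\gamma}{f}\cdot\spr{\cCopt}$ instead of the paper's $\frac{4\gamma}{f}\cdot\spr{\cCopt}$.
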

\begin{proof}
Denote by $\cA_i$ the set $\cA$ from the $i$th iteration of Algorithm~\ref{alg:PCSC}, $i\in[l]$.
For brevity let $\cC_i$, $i\in[l]$, be the set $\cC$ obtained in the $i$th iteration, $\cC_i=\bigcup_{j=1}^i \cA_j$.
Let $R_i=\cU\setminus\cov\br{\cC_i}$ for $i\in[l]$ and $R_0=\cU$, $\cC_0=\emptyset$.
Denote by $\cI=(\cU,\cS,\preceq,k)$ any input instance to $\ProblemPCSC$, and let $\cCopt$ be an optimal solution to $\ProblemPCSC$ on $\cI$, i.e., $\cov\br{\cCopt}\geq k$ and $\spr{\cCopt}=\optPCSC{\cI}$.

Note that for each $i\in[l]$, $\cCopt\setminus\cC_{i-1}\neq\emptyset$ because otherwise $\cC_{i-1}\subseteq\cCopt$ which means that $\spr{\cC_{i-1}}\geq\spr{\cCopt}\leq k$ contradicting the fact that the algorithm conducted the $l$-th iteration.
Hence, $\cCopt\setminus\cC_{i-1}$ is a precedence-closed family of density not greater than the density of an optimal solution to $\ProblemMDPCS$ for the input provided in the $i$-th iteration, $i\in[l]$.
Thus, 
\[
\Delta(\cA_i,R_{i-1}) \geq \frac{\Delta(\cCopt\setminus\cC_{i-1},R_{i-1})}{\gamma},
\]
which gives by definition of $\Delta$:
\[
\frac{\spr{\cov\br{\cA_i,R_{i-1}}}}{\spr{\cA_i}} \geq \frac{1}{\gamma}\cdot\frac{\spr{\cov\br{\cCopt,R_{i-1}}}}{\spr{\cCopt\setminus\cC_{i-1}}}, \quad i\in[l],
\]
which further implies:
\begin{equation} \label{eq:Ai}
\spr{\cA_i} \leq \gamma \cdot \frac{ \spr{\cov\br{\cA_i,R_{i-1}}} \cdot \spr{\cCopt\setminus\cC_{i-1}} }{ \spr{\cov\br{\cCopt,R_{i-1}}} }
            \leq \gamma \cdot \frac{ \spr{\cov\br{\cA_i,R_{i-1}}} }{ \spr{\cov\br{\cCopt,R_{i-1}}} } \cdot \spr{\cCopt}.
\end{equation}
For each $i\in[l-1]$ it holds $\spr{\cov\br{\cCopt,R_{i-1}}}\geq k/2$.
Thus,
\[
\sum_{i=1}^{l-1}\spr{\cA_i} \leq \frac{2\gamma}{k} \cdot \spr{\cCopt} \cdot \sum_{i=1}^{l-1}\spr{\cov\br{\cA_i,R_{i-1}}} \leq \frac{2\gamma n}{k} \cdot \spr{\cCopt}.
\]
For the last iteration $\spr{\cov\br{\cA_l,R_{l-1}}}\leq n$ (potentially $\cA_l$ may be of size unbounded by a function of $k$) and $\spr{\cov\br{\cCopt,R_{l-1}}}\geq k/2$.
Hence by \eqref{eq:Ai} we get $\spr{\cA_l}\leq \frac{2\gamma n}{k}\spr{\cCopt}$.
This gives 
\[
\spr{\cC_l}=\sum_{i=1}^l\spr{\cA_i}\leq\frac{4\gamma n}{k}\cdot\spr{\cCopt} = \frac{4\gamma}{f}\cdot\spr{\cCopt}.
\]
Moreover, $\spr{\cov\br{\cC_l}}\geq k/2$, which means the algorithm covers at least half of the required elements.
This proves theorem.
\end{proof}

Additionally, observe that if one has an $\br{\alpha, \beta}$-approximation algorithm for $\ProblemBPCSC$, then one can obtain an $\br{\alpha, \beta}$-approximation algorithm for $\ProblemPCSC$ by simply guessing the optimal budget $\budget^*=\spr{\cCopt}$ and running the $\ProblemBPCSC$ algorithm with budget $\budget^*$. We have the following corollary.
\begin{corollary}\label{cor:BPCSC-to-PCSC}
    If there exists a polynomial-time $(\alpha,\beta)$-approximation algorithm for $\ProblemBPCSC$, then there exists a polynomial-time $(\alpha,\beta)$-approximation algorithm for $\ProblemPCSC$.
\end{corollary}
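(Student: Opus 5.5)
The plan is the one already sketched before the statement: guess the optimal budget. Let $\cI=(\cU,\cS,\preceq,f)$ be an instance of $\ProblemPCSC$ with optimum $\cCopt$ and $\budget^*=\spr{\cCopt}\in[m]$. For each candidate budget $\budget\in\brc{1,\ldots,m}$ we run the given $(\alpha,\beta)$-approximation for $\ProblemBPCSC$ on $(\cU,\cS,\preceq,\budget)$ and obtain a precedence-closed family $\cC_\budget$. We then output $\cC_{\budget}$ for the smallest $\budget$ such that $\spr{\cov\br{\cC_\budget}}\geq f\cdot n/\beta$. This is $m$ black-box calls, so the running time is polynomial.

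For correctness we look at the run with $\budget=\budget^*$. The family $\cCopt$ is feasible for the $\ProblemBPCSC$ instance with budget $\budget^*$, since it is precedence-closed and has size $\budget^*$. Hence the optimum of that instance covers at least $\spr{\cov\br{\cCopt}}\geq f\cdot n$ items. By the bicriteria convention, $\cC_{\budget^*}$ covers at least a $1/\beta$ fraction of the $\ProblemBPCSC$ optimum, so $\spr{\cov\br{\cC_{\budget^*}}}\geq f n/\beta$. It also satisfies $\spr{\cC_{\budget^*}}\leq \alpha\budget^*$. So the threshold test succeeds at $\budget^*$, and the selected index satisfies $\budget\leq\budget^*$. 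The output therefore has size at most $\alpha\budget\leq\alpha\cdot\OPT\br{\cI}$ and covers at least an $f/\beta$ fraction of $\cU$. It is precedence-closed because every output of the $\ProblemBPCSC$ routine is. This is exactly an $(\alpha,\beta)$-approximation for $\ProblemPCSC$ under the paper's convention.

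The only point needing care is the step marked as the expected obstacle: we must not rely on knowing $\budget^*$, and a larger guess could give a larger cover. Taking the minimal budget that passes the coverage test handles both issues. Any passing index $\budget$ has size bound $\alpha\budget$, and the minimal passing one is at most $\budget^*$. No monotonicity of the black box in $\budget$ is needed.
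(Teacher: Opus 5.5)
Your proof is correct and follows the paper's approach: guess the optimal budget $\budget^*=\spr{\cCopt}$ and run the $\ProblemBPCSC$ routine with it. Your enumeration over all $\budget\in\{1,\ldots,m\}$, keeping the smallest budget whose output covers at least $f\cdot n/\beta$ items, is a careful explicit implementation of that guess, which the paper leaves implicit.
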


\subsection{$\ProblemPCMSSC$ via $\ProblemBPCSC$} \label{subsection:PCMSSC}
We show how to use an $(\alpha,\beta)$-approximation algorithm for $\ProblemBPCSC$ to obtain an $\br{\bigo\br{\alpha\cdot\beta^3}, \beta}$-approximation for $\ProblemPCMSSC$. This result seems counterintuitive at first glance, since we are using an algorithm for a problem where the budget denotes the maximal allowed size of the cover to solve a problem where the budget denotes the sum of covering times. Our algorithm and its analysis are a generalization of the approach of \cite{ApproxAlgsForOptDTsAndAdapTSPProblems}. Furthermore, their result is based on techniques used for the \textsc{Minimum Latency Travelling Salesman Problem} \cite{PathsTreesMinimumLatencyTours,KTravelingRepairmenProblem}.

Let $1\geq a=\frac{3\beta-1}{3\beta-2}\leq 2$.
Consider the Algorithm~\ref{alg:fPCMSSC}.
% Denote by $\cCopt$ any optimal solution to $\ProblemPCMSSC$ on the input $(\cU, \cS, \preceq, f)$.
Let $l^*$ be such that $a^{l^*-1} < \spr{\cCopt} \leq a^{l^*}$.
\begin{algorithm}[h]
\DontPrintSemicolon
\LinesNumbered
\caption{The algorithm for $\ProblemPCMSSC$ based on $\br{\alpha, \beta}$-approximation for $\ProblemBPCSC$}
\label{alg:fPCMSSC}
\SetKwProg{Proc}{procedure}{}{}
\Proc{$\ProcedureFPCMSSC(\cU, \cS, \preceq, f)$}{
    $\cQ\gets \emptyset$.

    \For{$1\leq l\leq \cl{\log_a{\spr{\cS}}}$}{
        $U\gets \cU$.

        $\pi_l\gets \emptyset$.

        \For{$1\leq i\leq l$}{
            $\tau_l^i\gets $ Run the $\br{\alpha, \beta}$-approx. algorithm for $\ProblemBPCSC$ on $(U, \preceq, \cS, \budget=a^{i+1})$.

            $U \gets U \setminus \cov(\tau_l^i)$.

            $\pi_l \gets \pi_l \circ \tau_l^i$.
        }
        Extend $\pi_l$ by appending to it arbitrary sets to ensure that $\spr{\pi_l} \geq \min\brc{\alpha\cdot \beta\cdot a^l, m}$.

        $\sigma_l \gets $ Run the $\br{\alpha, \beta}$-approx. algorithm for $\ProblemBPCSC$ on $(\cU, \preceq, \cS, \budget=a^l)$.

        $\cQ \gets \cQ \cup \brc{\pi_l \circ \sigma_l}$.
    }
    \Return $\argmin_{\pi\in \cQ} \brc{\coverageTime\br{\pi,\cU}}$.
}
\end{algorithm}

We start with a lemma that relates the coverage times and the lengths of $\pi_{l^*}$ and $\cCopt$.
    \begin{lemma}\label{lem:pi-spread-cost}
        We have that $\spr{\pi_{l^*}} = \Theta\br{\alpha \cdot \beta}\cdot \spr{\cCopt}$ and $\coverageTime\br{\pi_{l^*},\cU} = \bigo\br{\alpha\cdot\beta^3}\cdot \coverageTime\br{\cCopt,\cU}$.
    \end{lemma}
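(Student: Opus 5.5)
The plan follows the minimum-latency style argument of \cite{ApproxAlgsForOptDTsAndAdapTSPProblems}: compare $\pi_{l^*}$ phase by phase with prefixes of $\cCopt$. I treat $a$ as a constant in $(1,2]$ with $a-1=\Theta(1/\beta)$, since $a=\frac{3\beta-1}{3\beta-2}$ and the stated bounds $1\leq a\leq 2$ are a typo for $1<a\leq2$.

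\textbf{Length bound.} Each $\tau_{l^*}^i$ is returned by the $(\alpha,\beta)$-approximation for $\ProblemBPCSC$ with budget $a^{i+1}$, so $\spr{\tau_{l^*}^i}\leq \alpha a^{i+1}$. Summing the geometric series gives
\[
\spr{\pi_{l^*}}\leq \alpha\sum_{i\leq l^*}a^{i+1}=\cO\br{\tfrac{\alpha a^{l^*+2}}{a-1}} .
\]
The padding step forces $\spr{\pi_{l^*}}\geq\min\brc{\alpha\beta a^{l^*},m}$. Since $a^{l^*}=\Theta(\spr{\cCopt})$ and $1/(a-1)=\Theta(\beta)$, this yields $\spr{\pi_{l^*}}=\Theta(\alpha\beta)\spr{\cCopt}$. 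The padding only increases the length, so it costs no more than this.

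\textbf{Coverage-time bound.} For each $i$, let $\cCopt_i=\pref{\cCopt}{\min\brc{a^{i+1},\spr{\cCopt}}}$, which is precedence-closed with $\spr{\cCopt_i}\leq a^{i+1}$. Let $n_i$ be the number of elements still uncovered by $\cCopt$ after $a^i$ sets, and $u_i$ the number left uncovered by $\pi$ after phase $i$. By the standard layering identity,
\[
\coverageTime(\cCopt)\gtrsim\sum_i (a^{i}-a^{i-1}) n_i ,
\]
while $\coverageTime(\pi_{l^*})\lesssim \sum_i \alpha a^{i+2}u_{i-1}$ up to constants, because every element still uncovered after phase $i-1$ waits at most the total length of phases $1,\dots,i$, which is $\cO(\alpha a^{i+2}/(a-1))$.

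The key step is a recurrence relating $u_i$ to $n_i$. In phase $i$ the sets $\cCopt_i$ still cover at least $u_{i-1}-n_{i+1}$ of the currently uncovered $U$. The $\beta$-guarantee therefore gives
\[
u_i\leq u_{i-1}-\tfrac1\beta\br{u_{i-1}-n_{i+1}}=\br{1-\tfrac1\beta}u_{i-1}+\tfrac1\beta n_{i+1}.
\]
Unrolling yields $u_i\leq \frac1\beta\sum_{j\leq i}(1-\frac1\beta)^{i-j}n_{j+1}$. Substituting into the cost sum and swapping summation order, the coefficient of $n_j$ becomes
\[
\frac{\alpha}{\beta}\sum_{i\geq j} a^{i+2}\br{1-\tfrac1\beta}^{i-j}.
\]
This series converges provided $a(1-1/\beta)<1$, and the choice $a=\frac{3\beta-1}{3\beta-2}$ is exactly what guarantees it. Indeed $a(1-1/\beta)=\frac{(3\beta-1)(\beta-1)}{(3\beta-2)\beta}\leq 1-\Theta(1/\beta)$, so the series sums to $\cO(\beta)a^j$ and the coefficient is $\cO(\alpha a^j)$. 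Comparing with the lower bound $(a-1)a^{j-1}n_j=\Theta(a^j/\beta)n_j$ gives a total loss of $\cO(\alpha\beta)$ from this comparison. An additional factor of $1/(a-1)=\Theta(\beta)$ comes from phase lengths, for $\cO(\alpha\beta^2)$ so far.

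\textbf{Main obstacle.} I expect the hard part to be the tail. Elements never covered by $\pi_{l^*}$ are charged $\spr{\pi_{l^*}}$, and elements not covered by $\cCopt$ (which is only an $f$-cover) are charged $\spr{\cCopt}$. These terms must be matched carefully, and the counting in the recurrence must handle $\cCopt$ covering only $fn$ elements. This bookkeeping, together with the crude geometric constants, is where I anticipate the extra factor of $\beta$ that brings the bound to the stated $\cO(\alpha\beta^3)$. I would not try to optimize it, and would simply carry the loose $\beta$ factors through the sums.
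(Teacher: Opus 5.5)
Your argument is essentially the paper's proof. Both use geometric budgets, the recurrence $u_i\le(1-\frac1\beta)u_{i-1}+\frac1\beta\,n_{\cdot}$ obtained by feeding a prefix of $\cCopt$ to the $\ProblemBPCSC$ guarantee, the layered lower bound on $\coverageTime(\cCopt,\cU)$, and the convergence condition $a(1-\frac1\beta)<1$. You unroll the recurrence and swap sums, while the paper multiplies by $a^{i+3}$, sums, absorbs the $\frac{\beta-1}{\beta}$ term back into $Q$ and solves for $Q$; these are equivalent. Using the prefix of length $a^{i+1}$ instead of $a^i$ only shifts an index.

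What you leave open as the ``main obstacle'' is routine, and your unrolled formula has one error.

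\textbf{Elements never covered by $\pi_{l^*}$.} They are charged $|\pi_{l^*}|\le\max\{\alpha\sum_{i\le l^*}a^{i+1},\,\alpha\beta a^{l^*}\}\le\frac{\alpha a^{l^*+2}}{a-1}$, using $\beta\le 3\beta-2=\frac{1}{a-1}$. This charge is dominated by the $i=l^*+1$ term of your sum $\frac{\alpha}{a-1}\sum_i a^{i+2}u_{i-1}$, so just run that sum to $l^*+1$; this is exactly the paper's $Q$.

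\textbf{Elements $\cCopt$ never covers.} They need no special treatment. Your recurrence only used that $\cCopt_i$ covers at least $u_{i-1}-n_{i+1}$ elements of $U$, which holds for an $f$-cover. These elements lie in $n_{l^*}=n_{l^*+1}$, and $\cCopt$ pays $|\cCopt|>a^{l^*-1}$ for each. Bound that term with $\coverageTime(\cCopt,\cU)\ge|\cCopt|\,n_{l^*}$, separately from the layered sum over $i\le l^*-1$, because $[a^{l^*-1},a^{l^*})$ need not lie inside $[0,|\cCopt|)$. This is where the paper's factor $2$ comes from.

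\textbf{Missing initial term.} The correct unrolling is $u_i\le(1-\frac1\beta)^i n+\frac1\beta\sum_{j=1}^{i}(1-\frac1\beta)^{i-j}n_{j+1}$, since $u_0=n$. Your version without the first term is not a valid upper bound. The missing term contributes $\frac{\alpha a^3}{a-1}\sum_i\bigl(a(1-\frac1\beta)\bigr)^i n=\cO(\alpha\beta^2)\,n$. Since every element has coverage time at least $1$, this is at most $\cO(\alpha\beta^2)\cdot\coverageTime(\cCopt,\cU)$; it corresponds to the paper's $a^3n$ term.

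With these fixes your computation closes at $\cO(\alpha\beta^2)$, which implies the stated $\cO(\alpha\beta^3)$. The tail does not produce an extra factor of $\beta$. The third $\beta$ in the paper's bound comes only from bounding the factor $\frac1\beta$ in front of the $\cCopt$-terms by $1$.
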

    \begin{proof}
        We observe that since we used an $\br{\alpha, \beta}$-approximation algorithm for $\ProblemBPCSC$ in each iteration of the inner for loop, we have that for each $1\leq i\leq l^*$, $\spr{\tau_{l^*}^i} \leq \alpha \cdot a^{i+1}$. Moreover, we get that before the extension step of $\pi_{l^*}$, we had that:
        \begin{equation} \label{eq:pi}
        \spr{\pi_{l^*}} \leq \alpha\cdot \sum_{i=1}^{l^*} a^{i+1}\leq \frac{\alpha\cdot a^{l^*+2}}{a-1}\leq \frac{\alpha\cdot a^{2}\spr{\cCopt}}{a-1}\leq4\alpha\cdot \br{3\beta-2}\cdot\spr{\cCopt}
        \end{equation}
        Since in the extension step we ensure that $\spr{\pi_{l^*}} \geq \alpha\cdot \beta\cdot a^{l^*}$, we have that $\spr{\pi_{l^*}} = \Theta\br{\alpha \cdot \beta}\cdot \spr{\cCopt}$ and the first part of the lemma follows.

        For every $i\in [l^*]$, denote $n_i^*=\spr{\cov\br{\pref{\cCopt}{a^{i}}}}$ and note that $n_{l^*}^* = \spr{\cov\br{\cCopt}}$. Accordingly, for every $i\in [l^*]$ let $n_i=\spr{\cov\br{\angl{\tau_{l^*}^1,\dots, \tau_{l^*}^i}}}$. Additionally, define $n_0=n_0^*=0$. We have:
        \begin{align*}
            \coverageTime\br{\pi,\cU}&\leq \sum_{i=1}^{l^*}\br{n_i-n_{i-1}}\cdot \sum_{j=1}^{i}\alpha\cdot a^{j+1} + \br{n-n_{l^*}}\cdot \spr{\pi_{l^*}}
            \\&\leq
            \sum_{i=1}^{l^*}\br{n_i-n_{i-1}}\cdot \frac{\alpha\cdot a^{i+2}}{a-1} + \br{n-n_{l^*}}\cdot \spr{\pi_{l^*}}
            \\&=
            \sum_{i=1}^{l^*}\br{\br{n-n_{i-1}}-\br{n-n_i}}\cdot \frac{\alpha\cdot a^{i+2}}{a-1} + \br{n-n_{l^*}}\cdot \spr{\pi_{l^*}}
            \\&\leq
            \sum_{i=0}^{l^*}\br{n-n_i}\cdot \frac{\alpha\cdot a^{i+3}}{a-1} =: Q.
        \end{align*}
        where the last inequality is due to \eqref{eq:pi}.
        Moreover, we have that:
        \begin{align*}
        2\cdot\coverageTime\br{\cCopt,\cU}&\geq a^{l^*-1}\br{n-n_{l^*-1}^*}+ \sum_{i=1}^{l^*-1}a^{i-1}\br{n_i^*-n_{i-1}^*}+\br{n-n_{l^*}^*}\cdot\spr{\cCopt}
        \\&\geq
        a^{l^*-1}\br{n-n_{l^*-1}^*}+ \sum_{i=1}^{l^*-1}a^{i-1}\br{\br{n-n_{i-1}^*}-\br{n-n_{i}^*}}+\br{n-n_{l^*}^*}\cdot a^{l^*-1}
        \\&=
        \br{n-n_0^*}+\sum_{i=1}^{l^*-1}\br{a^{i}-a^{i-1}}\cdot\br{n-n_{i}^*}+\br{n-n_{l^*}^*}\cdot a^{l^*-1}
        \\&\geq
        \br{1-\frac{1}{a}}\cdot\sum_{i=0}^{l^*}a^{i}\cdot \br{n-n_i^*}.
        \end{align*}

        Consider any iteration $i\in \brc{1,\dots,l^*}$ of the inner for loop. Let $U_i=\cU\setminus \cov\br{\pref{\cCopt}{a^{i-1}}}$ be the set of uncovered elements at the beginning of iteration $i$ which is provided as a part of the $\cI_i$ instance to the $\ProblemBPCSC$ algorithm. We have that $\OPT\br{\cI_i}\geq n_i^*-n_{i-1}$ since $\pref{\cCopt}{a^{i}}$ is a feasible solution to instance $\cI_i$ of $\ProblemBPCSC$ such that $\spr{\cov\br{\pref{\cCopt}{a^{i}}, U_i}}\geq n_i^*-n_{i-1}$.
        Since we used an $\br{\alpha, \beta}$-approximation algorithm for $\ProblemBPCSC$ in iteration $i$, we have that $n_i-n_{i-1}\geq \frac{1}{\beta}\cdot\br{n_i^*-n_{i-1}}$. As a consequence we see that $n-n_i\leq \frac{\beta-1}{\beta}\br{n-n_{i-1}}+\frac{1}{\beta}\br{n-n_{i}^*}$.
        Thus,
        \begin{align*}
            \frac{\br{a-1}\cdot Q}{\alpha}
            &=
            \sum_{i=0}^{l^*}a^{i+3}\cdot \br{n-n_i}
            \\&\leq
            a^3\cdot n + \frac{1}{\beta}\cdot\sum_{i=1}^{l^*}a^{i+3}\cdot \br{n-n_i^*} + \frac{\beta-1}{\beta}\cdot\sum_{i=1}^{l^*}a^{i+3}\cdot \br{n-n_{i-1}}
            \\&\leq
            \frac{2\cdot a^4}{a-1}\cdot\coverageTime\br{\cCopt,\cU}+\frac{\beta-1}{\beta}\cdot\sum_{i=1}^{l^*}a^{i+3}\cdot \br{n-n_{i-1}}
            \\&=
            \frac{2\cdot a^4}{a-1}\cdot\coverageTime\br{\cCopt,\cU}+\frac{\br{\beta-1}\cdot a}{\beta}\cdot\sum_{i=0}^{l^*-1}a^{i+3}\cdot \br{n-n_{i}}
            \\&\leq
            \frac{2\cdot a^4}{a-1}\cdot\coverageTime\br{\cCopt,\cU}+\frac{\br{\beta-1}\cdot a}{\beta}\cdot\br{a-1}\cdot\frac{Q}{\alpha}.
        \end{align*}
        By rearranging the above inequality we obtain:
        \begin{align*}
            Q&\leq \alpha\cdot\coverageTime\br{\cCopt,\cU}\cdot\frac{2\cdot a^4}{\br{a-1}^2\cdot\br{1-\frac{\br{\beta-1}\cdot a}{\beta}}}
            \\&=
            \alpha\cdot\coverageTime\br{\cCopt,\cU}\cdot\frac{2\cdot\br{\frac{3\beta-1}{3\beta-2}}^4}{\br{\frac{3\beta-1}{3\beta-2}-1}^2\cdot\br{1-\frac{\br{\beta-1}\cdot \br{3\beta-1}}{\beta\cdot\br{3\beta-2}}}}
            \\&=
            \alpha\cdot\coverageTime\br{\cCopt,\cU}\cdot\frac{2\cdot \br{3\beta-1}^2\cdot \br{\frac{3\beta-1}{3\beta-2}}^4}{\br{1-\frac{\br{\beta-1}\cdot \br{3\beta-1}}{\beta\cdot\br{3\beta-2}}}}
        \end{align*}
        by our choice of $a$. One can easily check that for $\beta\geq 1$, $\frac{\br{\beta-1}\cdot \br{3\beta-1}}{\beta\cdot\br{3\beta-2}}\leq \frac{\br{\beta-1/3}}{\beta}$ so that:
        \begin{align*}
            Q&\leq
            \alpha\cdot\coverageTime\br{\cCopt,\cU}\cdot\frac{\br{3\beta-1}^2\cdot\br{\frac{2\cdot 3\beta-1}{3\beta-2}}^4}{\br{1-\frac{\br{\beta-1/3}}{\beta}}}
            \\&=
            \alpha\cdot\coverageTime\br{\cCopt,\cU}\cdot 6\beta\cdot \br{3\beta-1}^2\cdot \br{\frac{3\beta-1}{3\beta-2}}^4
            \leq 864\cdot\alpha\cdot\beta^3\cdot\coverageTime\br{\cCopt,\cU}
        \end{align*}
        The lemma follows.
    \end{proof}

    There are two cases to consider. If after extension step of $\pi_{l^*}$ we have that $\spr{\pi_{l^*}} \geq m$ then the solution covers entire universe and the proof is complete. Otherwise we have the following lemmas:
    \begin{lemma}\label{lem:sigma-coverage}
        We have that $\spr{\cov\br{\sigma_{l^*}}} \geq \frac{f\cdot n}{\beta}$ and $\spr{\sigma_{l^*}} = O\br{\alpha}\cdot \spr{\cCopt}$.
    \end{lemma}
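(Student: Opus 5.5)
Both claims follow from the bicriteria guarantee of the $\ProblemBPCSC$ subroutine, applied to the call that produces $\sigma_{l^*}$. That call receives the whole universe $\cU$ and budget $\budget=a^{l^*}$. The plan is to show that the optimal $\ProblemPCMSSC$ solution is a feasible $\ProblemBPCSC$ solution for this instance, and then read off both bounds from the $(\alpha,\beta)$-approximation.

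\textbf{Feasibility.} By the choice of $l^*$ we have $a^{l^*-1}<\spr{\cCopt}\leq a^{l^*}$. The family of sets appearing in $\cCopt$ is precedence-closed, because $\cCopt$ is a precedence-closed sequence, so every predecessor of a set in it occurs earlier in the sequence. Its size is $\spr{\cCopt}\leq a^{l^*}$, so it fits the budget. By feasibility for $\ProblemPCMSSC$ it covers at least $f\cdot n$ elements of $\cU$. Hence the optimum $\OPT$ of the $\ProblemBPCSC$ instance $(\cU,\preceq,\cS,a^{l^*})$ satisfies $\OPT\geq f\cdot n$.

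\textbf{Coverage.} The subroutine is an $(\alpha,\beta)$-approximation, so it covers at least a $1/\beta$ fraction of the optimum. Therefore $\spr{\cov\br{\sigma_{l^*}}}\geq \OPT/\beta\geq f\cdot n/\beta$, which is the first claim.

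\textbf{Size.} The same guarantee bounds the budget violation: $\spr{\sigma_{l^*}}\leq \alpha\cdot a^{l^*}$. Since $a^{l^*-1}<\spr{\cCopt}$, we get $a^{l^*}<a\cdot\spr{\cCopt}$. For $\beta\geq 1$, $a=\frac{3\beta-1}{3\beta-2}\leq 2$, so $\spr{\sigma_{l^*}}\leq 2\alpha\spr{\cCopt}=\bigo\br{\alpha}\cdot\spr{\cCopt}$, which is the second claim.

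\textbf{Remaining checks.} Two small points need confirming.
\begin{itemize}
\item The index $l^*$ actually occurs in the outer loop. This requires $l^*\leq\cl{\log_a m}$, which holds because we may assume $\spr{\cCopt}\leq m$ (no set needs to be repeated in an optimal sequence).
\item The case where the extension step already reaches $m$ sets is handled separately before this lemma, so nothing further is needed here.
\end{itemize}
The only subtle step is the feasibility argument: we must use the prefix-closure property to pass from a precedence-closed \emph{sequence} to a precedence-closed \emph{family}. This holds directly by the definitions.
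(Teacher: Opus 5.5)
Your proof is correct and follows the paper's own argument. Like the paper, you show that $\cCopt$ is a feasible solution to the \ProblemBPCSC{} instance with budget $a^{l^*}$, and then read off both the coverage bound and $\spr{\sigma_{l^*}}\leq\alpha a^{l^*}<a\alpha\spr{\cCopt}\leq 2\alpha\spr{\cCopt}$ from the $(\alpha,\beta)$-guarantee. You also make explicit two steps the paper leaves implicit: the constant $a\leq 2$, and the passage from a precedence-closed sequence to a precedence-closed family. One corner case is missing from your ``remaining checks'': when $\spr{\cCopt}=1$, the definition gives $l^*=0$, which lies outside the outer loop; taking $l=1$ instead gives the same argument.
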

    \begin{proof}
        Since the optimal cover $\cCopt$ satisfies $\spr{\cCopt}\leq a^{l^*}$ and covers at least $f\cdot n$ elements, it is a feasible solution to the $\ProblemBPCSC$ instance $\cI'=(\cU,\cS,\preceq,a^{l^*})$ for which the $(\alpha,\beta)$-approximation $\sigma_{l^*}$ has been computed after the inner loop.
        Hence, we know that $\spr{\sigma_{l^*}} = O\br{\alpha}\cdot \spr{\cCopt}$ and $\spr{\cov\br{\sigma_{l^*}}} \geq \frac{f\cdot n}{\beta}$.
    \end{proof}
    \begin{lemma}\label{lem:combined-coverage-cost}
        We have that $\spr{\cov\br{\pi_{l^*} \circ \sigma_{l^*}}}\geq \frac{f\cdot n}{\beta}$ and $\coverageTime\br{\pi_{l^*} \circ \sigma_{l^*},\cU} = O\br{\alpha\cdot\beta^3}\cdot \coverageTime\br{\cCopt,\cU}$.
    \end{lemma}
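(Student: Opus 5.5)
The coverage part is immediate. Since $\cov\br{\pi_{l^*}\circ\sigma_{l^*}}\supseteq\cov\br{\sigma_{l^*}}$, Lemma~\ref{lem:sigma-coverage} gives $\spr{\cov\br{\pi_{l^*} \circ \sigma_{l^*}}}\geq \frac{f\cdot n}{\beta}$. We are also in the case $\spr{\pi_{l^*}}<m$; otherwise the argument above already finished the proof.

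For the cost, I would split $\coverageTime\br{\pi_{l^*} \circ \sigma_{l^*},\cU}$ by the elements of $\cU$. Every element covered by $\pi_{l^*}$ keeps exactly the coverage time it had in $\pi_{l^*}$. Every other element has coverage time at most $\spr{\pi_{l^*}}+\spr{\sigma_{l^*}}$, whether or not $\sigma_{l^*}$ covers it. In $\coverageTime\br{\pi_{l^*},\cU}$ such an element was charged exactly $\spr{\pi_{l^*}}$. Hence
\[
\coverageTime\br{\pi_{l^*} \circ \sigma_{l^*},\cU}\leq \coverageTime\br{\pi_{l^*},\cU} + \spr{\cU\setminus\cov\br{\pi_{l^*}}}\cdot\spr{\sigma_{l^*}}.
\]

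The first term is $\bigo\br{\alpha\cdot\beta^3}\cdot\coverageTime\br{\cCopt,\cU}$ by Lemma~\ref{lem:pi-spread-cost}. For the second term I would show that $\spr{\sigma_{l^*}}\leq\spr{\pi_{l^*}}$ up to a constant factor. By Lemma~\ref{lem:sigma-coverage}, $\spr{\sigma_{l^*}}\leq\alpha a^{l^*}$. The extension step guarantees $\spr{\pi_{l^*}}\geq\alpha\beta a^{l^*}$, which is at least $\spr{\sigma_{l^*}}$ since $\beta\geq1$. Therefore the second term is at most $\spr{\cU\setminus\cov\br{\pi_{l^*}}}\cdot\spr{\pi_{l^*}}$. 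This quantity is exactly the contribution of the uncovered elements to $\coverageTime\br{\pi_{l^*},\cU}$, so it is at most $\coverageTime\br{\pi_{l^*},\cU}$. Adding up, the total is at most $2\coverageTime\br{\pi_{l^*},\cU}=\bigo\br{\alpha\beta^3}\cdot\coverageTime\br{\cCopt,\cU}$.

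The main obstacle is justifying that the uncovered elements are charged the full length $\spr{\pi_{l^*}}$ in $\coverageTime\br{\pi_{l^*},\cU}$, i.e.\ that the bound of Lemma~\ref{lem:pi-spread-cost} refers to $\pi_{l^*}$ after the extension step. The definition of coverage time gives this directly: an element not in $\cov\br{\pi_{l^*}}$ has coverage time $\spr{\pi_{l^*}}$, and that length already includes the appended sets. If one prefers not to rely on this, there is an alternative: bound the second term by $n\cdot\spr{\sigma_{l^*}}\leq n\alpha a^{l^*}\leq a\alpha\, n\spr{\cCopt}$. One then uses $\coverageTime\br{\cCopt,\cU}\geq n\cdot\Omega\br{\spr{\cCopt}/\beta}$-type bounds, but these are only available when at least a constant fraction of elements has large coverage time. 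So I would commit to the first route, which uses the extension length $\alpha\beta a^{l^*}$.
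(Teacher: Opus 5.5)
Your proposal is correct and follows essentially the same route as the paper. Coverage is inherited from $\sigma_{l^*}$ via Lemma~\ref{lem:sigma-coverage}, and for the cost you compare element by element: elements covered by $\pi_{l^*}$ keep their coverage time, while every other element pays at most $\spr{\pi_{l^*}}+\spr{\sigma_{l^*}}$, a constant multiple of the charge $\spr{\pi_{l^*}}$ it already incurs in $\coverageTime\br{\pi_{l^*},\cU}$. Your explicit chain $\spr{\sigma_{l^*}}\leq\alpha a^{l^*}\leq\alpha\beta a^{l^*}\leq\spr{\pi_{l^*}}$ (valid because we are in the case $\spr{\pi_{l^*}}<m$) simply makes the paper's asymptotic constant concrete as a factor of $2$.
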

    \begin{proof}
        Since $\pi_{l^*} \circ \sigma_{l^*}$ covers all of the elements covered by $\sigma_{l^*}$, by Lemma~\ref{lem:sigma-coverage} we have that $
        \spr{\cov\br{\pi_{l^*} \circ \sigma_{l^*}}}\geq \frac{f\cdot n}{\beta}.$
        For each element $u\in\cU$, let $t_u$ denote the index of the first set in $\pi_{l^*}$ that covers $u$ if $u$ is covered by $\pi_{l^*}$, otherwise set $t_u=\spr{\pi_{l^*}}$.
        Lemma \ref{lem:pi-spread-cost} implies that 
        \[
        \coverageTime\br{\pi_{l^*},\cU}=\sum_{u\in\cU} t_u = O\br{\alpha\cdot\beta^3}\cdot \coverageTime\br{\cCopt,\cU}.
        \]
        Observe, that for each $u\in\cov\br{\pi_{l^*}}$, its cover time in $\pi_{l^*} \circ \sigma_{l^*}$ is also $t_u$.
        For each $u\notin\cov\br{\pi_{l^*}}$, by Lemma~\ref{lem:pi-spread-cost}, $t_u=\Omega\br{ \alpha\cdot\beta\cdot a^{l^*}}=\Omega\br{\alpha\cdot a^{l^*}}$ and its cover time in $\pi_{l^*} \circ \sigma_{l^*}$ is at most 
        \[
        \spr{\pi_{l^*} \circ \sigma_{l^*}}=\spr{\pi_{l^*}} + \spr{\sigma_{l^*}} = \bigo\br{\alpha\cdot\beta}\cdot \spr{\cCopt} + \bigo\br{\alpha}\cdot a^{l^*} = \bigo\br{\alpha\cdot\beta}\cdot \spr{\cCopt}.
        \]
        Therefore, its cover time in $\pi_{l^*} \circ \sigma_{l^*}$ is at most a constant factor larger than $t_u$.
        As a consequence, due to Lemma~\ref{lem:pi-spread-cost} we get that $\coverageTime\br{\pi_{l^*} \circ \sigma_{l^*},\cU} = O\br{\alpha\cdot\beta^3}\cdot \coverageTime\br{\cCopt,\cU}$.
    \end{proof}
Lemma \ref{lem:combined-coverage-cost} directly implies the following theorem.
\begin{theorem}\label{thm:BPCSC-to-PCMSSC}
    If there exists a polynomial-time bicriteria $(\alpha,\beta)$-approximation algorithm for $\ProblemBPCSC$, then there exists a polynomial-time bicriteria $\br{\bigo\br{\alpha\cdot\beta^3}, \beta}$-approximation algorithm for $\ProblemPCMSSC$.
\end{theorem}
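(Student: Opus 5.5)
The plan is to take Algorithm~\ref{alg:fPCMSSC} as the algorithm and read the theorem off Lemmas~\ref{lem:pi-spread-cost}, \ref{lem:sigma-coverage} and~\ref{lem:combined-coverage-cost}. The algorithm returns the candidate in $\cQ$ with the smallest coverage time. So it suffices to exhibit one candidate $\pi_{l}\circ\sigma_{l}$ that is feasible, that is, precedence-closed and covering at least $f n/\beta$ elements, and whose coverage time is $\bigo\br{\alpha\beta^3}\cdot\coverageTime\br{\cCopt,\cU}$.

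I will argue that every candidate covers enough, so the minimizer returned is feasible. I will then compare the returned sequence to the candidate with index $l^*$.

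\textbf{Index and polynomial time.} First I check that $l^*$ lies in the range of the outer loop. Since $\spr{\cCopt}\le m$, we have $l^*\le\cl{\log_a m}$. The number of iterations is $\bigo\br{\log_a m}$, which is $\bigo\br{\beta\log m}$ for $a=\frac{3\beta-1}{3\beta-2}$. This is polynomial, assuming $\beta$ is polynomially bounded or constant, as in our applications. Each iteration makes polynomially many black-box calls, so the whole algorithm runs in polynomial time.

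\textbf{Feasibility of each candidate.} Each $\tau^i_l$ and $\sigma_l$ is precedence-closed as a family, and I will list each one in a topological order. Concatenating them keeps the sequence precedence-closed: a set already chosen earlier is simply skipped, and the padding sets in the extension step are appended together with their closures. Each $\sigma_l$ is computed on the full universe. So whenever $a^l\ge\spr{\cCopt}$, Lemma~\ref{lem:sigma-coverage} shows that $\pi_l\circ\sigma_l$ covers at least $fn/\beta$ elements.

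\textbf{Handling candidates with $l<l^*$.} For these the coverage guarantee need not hold. I would restrict the final $\argmin$ to candidates that cover at least $fn/\beta$ elements; this is checkable in polynomial time. The $l^*$ candidate passes this test, so the returned solution costs at most its coverage time.

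\textbf{Cost bound.} Lemma~\ref{lem:combined-coverage-cost} bounds $\coverageTime\br{\pi_{l^*}\circ\sigma_{l^*},\cU}$ by $\bigo\br{\alpha\beta^3}\cdot\coverageTime\br{\cCopt,\cU}$. In the degenerate case where the extension step makes $\spr{\pi_{l^*}}\ge m$, the prefix $\pi_{l^*}$ already covers all of $\cU$. Its coverage time is then bounded by Lemma~\ref{lem:pi-spread-cost} directly, and appending $\sigma_{l^*}$ changes nothing.

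\textbf{Main obstacle.} The step needing the most care is this bookkeeping rather than new mathematics: feasibility of the returned minimizer, and the precedence-closure of the concatenated and padded sequence. The analytic core is already carried by Lemmas~\ref{lem:pi-spread-cost}--\ref{lem:combined-coverage-cost}.
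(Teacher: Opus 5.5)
Your proposal is correct and is essentially the paper's argument: the same Algorithm~\ref{alg:fPCMSSC}, the same comparison against the candidate with index $l^*$, and the same reliance on Lemmas~\ref{lem:pi-spread-cost}--\ref{lem:combined-coverage-cost}. Your bookkeeping additions are substantive, not cosmetic: restricting the final $\argmin$ to candidates covering at least $fn/\beta$ elements is genuinely needed, because a short candidate with $l<l^*$ can cover almost nothing yet have smaller coverage time; also, padding should append one currently-minimal unused set at a time, so that the bound $\spr{\pi_{l^*}}=\bigo\br{\alpha\beta}\cdot\spr{\cCopt}$ is not overshot by appending a large closure.
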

It should be noted that in all cases when we use Theorem~\ref{thm:BPCSC-to-PCMSSC} we will have $\beta=\bigo\br{1}$, thus yielding a $\br{\bigo\br{\alpha}, \bigo\br{1}}$-approximation algorithm for $\ProblemPCMSSC$.

\subsection{Special cases of precedence constraints} \label{subsection:SpecialCases}
\begin{theorem}\label{thm:BPCSC-inforest}
    There exist polynomial-time bicriteria $\br{1, \frac{e}{e-1}}$-approximation algorithms for $\ProblemBPCSC$ and $\ProblemPCSC$ when the precedence constraints form an inforest.
\end{theorem}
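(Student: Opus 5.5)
The plan is to show that in the inforest case $\ProblemBPCSC$ is \emph{exactly} an instance of \textsc{Budgeted Maximum Coverage}, where sets have non-uniform costs. The $\frac{e}{e-1}$-approximation of \cite{TheBudgetedMaximumCoverageProblem} then gives a $\br{1,\frac{e}{e-1}}$ guarantee for $\ProblemBPCSC$. The statement for $\ProblemPCSC$ follows from Corollary~\ref{cor:BPCSC-to-PCSC}: we guess $\budget^*=\spr{\cCopt}$ by trying all $\budget\in[m]$, which keeps the running time polynomial.

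The construction is as follows. Given $(\cU,\cS,\preceq,\budget)$, we create one item $\hat S$ for every $S\in\cS$ with $\spr{\closure{S}}\leq\budget$. Its coverage is $\cov\br{\closure{S}}$ and its cost is $c(\hat S)=\spr{\closure{S}}$. We then run the budgeted maximum coverage algorithm with budget $\budget$. It returns items $\hat S_1,\dots,\hat S_r$ with $\sum_j c(\hat S_j)\leq\budget$, and we output $\cC=\bigcup_j\closure{S_j}$.

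The two directions of the correspondence are:
\begin{itemize}
\item \emph{Feasibility of the output.} The family $\cC$ is a union of precedence-closed families, hence precedence-closed. Moreover $\spr{\cC}\leq\sum_j\spr{\closure{S_j}}\leq\budget$, and $\cov\br{\cC}=\bigcup_j\cov\br{\closure{S_j}}$. So the budget is respected exactly ($\alpha=1$), and the coverage is at least the value the subroutine achieves.
\item \emph{Value of the optimum.} Take an optimal precedence-closed $\cCopt$ with $\spr{\cCopt}\leq\budget$, and let $M$ be its set of maximal elements. Then $\cCopt=\bigcup_{S\in M}\closure{S}$ and the coverage is preserved. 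The items $\{\hat S: S\in M\}$ form a feasible solution of the coverage instance of the same value, provided the cost is additive, i.e.\ $\sum_{S\in M}\spr{\closure{S}}=\spr{\cCopt}$. Hence the coverage optimum is at least $\spr{\cov\br{\cCopt}}$, and the approximation yields coverage at least $\frac{e-1}{e}\spr{\cov\br{\cCopt}}$.
\end{itemize}

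The key step, and the only place where the inforest structure is used, is the additivity claim: for distinct maximal $S,S'\in M$, the closures $\closure{S}$ and $\closure{S'}$ are disjoint. The argument I would give works in the Hasse diagram. If $Y\preceq S$ and $Y\preceq S'$, then, because every element has at most one immediate successor, the set of successors of $Y$ is a chain, namely the unique upward path from $Y$. Thus $S$ and $S'$ both lie on this chain and are comparable. This contradicts their both being maximal and distinct. Therefore the closures of maximal elements partition $\cCopt$.

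I expect this to be the main point needing care. One must read ``at most one successor'' as referring to immediate successors (covering relations), since in the transitive relation an element may have many successors. Apart from this, the rest is bookkeeping.
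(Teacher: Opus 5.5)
Your proposal is correct and follows the same route as the paper: each $\closure{S}$ becomes a single set of cost $\spr{\closure{S}}$ in \textsc{Budgeted Maximum Coverage}, the algorithm of \cite{TheBudgetedMaximumCoverageProblem} is applied, and $\ProblemPCSC$ is handled through Corollary~\ref{cor:BPCSC-to-PCSC}. Your cost correspondence is more precise than the paper's informal claim that a reasonable cover uses at most one representative per tree, since a precedence-closed subfamily of a single tree may consist of several disjoint subtrees; this is exactly why you need the decomposition of $\cCopt$ into the pairwise disjoint closures of its maximal elements and the subadditivity bound $\spr{\cC}\leq\sum_j\spr{\closure{S_j}}$ for the output.
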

\begin{proof}
    We show how to reduce such an instance to the well-known \textsc{Maximum Coverage} problem (without any precedence constraints) where we allow the non-uniform cost of the sets. In the latter setup including a set $S$ into the cover may impose arbitrary cost $c\br{S}$. For this task a $\br{1, \frac{e}{e-1}}$-approximation algorithm is known \cite{TheBudgetedMaximumCoverageProblem}. Observe that for precedence constraints of the inforest type, for any two sets $S, W\in \cS$ if $\closure{S}\cap\closure{W}\neq \emptyset$, then one is a descendant of the other in the inforest, i.e., either $S\preceq W$ or $W\preceq S$. Thus, we can create an equivalent \textsc{Maximum Coverage} instance in the following way: for each set $S\in \cS$, we create a new set $S'$ in the \textsc{Maximum Coverage} instance, such that $S'=\bigcup_{W\in \closure{S}}W$ and $c\br{S'}=\spr{\closure{S}}$. It is easy to see that any reasonable cover for the \textsc{Maximum Coverage} instance picks at most one representative $S'$ of set $S$ from each tree $T$ in the inforest and this corresponds to picking all of the sets in $\closure{S}$ in the original instance. Thus, this choice corresponds to picking a precedence-closed subtree of this subtree $T$. It is also easy to see, that any precedence-closed subtree of $T$ corresponds to picking a representative $S'$ of some set $S$ in the \textsc{Maximum Coverage} instance. Therefore, any feasible solution to the \textsc{Maximum Coverage} instance corresponds to a feasible solution to $\ProblemBPCSC$ on the original instance with the same cost and coverage, and vice versa. Thus, by Corollary \ref{cor:BPCSC-to-PCSC}, the theorem follows.
\end{proof}

Via a careful examination of Theorem~\ref{thm:BPCSC-to-PCMSSC}, we obtain the following result for $\ProblemPCMSSC$
\begin{corollary}\label{cor:BPCSC-inforest-to-PCMSSC}
    There exists a bicriteria $\br{864\cdot\br{\frac{e}{e-1}}^3+1, \frac{e}{e-1}}\approx\br{3421.7, \frac{e}{e-1}}$-approximation algorithm for $\ProblemPCMSSC$ when the precedence constraints form an inforest running in polynomial time.
\end{corollary}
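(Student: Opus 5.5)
The plan is to plug the inforest algorithm of Theorem~\ref{thm:BPCSC-inforest} into Algorithm~\ref{alg:fPCMSSC} as the $\ProblemBPCSC$ oracle. That oracle has parameters $\alpha=1$ and $\beta=\frac{e}{e-1}$. We then redo the accounting of Lemmas~\ref{lem:pi-spread-cost}--\ref{lem:combined-coverage-cost} with explicit constants instead of $\bigo$-notation. Feasibility is immediate. Running the oracle on an inforest instance restricted to the uncovered elements $U$ is still an inforest instance, since only the universe changes and not $(\cS,\preceq)$. Lemma~\ref{lem:sigma-coverage} then gives $\spr{\cov\br{\pi_{l^*}\circ\sigma_{l^*}}}\geq \frac{f\cdot n}{\beta}$, which is the claimed coverage loss $\frac{e}{e-1}$. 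Precedence-closedness of the concatenated sequence also needs a check, as in the general theorem. Each $\tau^i_{l}$ is output as a precedence-closed family, listed in topological order. Sets already used earlier can be skipped, or repeats simply do no harm to coverage time.

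For the cost, the proof of Lemma~\ref{lem:pi-spread-cost} already yields the explicit bound $\coverageTime\br{\pi_{l^*},\cU}\leq Q\leq 864\cdot\alpha\cdot\beta^3\cdot\coverageTime\br{\cCopt,\cU}$. With $\alpha=1$ this becomes $864\br{\frac{e}{e-1}}^3\coverageTime\br{\cCopt,\cU}$. The remaining task is to show that appending $\sigma_{l^*}$ costs at most one extra copy of $\coverageTime\br{\cCopt,\cU}$, rather than the unspecified constant of Lemma~\ref{lem:combined-coverage-cost}.

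This is the step I expect to be the main obstacle. Appending $\sigma_{l^*}$ only affects elements not covered by $\pi_{l^*}$. Such an element $u$ is already charged $t_u=\spr{\pi_{l^*}}$ in $\coverageTime\br{\pi_{l^*},\cU}$. Its new cover time is at most $\spr{\pi_{l^*}}+\spr{\sigma_{l^*}}$, so the excess is at most $\spr{\sigma_{l^*}}\leq \alpha a^{l^*}$. I would therefore bound the total excess by $\spr{\sigma_{l^*}}$ times the number of elements left uncovered by $\pi_{l^*}$. This must be compared with $\coverageTime\br{\cCopt,\cU}$, in which every element not covered within the first $a^{l^*-1}$ steps of $\cCopt$ pays at least $a^{l^*-1}\geq \spr{\cCopt}/a$.

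Carrying this out requires choosing the budgets carefully, so that the excess charge per element fits within what $\cCopt$ pays. One option is to use the final budget $\spr{\cCopt}$ itself, guessed exactly as in Corollary~\ref{cor:BPCSC-to-PCSC}, instead of $a^{l^*}$. With $\alpha=1$ each uncovered element then incurs excess at most $\spr{\cCopt}$, which is no more than that element's contribution to $\coverageTime\br{\cCopt,\cU}$. Elements covered by $\cCopt$ that are also left uncovered need more care. I would handle them by absorbing the excess into the $Q$-bound, using the $(n-n_{l^*})$ term already present there. This gives total cost at most $\br{864\br{\frac{e}{e-1}}^3+1}\coverageTime\br{\cCopt,\cU}$.
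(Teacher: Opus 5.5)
Your top-level plan is what the paper's one-line justification (``careful examination of Theorem~\ref{thm:BPCSC-to-PCMSSC}'') amounts to. You instantiate $\alpha=1$, $\beta=\frac{e}{e-1}$, reuse $\coverageTime\br{\pi_{l^*},\cU}\leq Q\leq 864\alpha\beta^3\coverageTime\br{\cCopt,\cU}$, and pay separately for $\sigma_{l^*}$. You are right that Lemma~\ref{lem:combined-coverage-cost} does not give an additive $+1$. It compares the new cover time $\spr{\pi_{l^*}}+\spr{\sigma_{l^*}}$ with $t_u=\spr{\pi_{l^*}}\geq\alpha\beta a^{l^*}$, which only yields a factor $1+1/\beta$. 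However, the step you leave open is exactly where the needed argument lives, and as written it is not justified. Absorbing the excess into the $(n-n_{l^*})$ term in the obvious way means adding $(n-n_{l^*})\spr{\sigma_{l^*}}\leq(n-n_{l^*})\spr{\pi_{l^*}}/\beta$ to that term. That again inflates the bound multiplicatively, not by $+1$.

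The missing observation is that $Q$ already has slack for exactly these elements. In the first line of the proof of Lemma~\ref{lem:pi-spread-cost}, each element not covered by $\tau_{l^*}^1,\dots,\tau_{l^*}^{l^*}$ is charged $\spr{\pi_{l^*}}\leq\alpha a^{l^*+2}/(a-1)$. This bound survives the extension step, because $\beta\leq(3\beta-2)a^2$ for $\beta\geq 1$. In $Q$, however, the same element is charged $\alpha a^{l^*+3}/(a-1)$. The difference is $\alpha a^{l^*+2}\geq\alpha a^{l^*}\geq\spr{\sigma_{l^*}}$. So every such element has cover time at most $\spr{\pi_{l^*}}+\spr{\sigma_{l^*}}\leq\alpha a^{l^*+3}/(a-1)$ in $\pi_{l^*}\circ\sigma_{l^*}$. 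The same telescoping then gives $\coverageTime\br{\pi_{l^*}\circ\sigma_{l^*},\cU}\leq Q\leq 864\br{\frac{e}{e-1}}^3\coverageTime\br{\cCopt,\cU}$ directly. This covers every element missed by $\pi_{l^*}$, whether or not $\cCopt$ covers it, and works with the algorithm as written (budget $a^{l^*}$). Your guess of $\spr{\cCopt}$ and the split by $\cCopt$-coverage are therefore unnecessary, and the $+1$ in the statement is slack.

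One more point: feasibility is not ``immediate'' for the output. Algorithm~\ref{alg:fPCMSSC} returns the $\coverageTime$-minimizer over all $l$, and a short candidate covering fewer than $f n/\beta$ elements can have smaller cost. Restrict the $\argmin$ to candidates covering at least $f n/\beta$ elements, which is checkable; $\pi_{l^*}\circ\sigma_{l^*}$ qualifies.
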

\begin{theorem}\label{thm:BPCSC-outforest}
    There exist polynomial-time bicriteria $\br{\bigo\br{\log n}, 4}$-approximation algorithms for $\ProblemBPCSC$,  $\ProblemPCSC$ and $\ProblemPCMSSC$ when the precedence constraints form an outforest.
\end{theorem}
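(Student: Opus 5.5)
We reduce $\ProblemBPCSC$ on an outforest to \textsc{Group Steiner Orienteering} on a tree, as indicated by the reduction diagram (Figure~\ref{fig:reductions}). The $\ProblemPCSC$ and $\ProblemPCMSSC$ results then follow from Corollary~\ref{cor:BPCSC-to-PCSC} and Theorem~\ref{thm:BPCSC-to-PCMSSC}. With $\beta=4$ the latter gives a $(\bigo(\alpha\cdot 4^3),4)=(\bigo(\log n),4)$-approximation.

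\textbf{Construction.} Given an outforest $(\cS,\preceq)$, add a new root $r$ as the parent of every minimal set. This yields a rooted tree $T$ on $\cS\cup\{r\}$, where the parent of $S$ is its unique immediate predecessor. Give each edge unit length and use the shortest-path metric of $T$. Each item $u\in\cU$ becomes a group $X_u=\{S\in\cS : u\in S\}$ with weight $w(X_u)=1$, and the budget is set to $2\budget$.

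\textbf{Correspondence.} Precedence-closed subfamilies $\cC$ are exactly the vertex sets of subtrees of $T$ containing $r$ (minus $r$), because closedness in an outforest means closure under taking parents. Such a subtree has $\spr{\cC}$ edges. An Euler tour of it is an $r$-tour of length $2\spr{\cC}$ covering exactly the groups $X_u$ with $u\in\cov(\cC)$. Hence the GSO optimum with budget $2\budget$ is at least $\OPT$ of the $\ProblemBPCSC$ instance. Conversely, the vertices visited by any $r$-tour of length $L$ induce a connected subtree containing $r$ with at most $L/2$ edges. In a tree metric, every edge of this subtree is traversed at least twice by the closed walk. The visited vertices therefore form a precedence-closed family of size at most $L/2$ with the same coverage. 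It follows that an algorithm for GSO on trees that violates the length budget by a factor $\bigo(\log n)$ and loses a factor $4$ in covered weight translates directly into a $(\bigo(\log n),4)$-approximation for $\ProblemBPCSC$.

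\textbf{Main obstacle.} The key step is to obtain such a bicriteria guarantee for GSO on trees with $\bigo(\log n)$ budget loss, since the groups number $n$ and GSO's hardness comes from group Steiner tree. We would invoke the known GSO/Group Steiner machinery of \cite{ApproxAlgsForOptDTsAndAdapTSPProblems}. That work uses LP-based group Steiner tree rounding on trees (Garg--Konjevod--Ravi) inside a budgeted-coverage framework, and it yields an $\bigo(\log n)$ budget violation and a constant loss in coverage.

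If needed, we would rederive the result as follows. Solve the LP relaxation of budgeted group coverage on the tree, choosing the tree-edge flows so that cost is at most $\budget$ and fractional coverage is at least $\OPT$. Apply Garg--Konjevod--Ravi dependent rounding: each group with fractional coverage $y_u$ is hit with probability $\Omega(y_u)$ at expected cost $\bigo(\budget)$. Repeat $\bigo(\log n)$ rounds, or use a Markov and averaging argument, to get cost $\bigo(\log n)\budget$ while retaining a $1/4$ fraction of the coverage. The constant $4$ would then come from combining the rounding's constant hit probability with a Markov bound on the cost.
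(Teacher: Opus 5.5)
Your main line is the paper's proof. You encode the outforest as a tree metric with a dummy root and one group $X_u$ per item, run the $\ProblemGSO$ machinery of \cite{ApproxAlgsForOptDTsAndAdapTSPProblems} directly on this tree metric, and lift the result to $\ProblemPCSC$ and $\ProblemPCMSSC$ via Corollary~\ref{cor:BPCSC-to-PCSC} and Theorem~\ref{thm:BPCSC-to-PCMSSC}. Your bookkeeping (unit edges, budget $2b$, passing to the subtree spanned by the walk) is in fact cleaner than the paper's. The paper's mixed lengths $1$ and $1/2$ give a tour of length $2\spr{\cC}$ minus the number of outforest roots in $\cC$, not $\spr{\cC}$ as claimed, which is harmless only up to a factor $2$.

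The optional rederivation, however, rests on a false step. Garg--Konjevod--Ravi rounding only guarantees that a group carrying LP flow $y_u$ is hit with probability $\Omega(y_u/H)$, where $H$ is the effective depth of the tree; it does not give $\Omega(y_u)$. Indeed, a constant hit probability at expected cost $\bigo(b)$ would, after $\bigo(\log k)$ repetitions, round the group Steiner tree LP on trees with $k$ groups to within $\bigo(\log k)$. That would contradict the known integrality gap of that LP and the $\log^{2-\epsilon}$ hardness used in Theorem~\ref{thm:PCSC-outforest-hardness}. Your own sketch is inconsistent on this point: with hit probability $\Omega(y_u)$ you would not need $\bigo(\log n)$ repetitions at all.

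With the correct bound, the $\Theta(H)$ independent rounds are exactly where the budget loss comes from. The standard ways of bounding $H$ (height reduction, or the $\bigo(\log\spr{X_u})$ analysis with $\spr{X_u}\le m$) give $\bigo(\log m)$, since the tree has $m+1$ vertices. They do not give a bound in terms of the number of groups, so your remark ``since the groups number $n$'' is not what drives the logarithm. The cited $\ProblemGSO$ guarantee is likewise logarithmic in the number of metric points, here $m+1$. So the $\bigo(\log n)$ in the statement, yours and the paper's alike, tacitly assumes $\log m=\bigo(\log n)$.
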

\begin{proof}
    We show how to reduce such an instance to the to $\ProblemGSO$ on tree metrics.
    The latter problem admits a $\br{\bigo\br{\log n}, 4}$-approximation algorithm \cite{ApproxAlgsForOptDTsAndAdapTSPProblems}.
    It should be noted that they give a worse approximation ratio of $\br{\bigo\br{\log^2 n}, 4}$, however their algorithm consists of embedding a general metric into a tree metric with $\bigo\br{\log n}$ distortion.
    Since our metric will be a tree metric, we can skip this step and obtain the improved approximation ratio.
    
    We create this metric as follows: we introduce a dummy root node $r$ and for each set $S\in\cS$ we add a representative node $v_S$ connected to its every successor in the outforest with an edge of length~$1$.
    Additionally, we connect each maximal element of the outforest to $r$ with an edge of length~$1/2$. Finally, for each element $u\in\cU$ we create a group $X_u=\brc{v_S \mid u\in S}$ which consists of all representatives of vertices containing $u$. Since the resulting metric is a tree metric, any tour returned by the algorithm traverses each edge either $0$ or $2$ times.
    Moreover, since the tour starts at the root, we have that if a representative of a set $W$ is visited, the representatives of all of its predecessors in the outforest have been visited.
    Now, observe that if we take a tour in this metric and pick the sets corresponding to the representatives visited by the tour, we obtain a precedence-closed family of sets of size equal to the length of the tour. Conversely, if we take a precedence-closed family of sets, we can construct a tour in this metric that visits the representatives of these sets and has length equal to the size of the family. The number of elements covered by the family is equal to the number of groups $X_u$ visited by the tour.
    Therefore, any feasible solution to $\ProblemGSO$ on this metric corresponds to a feasible solution to $\ProblemBPCSC$ on the original instance with the same cost and coverage, and vice versa. Thus, by Corollary \ref{cor:BPCSC-to-PCSC} and Theorem \ref{thm:BPCSC-to-PCMSSC} theorem follows.
\end{proof}

\section{Hardness} \label{sec:hardness}
\newcommand{\coord}[3]{h[#1, #2, #3]}
\newcommand{\sepedge}[1]{t^{sep}[#1]}
\newcommand{\varedge}[1]{t^{var}[#1]}
\newcommand{\litedge}[1]{t^{cla}[#1]}

In this section, we establish hardness and inapproximability results for precedence-constrained covering and decision tree problems introduced in previous sections.
% Throughout all reductions involving fractional set covering, we assume that $f = 1$.

\subsection{Reductions from precedence constrained set cover}
\begin{theorem}\label{thm:PCSC-to-PCWCDT-hardness}
    Approximating $\ProblemPCWCDT$ is as hard as approximating $\ProblemPCSC$ with $f=1$.
\end{theorem}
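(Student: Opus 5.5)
We give an approximation-preserving reduction: from an instance $\cI=(\cU,\cS,\preceq,f=1)$ of $\ProblemPCSC$ we build an instance $\cI'=(\cH,\cT,\preceq')$ of $\ProblemPCWCDT$ with $\OPT(\cI')=\OPT(\cI)$ (or $\OPT(\cI)+c$ for a small constant $c$). We also show how to convert any decision tree $D$ for $\cI'$ in polynomial time into a precedence-closed cover of size at most $\COSTW(D)$, or $\COSTW(D)+c$.

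\textbf{Construction.} Take the hypotheses to be $\cH=\cU\cup\{h_0\}$, where $h_0$ is a fresh dummy hypothesis. For every set $S\in\cS$ create a test $t_S$ that isolates each element of $S$ and lumps everything else together:
\[
t_S=\{\{u\}: u\in S\}\cup\{\cH\setminus S\}.
\]
The precedence order $\preceq'$ on the tests copies $\preceq$, i.e., $t_S\preceq' t_W$ iff $S\preceq W$. With this choice $h_0$ is never isolated, and $h_0$ is identified exactly when every $u\in\cU$ has been isolated. Since $\bigcup\cS=\cU$, the instance is well defined.

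\textbf{Upper bound.} Given a precedence-closed cover $\cC$, perform its tests in a topological order along a single path. This yields a precedence-closed tree of height $|\cC|$: every $u\in\cU$ is isolated as soon as a set containing it is performed, and $h_0$ is identified after the last test. Hence $\OPT(\cI')\le\OPT(\cI)$.

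\textbf{Lower bound (the key step).} Take any precedence-closed decision tree $D$ and follow the path of the target $h_0$. Along this path every reply is ``not isolated''. At the leaf labelled $h_0$, all elements of $\cU$ must have been separated from $h_0$. An element $u$ is separated from $h_0$ by a test $t_S$ only if $u\in S$, because otherwise $u$ and $h_0$ lie in the same block $\cH\setminus S$. So the sets of the tests on this path cover $\cU$. The path is precedence-closed, since $D$ is precedence-closed, so these sets form a precedence-closed family. Repeated tests only shrink the family, so its size is at most the path length, which is at most $\COSTW(D)$. This gives $\OPT(\cI)\le\OPT(\cI')$, and the extraction of the cover from $D$ is clearly polynomial.

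Combining the two bounds, an $\alpha$-approximation for $\ProblemPCWCDT$ yields an $\alpha$-approximation for $\ProblemPCSC$ with $f=1$, with the same parameters up to $n\mapsto n+1$ and $m$ unchanged.

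The main point to check is that the tests are genuine partitions, which holds because $\cH\setminus S\neq\emptyset$ as it contains $h_0$. It also matters that the worst case is attained along the $h_0$ path. That is exactly why the dummy hypothesis is introduced: it forces a full cover along one root-to-leaf path, so the precedence-closed path itself is the cover.
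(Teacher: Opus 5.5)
Your reduction is correct and is essentially the paper's. As in the paper, each test $t_S$ isolates the members of $S$ and lumps everything else, the precedence order is copied verbatim, and so a precedence-closed tree degenerates to a precedence-closed test path whose sets must cover $\cU$, with height equal to the path length.

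The only real difference is the device that forces a full cover. The paper duplicates each element into $h_u^0,h_u^1$, so the lumped reply always contains both or neither copy and can never shrink to a singleton. You instead add a single dummy $h_0$ lying in no set. Your version needs only $n+1$ rather than $2n$ hypotheses, and it lets you read the cover directly off the $h_0$-path without arguing that the tree is a path. The paper's symmetric doubling makes the average cost exactly $2\cdot\coverageTime$, which is convenient because it reuses the same construction for Theorem~\ref{thm:PCMSSC-to-PCACDT-hardness}. Your gadget would give $\coverageTime$ plus the path length there, which is still within a factor $2$.
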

\begin{proof}
    Let $(\cU, \cS, \preceq)$ be an instance of $\ProblemPCSC$ where $\cU$ is the universe of elements, $\cS$ is the family of sets and $\br{\cS,\preceq}$ are the precedence constraints on $\cS$. We construct an instance of $\ProblemPCWCDT$ as follows. For each element $u \in \cU$, we create two representative hypotheses $h_u^0$ and $h_u^1$. 
    % \DD{tutaj chyba należałoby powiedzieć która hipoteza jest targetem w których sytuacjach?} 
    For each set $S \in \cS$ and $i\in\brc{0,1}$, we create a test $t_S$ that gives a reply $\brc{h_u^i}$  if $u \in S$ and all other hypotheses belong to a reply $H_{S}$ when $u\notin S$. The precedence constraints on tests are the same as the precedence constraints on sets, i.e., if $S_1$ must be selected before $S_2$ in $\ProblemPCSC$, then $t_{S_1}$ must be selected before $t_{S_2}$ in $\ProblemPCWCDT$. Notice, that any valid decision tree for the constructed instance of $\ProblemPCWCDT$ is a path of tests (ignoring leaves) since performing any test $t_S$ either returns a response containing one of the hypotheses $h_u^i$, thus ending the identification process or returns $H_{S}$, which is a singular possible response leading to the next test. Even when the candidate set contains only representatives of one element $u$, we still need to perform a test to distinguish between $h_u^0$ and $h_u^1$, (which could not be a case if there was only one representative hypothesis per element).
    Therefore, any valid decision tree for the constructed instance of $\ProblemPCWCDT$ corresponds to a valid selection of sets in $\ProblemPCSC$ and vice versa. Moreover, the cost of the decision tree is equal to the number of selected sets. Thus, since the size of the reduction is linear any $\alpha$-approximation for $\ProblemPCWCDT$ would yield an $\alpha$-approximation for $\ProblemPCSC$.
\end{proof}
\begin{theorem}\label{thm:PCMSSC-to-PCACDT-hardness}
    Approximating $\ProblemPCACDT$ is up to a constant factor, as hard as approximating $\ProblemPCMSSC$ with $f=1$.
\end{theorem}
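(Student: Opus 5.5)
We reuse the construction from the proof of Theorem~\ref{thm:PCSC-to-PCWCDT-hardness}. Given an instance $(\cU,\cS,\preceq)$ of $\ProblemPCMSSC$ with $f=1$, we create two hypotheses $h_u^0,h_u^1$ for every $u\in\cU$. For every $S\in\cS$ we create a test $t_S$ that isolates each $h_u^i$ with $u\in S$ as a singleton reply and puts all remaining hypotheses into one reply $H_S$. The precedence constraints are copied from $\preceq$. As argued there, every precedence-closed decision tree is (ignoring leaves) a single path $t_{S_1},t_{S_2},\dots$. Every precedence-closed path whose sets cover $\cU$ is a valid decision tree, and conversely every valid decision tree yields a precedence-closed covering sequence $\angl{S_1,S_2,\dots}$, after discarding sets that add nothing. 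The reduction is linear in size.

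The main step is to compare the two costs for a fixed path/sequence $\cC=\angl{S_1,\dots,S_l}$. Take $u$ with $\coverageTime\br{u,\cC}=j$. The first test on the path containing $u$ is $t_{S_j}$, and it isolates both $h_u^0$ and $h_u^1$. Hence $\COST\br{D,h_u^0}=\COST\br{D,h_u^1}=j$, and so $\COSTA\br{D}=2\cdot\coverageTime\br{\cC,\cU}$. A subtlety is that the path may end before every element is isolated: once the candidate set is $\brc{h_u^0,h_u^1}$, any further test covering $u$ finishes the job, so the last element can still cost as much as its cover time. In every case we therefore obtain $2\coverageTime\br{\cC,\cU}-\bigo(1)\le\COSTA\br{D}\le 2\coverageTime\br{\cC,\cU}$ when $\cC$ is a full cover. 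Thus $\OPT$ of the decision tree instance lies within a factor $2$ of $2\cdot\OPT$ of the covering instance, up to an additive constant absorbed by $\coverageTime\ge1$.

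For the algorithmic direction, suppose we have an $\alpha$-approximate decision tree $D$. We read off its path, drop tests that isolate nothing, and obtain a precedence-closed covering sequence whose cover time is at most $\COSTA\br{D}/2\le \alpha\,\OPT_{\ProblemPCACDT}/2\le \bigo(\alpha)\,\OPT_{\ProblemPCMSSC}$. Dropping tests preserves precedence-closure only if the dropped tests are not predecessors of kept ones. To avoid this issue, we simply keep all tests on the path. Extra tests only increase the decision tree cost, so the inequality still holds.

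The main obstacle is the bookkeeping at the end of the path, namely the case where the two twin hypotheses of the last element must be separated, together with the off-by-one convention in $\COST$, which counts tests including the leaf's parent. I will handle this by checking directly that each hypothesis $h_u^i$ is identified exactly at the test $t_{S_j}$ with $j=\coverageTime\br{u,\cC}$. This makes the factor exactly $2$, and it yields the claimed constant-factor equivalence.
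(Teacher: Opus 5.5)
Your proposal is correct and follows the paper's proof. You reuse the twin-hypothesis reduction from Theorem~\ref{thm:PCSC-to-PCWCDT-hardness} and observe that every decision tree is a path whose average cost is twice the coverage time of its test sequence, so approximation ratios transfer. Your version is more careful than the paper's one-line argument: you keep all tests on the path so precedence-closure is preserved, and you obtain the exact identity $\COSTA\br{D}=2\cdot\coverageTime\br{\cC,\cU}$. The hedged ``$-\bigo(1)$'' discussion of the path ending early is unnecessary, because $h_u^0,h_u^1$ can only be separated by a test covering $u$, so the path never ends before every element is covered.
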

\begin{proof}
    We use the same reduction as in the previous theorem. Note that since we doubled the number of hypotheses in the construction, the cost of any decision tree in the constructed instance for $\ProblemPCACDT$ has twice the number of selected sets in $\ProblemPCMSSC$. However, this does not affect the structure of the reduction and thus any $\alpha$-approximation for $\ProblemPCACDT$ would yield an $\cO\br{\alpha}$-approximation for $\ProblemPCMSSC$.
\end{proof}

\subsection{Outforest precedence constraints}
\begin{theorem}\label{thm:PCSC-outforest-hardness}
    $\ProblemPCSC$ with $f=1$ and outforest precedence constraints cannot be approximated within a factor of $\cO\br{\log^{2-\epsilon} n}$ for any $\epsilon > 0$ unless $\text{NP}\subseteq \text{ZTIME}\br{n^{\text{polylog}(n)}}$.
\end{theorem}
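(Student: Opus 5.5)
We reduce from \textsc{Group Steiner Tree} on trees. By the result of Halperin and Krauthgamer, this problem cannot be approximated within $\cO\br{\log^{2-\epsilon} k}$ (with $k$ the number of groups, and also in terms of the number of vertices) unless $\text{NP}\subseteq \text{ZTIME}\br{n^{\text{polylog}(n)}}$. The reduction is essentially the reverse of the construction in the proof of Theorem~\ref{thm:BPCSC-outforest}, where an outforest $\ProblemBPCSC$ instance was encoded as $\ProblemGSO$ on a tree metric.

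\textbf{Reductions to a clean instance.} Let the $\ProblemGST$ instance be a tree $T$ rooted at $r$ with edge lengths, together with groups $\cX$ of vertices. First we make it unweighted. By standard preprocessing (scaling, rounding, and subdividing edges into unit-length paths) the lengths become unit at the cost of a $(1+o(1))$ factor and a polynomial blow-up in size. In fact the hard instances of Halperin--Krauthgamer already live on unit-length trees of polynomial size, so we may use those directly. Next we may assume every group vertex is a leaf, since otherwise we can hang a zero/unit pendant leaf, or simply keep internal vertices, which causes no harm.

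\textbf{The set cover instance.} For each non-root vertex $v$ of $T$ we create a set $S_v=\brc{X\in\cX \mid v\in X}$. The universe is $\cU=\cX$, so $n=\spr{\cX}$. We set $S_w\preceq S_v$ whenever $w$ is an ancestor of $v$ (with $w\neq r$). Since each vertex has a unique parent, this order is an outforest, and $f=1$.

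\textbf{Correspondence of solutions.} We check that an $r$-tour $\tau$ in a tree metric covering all groups has length exactly twice the number of edges of the subtree it spans. That subtree is rooted at $r$, so its non-root vertex set is precedence-closed and corresponds bijectively to a precedence-closed family $\cC=\brc{S_v}$ with $\spr{\cC}=\spr{\tau}/2$, which covers all groups. Conversely, any precedence-closed cover $\cC$ is a rooted subtree, and a DFS tour of it has length $2\spr{\cC}$. Hence $\OPT$ is preserved up to the factor $2$, and an approximation for $\ProblemPCSC$ maps back to a tour with the same ratio.

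\textbf{Transferring the exponent.} The main obstacle is the parameter in which the hardness is stated. Our $n$ is the number of groups $k$, while the $\ProblemGST$ hardness of Halperin--Krauthgamer is expressed as $\log^{2-\epsilon}$ of the instance size. In their construction the number of groups and the tree size are polynomially related, since $\log k=\Theta(\log N)$. So an $\cO\br{\log^{2-\epsilon} n}$-approximation for $\ProblemPCSC$ would give an $\cO\br{\log^{2-\epsilon'} N}$-approximation for $\ProblemGST$, contradicting their result. I would verify this polynomial relation from their construction, or alternatively pad the universe with dummy elements covered by a single root-level set. Padding raises $n$ only polynomially and changes $\OPT$ by at most one, which is negligible because $\OPT$ is large in the hard instances.
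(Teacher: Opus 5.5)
Your proposal is correct and follows essentially the same route as the paper. Subdividing each edge into unit paths and taking one set per non-root vertex is exactly the paper's chain $S_v^1\preceq\dots\preceq S_v^{d(e)}$ of representatives, with the universe being the groups and the outforest given by ancestry. Do not rely on the claim that the Halperin--Krauthgamer instances already have unit lengths: the paper notes their distances are of the form $2^{-h}$. Your scaling-and-subdivision step handles this exactly, with no rounding loss, since $2^h$ is polynomial. Your explicit treatment of the tour-versus-tree factor $2$, and of the parameter $n=\spr{\cX}$ via padding, fills in details the paper glosses over.
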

\begin{proof}
    We show that $\ProblemGST$ on tree metrics is reducible to $\ProblemPCSC$ with outforest precedence constraints. The  $\ProblemGST$ on trees cannot be approximated within a factor of $\cO\br{\log^{2-\epsilon} n}$ for any $\epsilon > 0$ unless $\text{NP}\subseteq \text{ZTIME}\br{n^{\text{polylog}(n)}}$ \cite{PolylogarithmicInapproximability}. In the latter reduction, all distances are of the form $2^{-h}$ with exponent ranging from $0$ to $h=\cO\br{\log^{1-\epsilon} n}$. Therefore, we can scale all of them by a factor of $2^h=\text{poly}\br{n}$ to obtain integer and polynomially bounded distances without affecting the approximation ratio.

    Let $\br{T=(V, d), \mathcal{X}}$ be an instance of $\ProblemGST$ where $T$ is a tree metric with root $r$ and $\mathcal{X} \subseteq 2^V$ are the groups. We construct an instance of $\ProblemPCSC$ with outforest precedence constraints as follows. The universe is $\cX$ and the sets are created as follows: Since to include a node in the Steiner Tree we need to include its parent edge $e$ of length $d(e)$, for each vertex $v\neq r$ we create its $d(e)$ representatives $S_v^1 \preceq \ldots \preceq S_v^{d(e)}$ so that taking $S_v^{d(e)}$ to the cover requires taking all of the previous representatives. Additionally, for any directed edge $uv\in E$ such that $u \neq r$ and $e$ is the parent edge of $u$, we set $S^{d(e)}_u \preceq S^1_v$ to enforce the condition that including a node in a Steiner Tree requires including its parent node with its parent edge. If $v \in X$ and $e$ is the parent edge of $v$, we set $S_{d(e)}$ to cover $X$. It is easy to see that in order to cover any $X\in\cX$ by a vertex $v$, such that $x\in X$ one needs to include all representatives of edge $e$ above $v$. This means that any reasonable cover either picks no representative of a given vertex or all of them. By construction, any valid selection of sets in the constructed instance of $\ProblemPCSC$ corresponds to a valid Steiner Tree in the instance of $\ProblemGST$ and vice versa. Moreover, the cost of the selected sets is equal to the cost of the Steiner Tree. Since the construction size is polynomial, $\ProblemPCSC$ with outforest precedence constraints cannot be approximated within a factor of $\cO\br{\log^{2-\epsilon} n}$ for any $\epsilon > 0$ unless $\text{NP}\subseteq \text{ZTIME}\br{n^{\text{polylog}(n)}}$.
\end{proof}
As an immediate corollaries of Theorems \ref{thm:PCSC-to-PCWCDT-hardness}, \ref{thm:PCSC-outforest-hardness}, \ref{thm:PCMSSC-to-PCACDT-hardness} and \ref{thm:PCMSSC-outforest-hardness} we obtain the following results: 
\begin{corollary}\label{thm:PCWCDT-outforest-hardness}
    $\ProblemPCWCDT$ with outforest precedence constraints cannot be approximated within a factor of $\cO\br{\log^{2-\epsilon} n}$ for any $\epsilon > 0$ unless $\text{NP}\subseteq \text{ZTIME}\br{n^{\text{polylog}(n)}}$.
\end{corollary}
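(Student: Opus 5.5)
The plan is to compose the two hardness results already established: Theorem~\ref{thm:PCSC-outforest-hardness} for $\ProblemPCSC$ with $f=1$ and outforest constraints, and the approximation-preserving reduction of Theorem~\ref{thm:PCSC-to-PCWCDT-hardness}. Suppose, for contradiction, that some polynomial-time algorithm approximates $\ProblemPCWCDT$ on outforests within $\cO\br{\log^{2-\epsilon} n_{DT}}$, where $n_{DT}$ is the number of hypotheses. Take an instance $(\cU,\cS,\preceq)$ of $\ProblemPCSC$ with $f=1$ whose precedence relation is an outforest, and apply the construction from Theorem~\ref{thm:PCSC-to-PCWCDT-hardness}. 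Each element $u$ becomes two hypotheses $h_u^0,h_u^1$, and each set $S$ becomes a test $t_S$.

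The first thing to check is that the construction preserves the outforest property. The tests are in bijection with the sets, and $t_{S_1}\preceq t_{S_2}$ holds exactly when $S_1\preceq S_2$. The precedence relation on tests is therefore isomorphic to the original one, so every test still has at most one predecessor.

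Next, transfer the approximation guarantee. By Theorem~\ref{thm:PCSC-to-PCWCDT-hardness}, precedence-closed decision trees in the constructed instance correspond to precedence-closed covers. Under this correspondence the worst-case cost of a tree equals the size of the cover, and the optimal values coincide. Given a decision tree, we read off the sets whose tests lie on its unique long path and obtain a cover with the same cost, in polynomial time. Thus an $\alpha$-approximation for $\ProblemPCWCDT$ yields an $\alpha$-approximation for $\ProblemPCSC$ on the same outforest.

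The remaining step is to track the parameter, and this is where I expect the only subtlety. The hypothesis count is $n_{DT}=2|\cU|=2n$, so $\log^{2-\epsilon} n_{DT}=\Theta\br{\log^{2-\epsilon} n}$. An $\cO\br{\log^{2-\epsilon} n_{DT}}$-approximation for $\ProblemPCWCDT$ would then give an $\cO\br{\log^{2-\epsilon} n}$-approximation for outforest $\ProblemPCSC$ with $f=1$. Theorem~\ref{thm:PCSC-outforest-hardness} rules this out unless $\text{NP}\subseteq\text{ZTIME}\br{n^{\text{polylog}(n)}}$, so the corollary would follow.

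One point needs care when applying Theorem~\ref{thm:PCSC-outforest-hardness}. Its parameter $n$ is the size of the $\ProblemPCSC$ universe, which equals the number of groups of the underlying $\ProblemGST$ instance. If the $\ProblemGST$ hardness is stated in terms of the number of vertices, I would note that the number of groups and the number of vertices are polynomially related in that reduction. A polynomial change of parameter only rescales $\log n$ by a constant, so the $\log^{2-\epsilon}$ bound is unaffected.
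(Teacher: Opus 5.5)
Your proposal is correct and follows the paper's argument exactly: the corollary comes from composing Theorem~\ref{thm:PCSC-outforest-hardness} with the reduction of Theorem~\ref{thm:PCSC-to-PCWCDT-hardness}. Your explicit checks are precisely the details the paper leaves implicit when it calls the result immediate: the test poset is isomorphic to the set poset, so the outforest structure survives, and there are only $2n$ hypotheses, so $\log^{2-\epsilon}$ changes by at most a constant factor.
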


We now establish a similar result for $\ProblemPCMSSC$.
\begin{theorem}\label{thm:PCMSSC-outforest-hardness}
    $\ProblemPCMSSC$ with outforest precedence constraints cannot be approximated within a factor of $\cO\br{\log^{2-\epsilon} n}$ for any $\epsilon > 0$ unless $\text{NP}\subseteq \text{ZTIME}\br{n^{\text{polylog}(n)}}$.
\end{theorem}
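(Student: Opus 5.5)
We reduce from a min-latency version of group Steiner tree on tree metrics, namely the \textsc{Latency (Partial) Group Steiner Tree} problem ($\ProblemLPGST$) referenced in Figure~\ref{fig:hardness_diagram}. The gadget is the same as in the proof of Theorem~\ref{thm:PCSC-outforest-hardness}. The hardness source is the $\cO\br{\log^{2-\epsilon} n}$ inapproximability of $\ProblemGST$ on trees~\cite{PolylogarithmicInapproximability}. We first argue that it transfers to the latency variant on trees, and the natural way to do this is a padding argument. Given a $\ProblemGST$ instance $\br{T,\cX}$ with integer, polynomially bounded edge lengths (after the scaling used in the proof of Theorem~\ref{thm:PCSC-outforest-hardness}), we add a large number $N$ of copies of a ``terminal'' group. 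This group can only be reached after all original groups are covered, for example because it sits below a deep path that costs more than any reasonable tour. Alternatively, we simply weight the original groups negligibly and the padding groups heavily. Either way, the latency objective is dominated by $N$ times the time at which the last group of $\cX$ is covered, which is the $\ProblemGST$ cost up to a factor of $2$ (tour versus tree).

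Next we encode the latency tree instance as $\ProblemPCMSSC$ with an outforest, exactly as in Theorem~\ref{thm:PCSC-outforest-hardness}. Each vertex $v\neq r$ with parent edge $e$ becomes a chain $S_v^1\preceq\dots\preceq S_v^{d(e)}$, the last set in the chain is hooked below the chain of the parent, and $S_v^{d(e)}$ covers every element $X\in\cX$ with $v\in X$. Padding groups become $N$ distinct elements, each covered by the last representative of its padding vertex. A precedence-closed sequence then corresponds to a traversal of a growing subtree, and the coverage time of an element equals the total edge length of the subtree built before the group is hit. This is exactly the ``tree latency'' of that group, which is within a factor $2$ of the tour latency. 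We set $f=1$, or any $f$ that forces the padding elements to be covered.

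The approximation-preservation argument then goes as follows. Suppose an $\alpha$-approximation for $\ProblemPCMSSC$ exists with $\alpha=\cO\br{\log^{2-\epsilon}n}$. Applied to the constructed instance, it yields a sequence of cost at most $\alpha\cdot\OPT$. Here $\OPT\le N\cdot \OPT_{\ProblemGST}+|\cX|\cdot\OPT_{\ProblemGST}$, while the cost of the returned sequence is at least $N$ times the size of the prefix covering all of $\cX$. Choosing $N\ge |\cX|$ therefore gives a Steiner tree of cost $\cO(\alpha)\cdot\OPT_{\ProblemGST}$. The instance size stays polynomial, so $\log n$ changes only by a constant factor.

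The main obstacle is the padding gadget. We must ensure that the padding elements cannot be covered before all groups of $\cX$ without destroying the outforest structure; an outforest cannot express ``after all of these'' directly. The fix we would try is to make the padding elements cheap to cover only through the sets that already cover every group. Concretely, we let each padding element be contained in every set $S_v^{d(e)}$ for all vertices $v$, so that it counts as covered at the first vertex reached. This fails, however, so instead we weight by duplication: we replace each group $X$ by $N$ identical copies, so that all groups matter equally. We then use the standard fact that the min-latency objective upper-bounds the maximum cover time only up to a $\log$ factor. To avoid losing that factor, we take the hardness directly from the latency/partial variant $\ProblemLPGST$ on trees, as Figure~\ref{fig:hardness_diagram} suggests, and only need the chain encoding above to be cost-preserving.
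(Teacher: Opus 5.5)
\textbf{There is a genuine gap: your argument never produces an $\cO\br{\log^{2-\epsilon} n}$ hardness for a min-sum objective.}

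With $f=1$ the padding elements must be covered, and an outforest cannot force this to happen after all of $\cX$. Placing them below a deep path of length $D$ does not order them after $\cX$, since a solution may walk down that path first. It also adds a term of order $N\cdot D$ to the cost of every feasible solution, including the optimum, and this term swamps the $\ProblemGST$ cost. So your two inequalities, $\OPT\le (N+\spr{\cX})\cdot\OPT_{\ProblemGST}$ and ``cost $\ge N$ times the length of the prefix covering $\cX$'', cannot both hold. Duplicating every group $N$ times only rescales the objective.

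Your final fallback, ``take the hardness directly from $\ProblemLPGST$ on trees'', is an unsupported assumption. No such bound is proved or cited, and the arrow in Figure~\ref{fig:hardness_diagram} is not a proof. Your chain gadget makes latency group Steiner tree on trees and $\ProblemPCMSSC$ on outforests essentially interchangeable, so assuming that hardness amounts to assuming the theorem.

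\textbf{The missing idea.} The ``after all of these'' constraint you could not express in an outforest is already built into the objective once $f<1$: an element not covered by $\sigma$ has coverage time $\spr{\sigma}$. The paper reduces directly from $\ProblemGST$ on trees with your chain encoding, plus the following:
\begin{itemize}
\item the universe is $\cU=\cX\cup[L\cdot\spr{\cX}]$, where $L$ is the total edge length;
\item the dummy elements lie only in the last set of a chain of $L\spr{\cX}+1$ sets, which is too expensive for any reasonable solution to enter;
\item $f=\spr{\cX}/\spr{\cU}$, which then forces every reasonable solution to cover all of $\cX$.
\end{itemize}
For any such chain-free $\sigma$, the dummies contribute exactly $L\spr{\cX}\cdot\spr{\sigma}$ and $\cX$ contributes at most $\spr{\cX}\cdot\spr{\sigma}$. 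Hence $L\spr{\cX}\cdot\spr{\sigma}\le\coverageTime\br{\sigma,\cU}\le (L+1)\spr{\cX}\cdot\spr{\sigma}$. An $\alpha$-approximation of the coverage time is therefore an $\alpha(1+1/L)$-approximation of $\spr{\sigma}$, which is the cost of the group Steiner tree read off from the chain encoding.

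Your choice of $f=1$, ``or any $f$ that forces the padding elements to be covered'', is exactly what removes this mechanism. This is also why the paper remarks that the result does not carry over to $\ProblemPCACDT$.
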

\begin{proof}
    We show that $\ProblemGST$ on tree metrics is reducible to $\ProblemPCMSSC$ with outforest precedence constraints. Recall, that $\ProblemGST$ on trees cannot be approximated within a factor of $\cO\br{\log^{2-\epsilon} n}$ for any $\epsilon > 0$ unless $\text{NP}\subseteq \text{ZTIME}\br{n^{\text{polylog}(n)}}$ \cite{PolylogarithmicInapproximability}. By the same trick as above we may assume that all of the distances are integer and polynomially bounded.

    Let $\br{T=(V, d), \mathcal{X}}$ be an instance of $\ProblemGST$ where $T$ is a tree metric with a root $r$ and $\mathcal{X} \subseteq 2^V$ are the groups. We construct an instance of $\ProblemPCMSSC$ with outforest precedence constraints as follows. Let $L=\sum_{e\in T}d\br{e}$ be the total length of the tree. 
    The universe is $\cU=\cX\cup[L\cdot\spr{\cX}]$, we set $f=\frac{\spr{\cX}}{\spr{\cU}}$ and the sets are created as follows: Since to include a node in the Steiner Tree we need to include its parent edge $e$ of length $d(e)$, for each vertex $v\neq r$ we create its $d(e)$ representatives $S_v^1 \preceq \ldots \preceq S_v^{d(e)}$ so that taking $S_v^{d(e)}$ to the cover requires taking all of the previous representatives. Additionally, for any directed edge $uv\in E$ such that $u \neq r$ and $e$ is the parent edge of $u$, we set $S^{d(e)}_u \preceq S^1_v$ to enforce the condition that including a node in a Steiner Tree requires including its parent node with its parent edge. If $v \in X$ and $e$ is the parent edge of $v$, we set $S_{d(e)}$ to cover $X$. It is easy to see that in order to cover any $X\in\cX$ by a vertex $v$, such that $x\in X$ one needs to include all representatives of edge $e$ above $v$. This means that any reasonable cover either picks no representative of a given vertex or all of them. Moreover, we create a long chain of sets $S_1, S_2, \ldots, S_{L\cdot\spr{\cX}+1}$ such that $S_i \preceq S_{i+1}$ for all $i\in[L\cdot\spr{\cX}]$ and $S_{L\cdot\spr{\cX}+1}$ covers all of the elements in $[L\cdot\spr{\cX}]$. It is observe that any sensible cover never picks any of the sets $S_1, S_2, \ldots, S_{L\cdot\spr{\cX}+1}$. This is because by our choice of $f$, we only require to cover $\spr{\cX}$ elements, so even a naive solution $\sigma$ which picks representatives of all vertices in the tree (and thus covers entire~$\cX$) has cost bounded by $\coverageTime\br{\sigma}\leq L\cdot \spr{\cX}+L^2\cdot\spr{\cX}=L\cdot \spr{\cX}\cdot\br{L+1}$. On the other hand picking any of the sets $S_1, S_2, \ldots, S_{L\cdot\spr{\cX}}$ is reasonable only if we also pick $S_{L\cdot\spr{\cX}+1}$ which drives the cost of any such solution $\pi$ up to at least $\coverageTime\br{\pi}\geq\br{L\cdot\spr{\cX}+1}\cdot L\cdot\spr{\cX}$ which for $\spr{\cX}\geq 2$ is larger then $L\cdot \spr{\cX}\cdot\br{L+1}$. 
    
    Now, it is easy to see, that the contribution of elements in $\cX$ to the cost of the covering sequence is neglible by comparison to the contribution of elements in $[L\cdot\spr{\cX}]$. This is due to the fact that for any sequence $\sigma$, the contribution of elements in $\cX$ to $\coverageTime\br{\sigma}$ is at most $\spr{\cX}\cdot\spr{\sigma}\leq \spr{\cX}\cdot L$ while the contribution of elements in $[L\cdot\spr{\cX}]$ to $\coverageTime\br{\sigma}$ is at least $L\cdot\spr{\cX}\cdot\spr{\sigma}$. Threfore, asymptotically, minimizing $\coverageTime\br{\sigma}$ is equivalent to minimizing $\spr{\sigma}$. This in turn is equivalent to minimizing the size of the Steiner Tree (using a similar argument as in the previous reduction). Since the construction size is polynomial, $\ProblemPCMSSC$ with outforest precedence constraints cannot be approximated within a factor of $\cO\br{\log^{2-\epsilon} n}$ for any $\epsilon > 0$ unless $\text{NP}\subseteq \text{ZTIME}\br{n^{\text{polylog}(n)}}$.
\end{proof}
Importantly, because in the above reduction $f\neq 1$, this result cannot be easily transferred to $\ProblemPCACDT$.

\subsection{General precedence constraints}
In this section we show strong inapproximability results for $\ProblemPCWCDT$ and $\ProblemPCACDT$ with general precedence constraints by reducing from the \textsc{Planted Dense Subgraph Conjecture} which is a widely believed statement about hardness of detecting a dense component within an Erd\H{o}s-Renyi graph \cite{OnApproxTargetSetSelection,PCMSSC}.
\begin{conjecture}[Planted Dense Subgraph (PDS) Conjecture] \label{problem:PDS}
For any constants $\beta < \alpha$ and any $k\geq \sqrt{N}$, there is no polynomial time algorithm that can distinguish between the following two distributions of graphs with any advantage $\epsilon > 0$: 
\begin{itemize}
    \item With probability 1/2, an Erdős-Renyi graph $G(N, N^{\alpha - 1})$.
    \item With probability 1/2, an Erdős-Renyi graph $G(N, N^{\alpha - 1})$ with a planted subgraph of size $k$ and edge density $k^{\beta - 1}$.
\end{itemize}
\end{conjecture}
% The Planted Dense Subgraph Conjecture states that for any constants $ \beta < \alpha $ and any $k\geq \sqrt{N}$, there is no polynomial time algorithm that can distinguish between the following two distributions of graphs with any advantage $\epsilon > 0$:
% \begin{itemize}
%     \item With probability 1/2, $G_1$: an Erdős-Renyi graph $G(N, N^{\alpha - 1})$,
%     \item With probability 1/2, $G_2$: an Erdős-Renyi graph $G(N, N^{\alpha - 1})$ with a planted subgraph of size $k$ and edge density $k^{\beta - 1}$.
% \end{itemize}
Using this conjecture \cite{PCMSSC} have showed the following inapproximability bound:
\begin{theorem}\label{thm:PCMSSC-PDS-hardness}
    For any $\epsilon>0$, $\ProblemPCMSSC$ cannot be approximated within a factor of $\cO\br{m^{1/6-\epsilon}}$ nor $\cO\br{n^{1/12-\epsilon}}$ condition to \textsc{Planted Dense Subgraph Conjecture}.
\end{theorem}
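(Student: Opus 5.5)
The plan is to reproduce the reduction of \cite{PCMSSC} from the \textsc{Planted Dense Subgraph Conjecture} (Conjecture~\ref{problem:PDS}) to $\ProblemPCMSSC$ with $f=1$. An approximation algorithm with ratio better than the gap would then serve as a distinguisher: run it and compare the returned cost with a threshold placed between the two regimes.

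\textbf{Construction.} Given a graph $G=(V,E)$ on $N$ vertices drawn from one of the two distributions, I would introduce one set $S_v$ for every $v\in V$, covering nothing. For every edge $e=uv\in E$ I would introduce one set $S_e$ with $S_u\preceq S_e$ and $S_v\preceq S_e$, and let $S_e$ cover a private element $x_e$.

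A precedence-closed covering sequence is then essentially an ordering of the vertices. An edge element is covered roughly at the moment both of its endpoints have appeared, so the cover time of $x_e$ is about the position of the later endpoint plus the number of edge-sets inserted before it. Here $n=\Theta(|E|)$ and $m=N+|E|$.

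Because edges outside any dense part contribute about $N$ each in both distributions, I expect to need an amplification step so that the early, dense part of the order dominates the objective. The first thing I would try is to restrict attention to a prefix budget: add a long chain, analogous to the chain in the proof of Theorem~\ref{thm:PCMSSC-outforest-hardness}, whose elements are numerous enough that the total cost is governed by how many edges are covered within the first $\Theta(k)$ vertex-sets. Alternatively, I would choose $\alpha$ small enough that the random part is sparse and contributes only lower-order cost.

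\textbf{Two estimates.} The core of the argument is the following pair of bounds.

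\emph{Planted case.} Schedule the $k$ planted vertices first and interleave their induced edges greedily. This covers about $k^{1+\beta}$ elements at times $O(k+k^{1+\beta})$, which gives an upper bound on $\OPT$.

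\emph{Random case.} I would use a first-moment union bound over vertex subsets. With high probability, every $s$-subset of $G(N,N^{\alpha-1})$ spans at most $\max\{s,\,s^2N^{\alpha-1}\}\cdot N^{o(1)}$ edges. Summing over prefixes of an arbitrary ordering then lower-bounds the min-sum cost: an ordering can have only few covered elements after each prefix, so elements wait longer on average.

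Taking the ratio of the two bounds and setting $k=\sqrt N$ with $\beta$ close to $\alpha$, I would optimize the exponents and express the gap in terms of $m$ and $n$. By the statement, the target is $m^{1/6-\epsilon}$ and $n^{1/12-\epsilon}$, though I expect the roles of $m$ and $n$ in those two bounds may need rechecking once the exponents are fixed.

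\textbf{Main obstacle.} The hardest step is the min-sum accounting. Unlike the size objective, cover times of the sparse bulk are comparable in both distributions, so the construction must be tuned, by the choice of $\alpha$ or by padding as above, so that the dense prefix genuinely changes the total cost by the polynomial factor. Once that is in place, the random-graph density bound and the exponent optimization are routine calculations.
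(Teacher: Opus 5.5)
There is a genuine gap, and it is exactly the step you defer as the ``main obstacle''. With $f=1$ and one private element $x_e$ per edge, every feasible sequence must contain all $|E|$ edge sets, so $\OPT\ge |E|(|E|+1)/2$. On the other hand, listing the at most $2|E|$ non-isolated vertex sets and then all edge sets costs at most $3|E|^2$. So your instances are $6$-approximable by a trivial algorithm for \emph{every} input graph, and no choice of $\alpha,\beta,k$ can produce a polynomial gap.

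Your remedies do not fix this. The chain from the proof of Theorem~\ref{thm:PCMSSC-outforest-hardness} works there only because $f<1$ lets solutions skip it. With $f=1$ every solution must traverse it, which adds the same cost in both distributions.

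Relaxing to $f<1$ also fails as long as each vertex is a single set. W.h.p.\ $G(N,N^{\alpha-1})$ is connected, so any $K\le N-1$ edge elements can be covered by a subtree using $2K+1$ sets. That is within a factor $3$ of the planted solution. The planted subgraph wins only on edges per vertex, which is worthless unless vertices are expensive.

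The missing ideas come from \cite{PCMSSC}. The paper does not prove this theorem itself; it cites that reduction and adapts its construction in the proof of Theorem~\ref{thm:PCSC-PDS-hardness}, with $k=\sqrt N$, $\alpha=1/2$, $\beta=1/2-\gamma$.

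First, each vertex set is replicated $\lambda=N^{1/4}$ times, so the cost is driven by the number of vertices used. Second, edges do not get private elements. The universe is $[n]$ with $n=\Theta(N^{3/4})$. Each vertex gets a random $\cU_v\subseteq[n]$ of density $p=\Theta(N^{-3/8+\gamma/4}\log N)$, and the edge set is $E_{uv}=\cU_u\cap\cU_v$.

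This is calibrated so that w.h.p.\ the planted component's edges alone cover all of $[n]$. With $f=1$ a solution can then stop after about $\lambda\sqrt N+N^{3/4-\gamma/2}=O(N^{3/4})$ sets, never touching the rest of the graph. The min-sum cost is therefore $O(nN^{3/4})$.

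In $G(N,N^{-1/2})$, by contrast, every $x$-vertex set spans $O(x^2/\sqrt N+x\log N)$ edges, each covering $O(np^2)$ elements. So any prefix using fewer than roughly $N^{5/8}$ vertices covers only a small constant fraction of $[n]$; this ignores $N^{O(\gamma)}$ and logarithmic factors. Hence $\Omega(n)$ elements wait at least about $\lambda N^{5/8}=N^{7/8}$ steps, and the gap is about $N^{1/8}$.

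Since $m=\Theta(N^{3/2})$, this gap is $n^{1/6}=m^{1/12}$. That confirms your suspicion: the exponents of $m$ and $n$ in the statement are swapped relative to what the reduction yields, and relative to the abstract and the hardness table.
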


By Theorem \ref{thm:PCMSSC-to-PCACDT-hardness} we immediately obtain:
\begin{corollary}\label{thm:PCACDT-PDS-hardness}
    For any $\epsilon>0$, $\ProblemPCACDT$ cannot be approximated within a factor of $\cO\br{m^{1/6-\epsilon}}$ nor $\cO\br{n^{1/12-\epsilon}}$ condition to \textsc{Planted Dense Subgraph Conjecture}.
\end{corollary}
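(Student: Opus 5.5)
The plan is to compose the two earlier results directly. Theorem~\ref{thm:PCMSSC-PDS-hardness} says that, under the \textsc{Planted Dense Subgraph Conjecture}, $\ProblemPCMSSC$ cannot be approximated within the stated polynomial factors. The reduction behind that result, due to \cite{PCMSSC}, uses $f=1$, since it concerns full min-sum covering. Theorem~\ref{thm:PCMSSC-to-PCACDT-hardness} gives a linear-size, approximation-preserving (up to a constant) reduction from $\ProblemPCMSSC$ with $f=1$ to $\ProblemPCACDT$. So I argue by contraposition: assume a polynomial-time $\cO\br{m^{1/6-\epsilon}}$- (resp.\ $\cO\br{n^{1/12-\epsilon}}$-) approximation for $\ProblemPCACDT$ exists, and derive a corresponding approximation for $\ProblemPCMSSC$.

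The steps are as follows. Take an instance $(\cU,\cS,\preceq)$ of $\ProblemPCMSSC$ with $f=1$. Build the instance of Theorem~\ref{thm:PCSC-to-PCWCDT-hardness}: two hypotheses $h_u^0,h_u^1$ per element, one test $t_S$ per set, and the same precedence order. Its parameters are $m'=\spr{\cS}=m$ tests and $n'=2n$ hypotheses.

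I would then check that every precedence-closed decision tree is a path of tests $t_{S_1},t_{S_2},\dots$. Along this path, the pair of hypotheses for $u$ is isolated exactly at the first test whose set contains $u$. The average-case cost is therefore $2\cdot\coverageTime$ of the corresponding precedence-closed sequence, up to an additive/constant term accounting for the final distinguishing test. Conversely, every covering sequence yields such a tree. This gives $\OPT_{\ProblemPCACDT}=\Theta\br{\OPT_{\ProblemPCMSSC}}$, and the tree produced by the assumed algorithm converts back into a sequence of cost within a constant factor. The result is an $\cO\br{m'^{1/6-\epsilon}}=\cO\br{m^{1/6-\epsilon}}$ approximation, and likewise $\cO\br{(2n)^{1/12-\epsilon}}=\cO\br{n^{1/12-\epsilon}}$, since constants are absorbed. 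This contradicts Theorem~\ref{thm:PCMSSC-PDS-hardness}.

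The main point of care, and the expected obstacle, is making sure the parameters line up. The reduction must preserve $m$ exactly and $n$ up to a constant factor, so that polynomial factors in $m$ and $n$ transfer with the same exponents. The constant-factor loss in Theorem~\ref{thm:PCMSSC-to-PCACDT-hardness} must not affect the asymptotic bound, and the hard instances from \cite{PCMSSC} must indeed have $f=1$. I would verify the latter by inspecting that reduction, which requires covering the whole universe. Everything else is immediate bookkeeping.
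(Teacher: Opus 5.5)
Your proposal is correct and follows the paper's proof, which obtains the corollary in one line by composing Theorem~\ref{thm:PCMSSC-PDS-hardness} with the reduction of Theorem~\ref{thm:PCMSSC-to-PCACDT-hardness}. Your added bookkeeping is what justifies the transfer of exponents: $m'=m$ tests, $n'=2n$ hypotheses, and $f=1$ in the hard instances of \cite{PCMSSC}. Also, no additive correction is needed, since each $t_S$ returns singletons for every $u\in S$, so the average cost is exactly $2\cdot\coverageTime$ of the corresponding sequence.
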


Below we show that a similar reduction can also be used to obtain the same inapproximability result for $\ProblemPCSC$. The construction and analysis are almost the same as the one in \cite{PCMSSC} but the reduction is quite technical and we include it for completeness. In the reduction we will often make use of the Chernoff Bound which is as follows:
\begin{theorem}[Chernoff Bound]\label{thm:chernoff}
    Let $X_1, X_2, \ldots, X_n$ be independent random variables taking values in $\brc{0,1}$ such that for every $i\in[n]$, $\mathbb{P}[X_i=1]=p$. Let $X=\sum_{i=1}^n X_i$ and $\mu=\mathbb{E}[X]=p\cdot n$. Then, for every $\delta\in(0,1)$ it holds that:
    \begin{align*}
        \mathbb{P}[X\leq (1-\delta)\mu] &\leq e^{-\frac{\delta^2\mu}{2}},\\
        \mathbb{P}[X\geq (1+\delta)\mu] &\leq e^{-\frac{\delta^2\mu}{3}}.
    \end{align*}
\end{theorem}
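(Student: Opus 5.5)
We follow the standard moment-generating-function argument, treating the upper and lower tails separately. For any $\lambda>0$, Markov's inequality applied to the nonnegative variable $e^{\lambda X}$ gives
\[
\mathbb{P}[X\geq (1+\delta)\mu] \leq \frac{\mathbb{E}[e^{\lambda X}]}{e^{\lambda(1+\delta)\mu}} .
\]
Applied to $e^{-\lambda X}$ it gives
\[
\mathbb{P}[X\leq (1-\delta)\mu]\leq \frac{\mathbb{E}[e^{-\lambda X}]}{e^{-\lambda(1-\delta)\mu}} .
\]
By independence of $X_1,\dots,X_n$, the expectation factorizes as $\mathbb{E}[e^{\pm\lambda X}]=\prod_{i=1}^n \mathbb{E}[e^{\pm\lambda X_i}]=\br{1+p(e^{\pm\lambda}-1)}^n$. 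Using $1+x\leq e^x$ this is at most $e^{\mu(e^{\pm\lambda}-1)}$, since $np=\mu$.

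Next we optimize over $\lambda$. For the upper tail we take $\lambda=\ln(1+\delta)$, and for the lower tail we take $\lambda=-\ln(1-\delta)$, which is positive for $\delta\in(0,1)$. This yields the classical bounds
\[
\mathbb{P}[X\geq(1+\delta)\mu]\leq\br{\frac{e^{\delta}}{(1+\delta)^{1+\delta}}}^{\mu},\qquad
\mathbb{P}[X\leq(1-\delta)\mu]\leq\br{\frac{e^{-\delta}}{(1-\delta)^{1-\delta}}}^{\mu}.
\]

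The remaining step, and the only real work, is to simplify the exponents via two elementary inequalities valid on $(0,1)$. For the lower tail we need $-\delta-(1-\delta)\ln(1-\delta)\leq -\delta^2/2$. We prove this by expanding $(1-\delta)\ln(1-\delta)=-\delta+\sum_{k\geq2}\frac{\delta^k}{k(k-1)}\geq -\delta+\delta^2/2$. For the upper tail we need $g(\delta)=\delta-(1+\delta)\ln(1+\delta)+\delta^2/3\leq 0$. For this we check that $g(0)=0$ and $g'(\delta)=-\ln(1+\delta)+2\delta/3$. The derivative satisfies $g'(0)=0$, and $g''(\delta)=-\frac{1}{1+\delta}+\frac{2}{3}$ is negative for $\delta<1/2$ and positive afterwards, so $g'$ first decreases and then increases on $(0,1)$. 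Since $g'(1)=-\ln 2+2/3<0$, we get $g'\leq0$ on $[0,1]$, hence $g\leq 0$ there. This convexity/derivative check for the upper tail is the step needing the most care. Substituting both inequalities gives the two stated bounds.
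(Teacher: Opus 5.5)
Your proof is correct. It is the standard moment-generating-function derivation of the multiplicative Chernoff bounds; the paper gives no proof and simply quotes the bound as a known tool, so there is no separate route to compare against. All the elementary steps check out, including $g'(1)=2/3-\ln 2<0$, and your upper-tail argument also covers $\delta=1$, which matters because the paper later applies the bound with $\delta=1$, outside the stated range $(0,1)$.
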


We have the following theorem:
\begin{theorem}\label{thm:PCSC-PDS-hardness}
    For any $\epsilon>0$, $\ProblemPCSC$ cannot be approximated within a factor of $\cO\br{m^{1/6-\epsilon}}$ nor $\cO\br{n^{1/12-\epsilon}}$ condition to \textsc{Planted Dense Subgraph Conjecture}.
\end{theorem}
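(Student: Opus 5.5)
We follow the reduction of \cite{PCMSSC} from the \textsc{Planted Dense Subgraph} distinguishing problem (Conjecture~\ref{problem:PDS}), adapting it to the size objective of $\ProblemPCSC$. Given a graph $G=(V,E)$ on $N$ vertices, we build a two-level precedence instance. For every vertex $v\in V$ there is a set $S_v$, and for every edge $e=\brc{u,v}\in E$ there is a set $S_e$ with $S_u\preceq S_e$ and $S_v\preceq S_e$. Only edge sets cover items: $S_e$ covers a private block of $q$ items, for a padding parameter $q$. Vertex sets cover nothing. Thus a precedence-closed family covering $x$ edge-blocks costs the number of edges plus the number of vertices they span. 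We set $f$ so that exactly $K$ edge blocks must be covered, with $K$ of order the number of edges inside a planted set of size $k$. Since $\ProblemPCSC$ allows arbitrary $f$, this gives the fractional version; to pass to $f=1$, as needed later for Theorem~\ref{thm:PCSC-to-PCWCDT-hardness}, we would instead add a hard-to-use chain gadget as in Theorem~\ref{thm:PCMSSC-outforest-hardness}, or keep the fractional version if that suffices.

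In the \emph{planted} case, taking the $k$ planted vertices and $K\approx k^{1+\beta}/2$ planted edges gives a feasible cover of size at most $k+K$. By the Chernoff bound (Theorem~\ref{thm:chernoff}), the planted edge count concentrates, so this holds with high probability. In the \emph{random} case, we show that any vertex set $W$ spanning at least $K$ edges must be large. For every size $s$, a union bound over all $\binom{N}{s}$ subsets together with Chernoff gives, with high probability, $e(W)\le \max\brc{s^2N^{\alpha-1},\ s\log N}$ up to constants. Hence covering $K$ blocks forces $s\gtrsim \min\brc{\sqrt{K N^{1-\alpha}},\ K/\log N}$, so the cost is at least $s+K$.

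We then choose parameters as in \cite{PCMSSC}: $k=\sqrt N$, with $\alpha,\beta$ close to $1/2$ and $0$ respectively, or whatever optimizes the ratio. The ratio of the two costs then becomes $N^{c}$ for some $c$. Relating $N$ to $m=N+\spr{E}\approx N^{1+\alpha}$ and to $n=qK$ yields the exponents $1/6$ in $m$ and $1/12$ in $n$; these must match those for $\ProblemPCMSSC$ in Theorem~\ref{thm:PCMSSC-PDS-hardness}. An approximation algorithm beating these factors would then distinguish the two distributions by thresholding its output cost, contradicting the conjecture.

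The main obstacle is the cost accounting. Vertex costs must dominate edge costs; otherwise $s+K\approx K$ in both cases and no gap appears. To fix this, I would first try replacing each vertex set by a chain of $w$ sets, with $w$ large, so that vertex cost becomes $w\cdot s$, and then redo the optimization of $w,\alpha,\beta,k$ against $m$, which now includes $wN$. Verifying that the exponents still come out to $1/6$ and $1/12$ after this rebalancing is the technical core. It is also where the union-bound density estimate in the random case must hold simultaneously for all subset sizes.
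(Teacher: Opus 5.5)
\textbf{Verdict: genuine gap.} You leave the technical core (the weighting and the exponent computation) unverified, and the concrete choices you do make would not give the bound the paper proves.

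\textbf{Parameters and item count.} The suggestion ``$\beta$ close to $0$'' removes the gap. With $\alpha=1/2$ and $k=\sqrt N$, the planted density $k^{\beta-1}=N^{(\beta-1)/2}$ tends to the background density $N^{-1/2}$. You need $\beta\to\alpha^-$; the paper takes $\beta=1/2-\gamma$.

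Two of your ingredients are sound. Replacing each vertex by $w$ sets is essentially the paper's device: $\lambda=N^{1/4}$ representatives $V_{v,i}$, all preceding $E_{uv}$. Your union bound $e(W)=\cO(\max\brc{s^2N^{\alpha-1},s\log N})$ is the right random-case tool.

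The real problem is the item count. With a private block on every edge set, the universe has $n=q\spr{E}$ items, not $qK$. Run your construction at the paper's parameters: $\alpha=1/2$, $\beta=1/2-\gamma$, $k=\sqrt N$, $w=N^{1/4}$, $K=\Theta(N^{3/4-\gamma/2})$, $q=1$. The planted cost is $\cO(N^{3/4})$ and the random cost is $\Omega(N^{1/4}\cdot N^{5/8-\gamma/4})$, a gap of about $N^{1/8}$. But now $m$ and $n$ are both $\Theta(N^{3/2})$, so you only get $m^{1/12}$ and $n^{1/12}$. Other values of $\alpha$ shift this ($\alpha=1/3$ gives exponent $1/8$ for both). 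Since $n\ge\spr{E}\approx N^{1+\alpha}$, the exponent in $n$ stays at most $1/8$ and never reaches $1/6$.

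\textbf{Missing idea.} The paper works with $f=1$ and only $n=\Theta(N^{3/4})$ items because items are \emph{shared} among edge sets. Each vertex receives a random $\cU_v\subseteq[n]$ of density $p=32N^{-3/8+\gamma/4}\log N$, and $E_{uv}=\cU_u\cap\cU_v$. The value of $p$ is tuned so that:
\begin{itemize}
\item in the planted case, each item lies in $\Omega(\log^2N)$ planted edge sets, so the planted component alone covers all of $\cU$ w.h.p.;
\item in the random case, any $x\le N^{(5-2\gamma)/8}$ vertices span $\cO(x^2/\sqrt N)$ edges, each covering $\cO(np^2)$ items, so only $o(n)$ items in total.
\end{itemize}

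\textbf{The passage to $f=1$.} Since $f$ is part of the input, a fractional instance is admissible for the statement itself. But your proposed passage to $f=1$ via the chain gadget of Theorem~\ref{thm:PCMSSC-outforest-hardness} cannot work. That gadget is itself a device for $f<1$: its items are meant to stay uncovered. With $f=1$ and private blocks, every edge set must be taken, so both cases have essentially the same optimum. This matters because Corollary~\ref{thm:PCWCDT-PDS-hardness} goes through Theorem~\ref{thm:PCSC-to-PCWCDT-hardness}, which requires $f=1$.

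\textbf{Target exponents.} The exponents you aim for ($1/6$ in $m$, $1/12$ in $n$) copy the statement, where $m$ and $n$ appear to be swapped. The paper's proof, consistent with the abstract, the introduction and Table~\ref{tab:hardness}, gives a gap $N^{1/8}$ with $n=\Theta(N^{3/4})$ and $m=\Theta(N^{3/2})$, i.e.\ $n^{1/6}$ and $m^{1/12}$. An $m^{1/6}$ bound is out of reach for your construction for the same reason as above: $m\ge\spr{E}$ caps the exponent at $1/8$.
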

\begin{proof}
    \par\noindent\textbf{Construction of the reduction.} We start by choosing appropriate parameters for our reduction. Let $k=\sqrt{N}$, $\alpha=1/2$ and $\beta=1/2-\gamma$ for some $\gamma > 0$ to be determined later. In our construction we will embed the structure of the input graph into the precedence constraints, while the universe elements will merely enforce the covering requirement.
    
    Given a graph $G=([N], \mathcal{E})$ as input to the \textsc{Planted Dense Subgraph Conjecture}, we construct an instance of $\ProblemPCSC$ as follows. For each vertex $v \in [N]$ of $G$, we create $\lambda$ representative sets $V_{v, i}$ for $i \in [\lambda]$, where $\lambda$ is an appropriately chosen natural number to be specified later. For each edge $uv\in\mathcal{E}$, we create one representative set $E_{uv}$ which must be preceded by all representatives of both $u$ and $v$. That is, for all $i \in [\lambda]$, we have $V_{u, i} \preceq E_{uv}$ and $V_{v, i} \preceq E_{uv}$.

    Next, we define the universe of elements to be covered. Let $\cU=\{0,1,\dots, n\}$ for some appropriately chosen natural number $n$ to be determined later. Every vertex representative set contains only the special element $0$, that is, $V_{v, i}=\{0\}$ for all $v \in [N]$ and $i \in [\lambda]$.
    
    To define the contents of edge representative sets, we use a randomized construction based on a parameter $p\in(0,1)$. For each vertex $v \in [N]$, we define an auxiliary set $\cU_v \subseteq [n]$ by including each element $j \in [n]$ in $\cU_v$ independently with probability $p$. That is, for each $j \in [n]$ and each $v \in [N]$, we have $\mathbb{P}[j \in\cU_v] = p$. The edge representative set $E_{uv}$ is then defined to cover the intersection of the auxiliary sets, that is, $E_{uv} = \cU_u \cap \cU_v$.
    
    From this construction, we can compute the following expected values:
    \begin{align*}
        \mathbb{E}[\spr{\cU_v}] &= p\cdot n &&\text{for each vertex } v,\\
        \mathbb{E}[\spr{\{v\colon j\in\cU_v\}}] &= p\cdot N &&\text{for each element } j \in [n],\\
        \mathbb{E}[\spr{E_{uv}}] &= \mathbb{E}[\spr{\cU_u \cap \cU_v}] = n\cdot p^2 &&\text{for each edge } uv.
    \end{align*}

    The intuition behind this construction is as follows: we choose $p$ to be the smallest value such that if a planted component exists, then with high probability it covers all of $\cU$ using only edge representatives from the planted component (and their required predecessors). On the other hand, if no planted component exists, then we show that any solution must use significantly more sets to cover $\cU$, thus establishing the desired inapproximability ratio. 
    
    \par\noindent\textbf{Choice of parameters.}
    Let $\mathcal{P}$ denote the planted component if it exists. Since $\mathcal{P}$ consists of $k = \sqrt{N}$ vertices, for any element $j \in [n]$, the expected number of vertices in $\mathcal{P}$ whose auxiliary sets contain $j$ is 
    \[
    \mathbb{E}[\spr{\{v\colon j\in\cU_v, v \in V(\mathcal{P})\}}] = p\cdot\sqrt{N}.
    \] 
    Therefore, the expected number of unordered pairs of vertices in $\mathcal{P}$ where both auxiliary sets contain $j$ is:
    \[
    \mathbb{E}\left[\left|\left\{(v, u)\colon j\in\cU_v, j\in\cU_u, v,u \in V(\mathcal{P}), v < u\right\}\right|\right] = \binom{p\cdot\sqrt{N}}{2}.
    \]
    
    Since each edge in $\mathcal{P}$ exists independently with probability 
    \[
    k^{\beta - 1} = (\sqrt{N})^{-1/2 - \gamma}=N^{-1/4-\gamma/2},
    \]
    the expected number of edge representative sets from $\mathcal{P}$ that cover element $j$ is:
    \[
    \mathbb{E}[\spr{\{E_{uv}\colon j\in E_{uv}, uv \in E(\mathcal{P})\}}] = \binom{p\cdot\sqrt{N}}{2} \cdot N^{-1/4-\gamma/2}\geq \frac{p^2\cdot N}{4} \cdot N^{-1/4-\gamma/2}=\frac{p^2\cdot N^{3/4-\gamma/2}}{4}.
    \] 
    For this expected value to be sufficiently large (specifically, to be $\Theta(\log^2 N)$), we choose:
    $
    p=32\cdot N^{-3/8+\gamma/4}\cdot \log N
    $. 
	We have 
    \[
    \mathbb{E}[\spr{\{E_{uv}\colon j\in E_{uv}, uv \in E(\mathcal{P})\}}] \geq 64\cdot\log^2 N,
    \]
    which will allow us to apply concentration bounds effectively.

    \par\noindent\textbf{Case 1: Planted component exists.} Assume that the input graph contains a planted component $\mathcal{P}$. We show that there exists a solution to $\ProblemPCSC$ using only sets corresponding to vertices and edges in $\mathcal{P}$ that covers all elements in $\cU$ with high probability, and that this solution has cost $O(\lambda\cdot\sqrt{N}+N^{3/4-\gamma/2})$.
    Consider the solution that selects all vertex representatives $V_{v,i}$ for all $v \in V(\mathcal{P})$ and all $i \in [\lambda]$, together with all edge representatives $E_{uv}$ for all edges $uv \in E(\mathcal{P})$. The special element $0$ is covered by any vertex representative, so we focus on covering elements in $[n]$.
    
    Fix an arbitrary element $j \in [n]$. We first bound the number of vertices in $\mathcal{P}$ whose auxiliary sets contain $j$. Since 
    \[
    \mathbb{E}[\spr{\{v\colon j\in\cU_v, v \in V(\mathcal{P})\}}] = p\cdot\sqrt{N},
    \] 
    by Chernoff's bound (Theorem \ref{thm:chernoff}) with $\delta = 1/2$, we have:
    \[
    \mathbb{P}\left[\left|\{v\colon j\in\cU_v, v \in V(\mathcal{P})\}\right| \leq \frac{p\cdot\sqrt{N}}{2}\right] \leq  e^{-p\cdot\sqrt{N}/8} = e^{-4\cdot N^{1/8+\gamma/4}\cdot \log N} = N^{-4\cdot N^{1/8+\gamma/4}},
    \]
    which is exponentially small. 
    
    Next, for large enough $x$, we have $\binom{x}{2} \geq x^2/4$. Therefore, the expected number of unordered pairs of vertices in $\mathcal{P}$ where both auxiliary sets contain $j$ is at least $p^2\cdot N/4$. Again by Chernoff's bound, we have:
    \[
    \mathbb{P}\left[\left|\{(u,v)\colon j\in\cU_u, j\in\cU_v, u,v \in V(\mathcal{P}), u < v\}\right| \leq \frac{p^2\cdot N}{16}\right] \leq N^{-4\cdot N^{1/8+\gamma/4}}.
    \]

    Now, each such pair $(u,v)$ corresponds to an edge in $\mathcal{P}$ with probability $N^{-1/4-\gamma/2}$. Since edge existence is independent of the auxiliary set memberships, the expected number of edge representatives in $\mathcal{P}$ that cover $j$ is:
    \[
    \mu := \mathbb{E}[\spr{\{E_{uv}\colon j\in E_{uv}, uv\in E(\mathcal{P})\}}] \geq \frac{p^2\cdot N}{16} \cdot N^{-1/4-\gamma/2}=64\cdot\log^2 N.
    \]

    Applying Chernoff's bound once more, we obtain:
    \[
    \mathbb{P}[\spr{\{E_{uv}\colon j\in E_{uv}, uv\in E(\mathcal{P})\}} \leq \mu/2] \leq e^{-\mu/8} = e^{-8\log^2 N} = N^{-8\log N}.
    \]

    By the union bound over all $n$ elements in $[n]$, the probability that some element is not covered by the edge representatives in $\mathcal{P}$ is at most $n \cdot N^{-8\log N}$. We will choose $n = \Theta(N^{3/4})$ later, which makes this probability negligible for large $N$.
    
    Finally, we bound the number of sets in this solution. The number of vertex representative sets is exactly $\lambda\cdot\sqrt{N}$. For the edge representatives, observe that:
    \[
    \mathbb{E}[\spr{E(\mathcal{P})}]=\binom{\sqrt{N}}{2}\cdot N^{-1/4-\gamma/2}\leq \frac{N}{2} \cdot N^{-1/4-\gamma/2} = \frac{N^{3/4-\gamma/2}}{2}.
    \]
    
    By Chernoff's bound with $\delta = 1$, we have:
    \[
    \mathbb{P}[\spr{E(\mathcal{P})} \geq N^{3/4-\gamma/2}] \leq e^{-N^{3/4-\gamma/2}/6},
    \]
    which is extremely small. Therefore, with high probability, the total number of sets in the solution is at most:
    \[
    \lambda\cdot\sqrt{N}+N^{3/4-\gamma/2}.
    \]
    
    \par\noindent\textbf{Case 2: No planted component.} Now assume that the input graph does not contain a planted component, so it is drawn entirely from $G(N, N^{-1/2})$. We show that any solution to $\ProblemPCSC$ must use significantly more sets to cover all elements in $\cU$ with high probability.
    
    Consider any solution to $\ProblemPCSC$ that selects representatives corresponding to some set $X \subseteq [N]$ of vertices with $|X| = x$. Due to the precedence constraints, this solution must include all $\lambda \cdot x$ vertex representatives $V_{v,i}$ for $v \in X$ and $i \in [\lambda]$. After selecting these vertex representatives, the solution can then select edge representatives $E_{uv}$ for edges $uv$ where both $u, v \in X$.
    
    We first bound the number of edges among the vertices in $X$. Since the graph is drawn from $G(N, N^{-1/2})$, each potential edge exists independently with probability $N^{-1/2}$. For any fixed set $X$ of size $x$, the expected number of edges with both endpoints in $X$ is:
    \[
    \mu_E := \binom{x}{2}\cdot N^{-1/2} \leq \frac{x^2}{2\cdot\sqrt{N}}.
    \]
    
    There are at most $\binom{N}{x} \leq N^x$ possible sets $X$ of size $x$. By Chernoff's bound with $\delta = 1/2$, the probability that a specific set $X$ contains more than $3\cdot\mu_E/2$ edges is at most $e^{-\mu_E/12}$. By the union bound over all possible sets of size $x$, the probability that some set of size $x$ has more than $3\cdot\mu_E/2$ edges is at most:
    \[
    N^x \cdot e^{-\mu_E/12} = N^x \cdot \exp\left(-\frac{x^2}{24\cdot\sqrt{N}}\right).
    \]
    
    For this probability to be at most $N^{-8}$, we would need $\mu_E \geq 12\cdot(x+8)\cdot\log N$, which translates to $x \geq 500\cdot\sqrt{N}\cdot\log N$. However, we require $x \leq N^{(5-2\gamma)/8}$. For such values of $x$ and sufficiently large $N$ (note that $N^{(5-2\gamma)/8} < N^{5/8} < \sqrt{N}$ for large $N$), we can safely assume that with overwhelming probability, any set $X$ of size $x \leq N^{(5-2\gamma)/8}$ contains at most $x^2/\sqrt{N}$ edges.
    
    Next, we bound how many elements can be covered by these edge representatives. By construction, each edge representative $E_{uv}$ covers the elements in $(\cU_u \cap \cU_v)$. The size of $\cU_u \cap \cU_v$ has expectation $\mathbb{E}[\spr{\cU_u \cap \cU_v}]=np^2$. By Chernoff's bound, the probability that a single edge representative covers more than $2np^2$ elements from $[n]$ is at most $e^{-np^2/12}$. 
    
    The total number of edge representatives available after selecting vertices in $X$ is at most $x^2/\sqrt{N}$. By the union bound, the probability that at least one of these edge representatives covers more than $2n\cdot p^2$ elements from $[n]$ is at most:
    \[
    \frac{x^2}{\sqrt{N}} \cdot e^{-n\cdot p^2/12}.
    \]
    
    Recall that 
    \[
    p^2 = (32)^2 \cdot N^{-3/4+\gamma/2}\cdot \log^2 N = 1024 \cdot N^{-3/4+\gamma/2}\cdot \log^2 N.
    \] 
    For the above probability to be negligible, we need $n \geq 200\cdot\log N/p^2$, which gives:
    \[
    n \geq \frac{200\cdot \log N}{1024 \cdot N^{-3/4+\gamma/2}\cdot \log^2 N} = \cO\left(\frac{N^{3/4-\gamma/2}}{\log N}\right).
    \]
    This condition is satisfied by choosing $n = \Theta(N^{3/4})$.
    
    Now, suppose $x \leq N^{(5-2\gamma)/8}$. Then, with high probability, after selecting $\lambda \cdot x$ vertex representatives and at most $x^2/\sqrt{N}$ edge representatives, the total number of elements from $[n]$ that are covered is at most:
    \begin{align*}
    2np^2 \cdot \frac{x^2}{\sqrt{N}} 
	&= 
    2048 \cdot n \cdot N^{-3/4+\gamma/2}\cdot \log^2 N \cdot \frac{x^2}{\sqrt{N}} = 2048 \cdot n \cdot N^{-5/4+\gamma/2}\cdot \log^2 N \cdot x^2
	\\&\leq 2048 \cdot n \cdot N^{-5/4+\gamma/2}\cdot \log^2 N \cdot N^{(5-2\gamma)/4} = 2048 \cdot n \cdot \log^2 N.
    \end{align*}
    
    Since $n = \Theta(N^{3/4})$, we have $2048 \cdot n \cdot \log^2 N = o(n)$ for sufficiently large $N$. This means that when $x \leq N^{(5-2\gamma)/8}$, at most $o(n)$ elements from $[n]$ are covered, leaving $\Omega(n)$ elements uncovered.
    
    To cover all elements in $\cU$, any solution must eventually select enough sets to cover these remaining $\Omega(n)$ elements. However, due to the precedence constraints, to reach edge representatives that can cover many new elements, the solution must first select many vertex representatives. Specifically, to significantly increase coverage beyond what $x \leq N^{(5-2\gamma)/8}$ vertices provide, the solution must select at least $\lambda\cdot N^{(5-2\gamma)/8}$ vertex representatives plus at least $N^{(5-2\gamma)/4}$ edge representatives. Therefore, with high probability, the cost of any solution that covers all elements in $\cU$ is at least:
    \[
    \lambda\cdot N^{(5-2\gamma)/8} + N^{(5-2\gamma)/4}.
    \]
    
    \par\noindent\textbf{Comparison and conclusion.}
    We now compare the costs in the two cases to establish the inapproximability ratio. By choosing $\lambda = N^{1/4}$ (or any $\lambda \geq N^{1/4}$) and letting $\gamma$ approach $0$ we have that the ratio between the costs in Case 2 and Case 1 is at least:
    \[
    \frac{\lambda\cdot N^{(5-2\gamma)/8} + N^{(5-2\gamma)/4}}{\lambda\cdot\sqrt{N} + N^{3/4-\gamma/2}} = \Omega\br{
    \frac{N^{5/8}}{N^{1/2}}} = \Omega\br{N^{1/8}}.
    \]

    Since we chose $n = \Theta(N^{3/4})$, we have that the gap is $\Theta(N^{1/8}) = \Theta((N^{3/4})^{1/6}) = \Theta(n^{1/6})$. By choosing $\gamma$ sufficiently small, we can make this gap arbitrarily close to $n^{1/6}$.
    Furthermore, the total number of sets $m$ in the constructed instance is $m = \lambda\cdot N + |\mathcal{E}|$. For a graph drawn from $G(N, N^{-1/2})$, the number of edges is highly concentrated around $\binom{N}{2} \cdot N^{-1/2} = \Theta(N^{3/2})$. With $\lambda = N^{1/4}$, we have $m = \Theta(N^{3/2})$. Therefore, $n = \Theta(N^{3/4}) = \Theta(m^{1/2})$, and the gap can also be expressed as:
    $
    \Theta(N^{1/8}) = \Theta((N^{3/2})^{1/12}) = \Theta(m^{1/12}).
    $
    
    Thus, any polynomial-time algorithm for $\ProblemPCSC$ with approximation factor better than $\cO(n^{1/6-\epsilon})$ or $\cO(m^{1/12-\epsilon})$ for any $\epsilon > 0$ could distinguish between the two cases of the Planted Dense Subgraph problem with significant advantage. This completes the proof.
\end{proof}
Therefore, we immediately get that:
\begin{corollary}\label{thm:PCWCDT-PDS-hardness}
    For any $\epsilon>0$, $\ProblemPCWCDT$ cannot be approximated within a factor of $\cO\br{m^{1/6-\epsilon}}$ nor $\cO\br{n^{1/12-\epsilon}}$ condition to \textsc{Planted Dense Subgraph Conjecture}.
\end{corollary}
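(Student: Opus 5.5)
The plan is to obtain the corollary directly from Theorem~\ref{thm:PCSC-PDS-hardness} through the approximation-preserving reduction of Theorem~\ref{thm:PCSC-to-PCWCDT-hardness}, keeping the exponents exactly as the corollary states them: $m^{1/6-\epsilon}$ and $n^{1/12-\epsilon}$. Theorem~\ref{thm:PCSC-PDS-hardness} is stated with these same exponents for $\ProblemPCSC$, and I take that statement as given. The argument then has two steps.

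First, I note that the hard instances produced in Theorem~\ref{thm:PCSC-PDS-hardness} have $f=1$: the requirement there is to cover all of $\cU$. So these instances are valid inputs to the reduction of Theorem~\ref{thm:PCSC-to-PCWCDT-hardness}. Given a $\ProblemPCSC$ instance $(\cU,\cS,\preceq)$, that reduction builds a $\ProblemPCWCDT$ instance as follows:
\begin{itemize}
\item it creates $2|\cU|$ hypotheses $h_u^0,h_u^1$;
\item it creates one test $t_S$ per set $S\in\cS$;
\item it copies the precedence order from $\cS$ to the tests.
\end{itemize}
Every precedence-closed decision tree is then a single path, and its worst-case cost equals the size of the corresponding precedence-closed full cover, and conversely. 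Hence $\OPT$ is preserved exactly, and an $\alpha$-approximation for $\ProblemPCWCDT$ yields an $\alpha$-approximation for $\ProblemPCSC$ with $f=1$.

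Second, I track the parameters. The number of tests is $m'=|\cS|=m$ and the number of hypotheses is $n'=2n$. A polynomial-time $\cO(m'^{1/6-\epsilon})$-approximation for $\ProblemPCWCDT$ would therefore give an $\cO(m^{1/6-\epsilon})$-approximation for $\ProblemPCSC$. Likewise, an $\cO(n'^{1/12-\epsilon})$-approximation would give an $\cO((2n)^{1/12-\epsilon})=\cO(n^{1/12-\epsilon})$-approximation, since the constant $2^{1/12}$ is absorbed. Either one contradicts Theorem~\ref{thm:PCSC-PDS-hardness} under the PDS conjecture.

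The only step that needs care is checking that the reduction is genuinely approximation-preserving on the instances from Theorem~\ref{thm:PCSC-PDS-hardness}. In particular, the reply structure must force the path shape, and the doubling of hypotheses must force a test even when only one element's pair remains, so that tree height equals cover size with no additive slack. The reduction is also polynomial, so the distinguishing advantage in the PDS argument carries over unchanged.
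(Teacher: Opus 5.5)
Your proposal is correct and follows the paper's route exactly: the corollary comes from pushing the $f=1$ hard instances of Theorem~\ref{thm:PCSC-PDS-hardness} through the doubled-hypothesis reduction of Theorem~\ref{thm:PCSC-to-PCWCDT-hardness}, which preserves $\OPT$ and $m$ and only doubles $n$. One caveat you inherit from the paper is that the exponents in the stated theorem and corollary appear swapped: the construction behind Theorem~\ref{thm:PCSC-PDS-hardness} actually gives a gap of $\Theta(N^{1/8})=\Theta(n^{1/6})=\Theta(m^{1/12})$, as in the abstract and Table~\ref{tab:hardness}, though your transfer argument works verbatim with the corrected exponents.
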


\section{Conclusions and open problems}

We have provided a series of approximation algorithms directed towards solving decision tree and set covering problems subject to precedence constraints. Our main contribution is a series of reductions which enables us to systematically relate the two types of problems independently of the specific structure of the precedence constraints. As an immediate corollary, we have obtained both general $\tilde{\cO}\br{\sqrt{m}}$-approximation algorithms for both types of problems as well as case-specific, tailored polylogarithmic approximations for both inforest and outforest precedence constraints. These two classes are natural to study, since trees are among most prevalent structures encountered in practice.

Prior to this work, almost no non-trivial approximation algorithms were known for either of these problems, even for the simplest of our variants. Among possible explanations for this state of affairs is the fact that achieving our results demanded the development of an elaborate and systematic reduction framework and usage of non-greedy procedures. This is particularly visible for the average-case criterion where the algorithm for the $\ProblemPCMSSC$ required usage of an elaborate procedure, which used a greedy algorithm as a subroutine, however itself was not greedy. This provokes two types of questions: Firstly, is it possible to show that an algorithm, which always picks a test (with its closure) maximizing some kind of information gain also achieves a good approximation ratio for either $\ProblemPCWCDT$ or $\ProblemPCACDT$? Secondly, is it possible to simplify the procedure for $\ProblemPCACDT$ along with its analysis while still maintaining quality approximation ratio?

We have complemented our algorithmic results with a series of matching lower bounds, providing inapproximability results for most of our variants including both outforest and general precedence constraints. Of particular surprise is to us the fact that approximating the decision tree problem generalizes the task of planted dense subgraph detection in a random graph. This result, although quite unexpected, extends the lengthy line of work regarding connection between average-case complexity and inapproximability of problems (to read more about such connections see \cite{Feige2002RelationsBetweenAverageCaseComplexityAndApproximationComplexity}). Whether or not polynomial factor inapproximability results based on P vs NP can be established for our problem remains an open question.

Hereby, we have studied average-case decision tree construction under the assumption that the probability distribution over all possible hypothesis is uniform. In particular, extending our results to non-uniform distributions would require new techniques. It seems to us that that our analysis relies heavily on the uniformity assumption. However, it is possible that some of the techniques we have developed can be adapted to handle non-uniform distributions as well. For example our algorithms for covering problems should be easily adaptable to include non-uniform weights on the elements of the universe, which might be useful for encoding non-uniform distributions over the hypotheses. However, it is not clear to us whether we can adapt the reduction from $\ProblemPCACDT$ to $\ProblemPCMSSC$ to work for that setup as well. We additionally remark, that addition of probability distribution would allow us to establish an inapproximability result for $\ProblemPCACDT$ with outforest precedence constraints. To see this, one can observe, that we can use a reduction similar to the one in Theorem~\ref{thm:PCMSSC-outforest-hardness}. The only important difference is that in order for the reduction to work, we give a small but non-zero probability to all elements in $\cX$ and instead of adding a large number of dummy elements to force the algorithm to optimize for the length of the covering sequence, we can add one element with a very high probability. This way, the algorithm will be forced to optimize for the average case, which will be dominated by the cost of identifying the high probability element. By replacing sets with accordingly created tests (see the reduction in Theorem~\ref{thm:PCMSSC-to-PCACDT-hardness}), we get that the $\ProblemPCACDT$ with outforest precedence constraints and non-uniform probabilities cannot be approximated within $\cO(\log^{2-\epsilon} n)$ factor for any $\epsilon\geq 0$, unless $\text{NP}\subseteq \text{ZTIME}(n^{\text{polylog}(n)})$.

A different kind of further direction is to enrich the \textsc{Optimal Decision Tree} with more types of constraints borrowed from scheduling theory. For example one may consider release dates on the tests or due dates on the elements. In such a setup it is also possible to optimize for different cost functions, such as the maximum lateness or the total tardiness. The same types of generalizations are also applicable to the set covering problems. To the best of our knowledge, no results of this nature are known for any of these problems. 

\bibliographystyle{plain}
\bibliography{bib-pcal}

\end{document}